 \title{Compression with Privacy-Preserving \\Random Access}
 \author{%
 Venkat Chandar\thanks{V.\ Chandar is with DE Shaw, New York, USA. A.\ Tchamkerten is with the Institut Polytechnique de Paris, Palaiseau, France. S.\ Vatedka is with the Indian Institute of Technology Hyderabad, Sangareddy, Telangana, India.},
 Aslan Tchamkerten, Shashank Vatedka
    \thanks{This work was presented in part at the 2024 IEEE International Symposium on Information Theory, Athens, Greece.}\thanks{The work of Shashank Vatedka was supported by a Core Research Grant CRG/2022/004464 from SERB, India.}
 }
\newcommand{\svcomment}[1]{\textcolor{red}{SV: #1}}
\newcommand{\atcomment}[1]{\textcolor{red}{AT: #1}}
\newcommand{\openone}{\leavevmode\hbox{\small1\normalsize\kern-.33em1}}
\newcommand{\ie}{{\textit{i.e.}}}
\newcommand{\boo}[1]{{\mathbbm{#1}}}
\newcommand{\bo}[1]{{\bm{#1}}}
\begin{document}
\maketitle

\begin{abstract}
We show that an i.i.d.\ binary source sequence $X_1,\ldots,X_n$ can be losslessly compressed at any rate above entropy while ensuring that the decoding of any $X_i$ reveals no information about the remaining symbols $\{X_j : j \neq i\}$. 

This problem reduces to a marginal consistency problem induced by the simultaneous privacy and reliability constraints. To address it, we develop a technique based on a geometric representation of codeword distributions, which may be of independent interest.

\end{abstract}

\section{Introduction}\label{sec:introduction}

Several recent works have studied compression with locality properties~\cite{makhdoumi_locally_2015,mazumdar_local_2015,chandar2009locally,tatwawadi_universal_2018,pananjady_effect_2018,vatedka2020loga,vestergaard2021enabling,vatedka_local_2020,kamparaju2022low,vatedka2023local,yuan2024local}, focusing on encoding and decoding primitives, and on algorithms and information-theoretic limits that characterize the minimum number of bits that must be accessed or updated to locally decode or modify a source symbol.

In~\cite{chandar2023data}, we introduced the problem of compression with private local decoding (also referred to as \emph{privacy-preserving random access}): can a binary source sequence  
\begin{equation}
    \bo{X} = (X_1, \ldots, X_n)
\end{equation}
be losslessly compressed into a binary codeword \( C_1, C_2, \ldots, C_{nR} \) such that the local decoding of any \( X_i \) depends only on a subset of codeword symbols \( \bo{C}_{\mathcal{I}_i} \defeq \{ C_j : j \in \mathcal{I}_i \} \), for some \( \mathcal{I}_i \subseteq [nR] \), and reveals no information about the remaining source symbols \( \{ X_j : j \neq i \} \)?

To ensure reliability, the codeword bits \( \bo{C}_{\mathcal{I}_i} \) accessed by local decoder \( i \) must enable a reliable distinction between the hypotheses \( X_i = 0 \) and \( X_i = 1 \).  
To ensure privacy, these accessed bits must be independent of all other source symbols \( \{ X_j : j \neq i \} \).  
The challenge stems from compression; some subsets \( \bo{C}_{\mathcal{I}_i} \) must overlap which potentially compromises local privacy.

Somewhat counterintuitively, we showed that for an i.i.d.\ $\mathrm{Bernoulli}(p)$ source, such a scheme exists and achieves a nontrivial compression rate  
\begin{equation}
R = c \, p \, \log^2\!\left(\tfrac{1}{p}\right),
\end{equation}
for a universal constant \( c > 0 \) and sufficiently small \( p \) (all logarithms are base two).  
However, it remained unclear whether the gap to the entropy  
\begin{equation}
H(p) \defeq -p \log p - (1-p) \log (1-p)
\end{equation}
is intrinsic to the privacy constraint.  
In particular, the scheme exhibits a multiplicative gap of order \( \log (1/p) \) in the small-\( p \) regime.

In this paper, we show that for any i.i.d. $\mathrm{Bernoulli}$ source, private local decoding does not limit compression; there exist schemes that achieve private local decoding at any rate above entropy. The present paper provides the full version of the conference paper~\cite{chandar2024entropy} which only sketched parts of the proofs of some of the results.   

Most recently, \cite{sid2025simple} proposed a conceptually simple, low-complexity scheme that achieves entropy with private local decoding for any i.i.d.\ source, not necessarily binary. We briefly review this scheme and then highlight the differences with the one proposed in this paper.

Given a source sequence $\bo{X}$, we first draw a uniformly random permutation 
$\sigma$ over $[n]$ and form the permuted sequence $\tilde{\bo{X}}$ defined by
$\tilde{X}_{\sigma(i)} = X_i$ for every $i \in [n]$. 
The encoder then outputs
\[
\big(\text{comp}(\tilde{\bo{X}}), \sigma(1),\sigma(2),\ldots,\sigma(n)\big),
\]
where $\text{comp}(\cdot)$ denotes any lossless, entropy-achieving compression
map.

To recover $X_i$, the $i$th local decoder is given
\[
\bo{C}_{\mathcal{I}_i} \defeq 
\big(\text{comp}(\tilde{\bo{X}}), \sigma(i)\big).
\]
It first reconstructs $\tilde{\bo{X}}$ from $\text{comp}(\tilde{\bo{X}})$, then
declares $\tilde{X}_{\sigma(i)}$. 

As stated, the scheme does not fully guarantee privacy since 
$\tilde{\bo{X}}$ reveals the empirical type of $\bo{X}$. This can be remedied by 
appending to $\bo{X}$ a fixed-length suffix of negligible size so as to obtain an extended 
sequence $\bar{\bo{X}}$ of constant composition prior to permutation; the same permute-and-compress scheme is then applied to $\bar{\bo{X}}$. 

A second issue is that the description of $\sigma(1),\ldots,\sigma(n)$ requires on the 
order of $n\log n$ bits, which prevents compression. To overcome this, we amortize the cost of $\sigma$ over several blocks. We first partition $\bo{X}$ into $n/b$ consecutive blocks of size $b=\log n$, then make each block constant composition, and finally use the same random permutation across all blocks. With this modification, the cost of specifying the permutation becomes negligible relatively to $n$, and the overall rate approaches the source entropy as $n$ grows.

In this construction $\sigma$ carries no information and serves as a privacy key. In this sense, the scheme effectively 
decouples privacy from compression. Finally, the approach is inherently asymptotic, as the block length ($\log n$) must be sufficiently large for 
$\text{comp}(\cdot)$ to operate arbitrarily close to the entropy limit 
(see~\cite{sid2025simple} for details).

The scheme developed in this paper addresses compression and privacy jointly, without relying on ``chunking.'' A central technical component is the handling of overlaps between the subsets $\{ \bo{C}_{\mathcal{I}_i} \}$, which may be of independent interest. The objective is to encode a source sequence $\bfx$ into a random $nR$-bit codeword $\bo{C}$ such that each marginal $\bo{C}_{\cI_i}$ is independent of $\{x_j : j \neq i\}$, while still enabling reliable recovery of $x_i$.

A key step is to establish that these marginal constraints can be satisfied simultaneously by a joint codeword distribution---an instance of the classical \emph{marginal problem}~\cite{frechet1951tableaux}. Standard approaches appear insufficient in this setting. We instead develop a technique based on a geometric representation of codeword distributions—the \emph{block marginal polytope}—and use it to prove the existence of a distribution satisfying the required constraints. This perspective may be useful more broadly in related variants of the marginal problem.

\subsection*{Related Works}

Makhdoumi \emph{et al.}~\cite{makhdoumi_onlocallydecsource} introduced locally decodable source coding in information theory.  Mazumdar \emph{et al.}~\cite{mazumdar2015local} gave a fixed-blocklength, entropy-achieving compression scheme enabling efficient local decoding of individual source bits. For a memoryless source $P$, they constructed a rate-$H(P)+\varepsilon$ compressor in which a single source bit can be recovered by probing $\Theta\!\left(\tfrac{1}{\varepsilon}\log\tfrac{1}{\varepsilon}\right)$ bits of the compressed codeword. They also showed that, for non-dyadic sources, any scheme operating at rate $H(P)+\varepsilon$ must probe at least $\Omega(\log(1/\varepsilon))$ bits to achieve vanishing error probability.

Tatwawadi \emph{et al.}~\cite{tatwawadi18isit_universalRA} extended these results to Markov sources and proposed a universal scheme requiring $\Theta\!\left(\tfrac{1}{\varepsilon^2}\log\tfrac{1}{\varepsilon}\right)$ probes per decoded bit.

All of the above works operate in the bit-probe model under fixed-length block coding. Variable-length source coding under local decodability constraints was studied by Pananjady and Courtade~\cite{pananjady2018effect}, who derived upper and lower bounds on the achievable compression rate for sparse sources.

Simultaneous local decoding and re-encoding was studied by Vatedka and Tchamkerten~\cite{vatedka2019local}, who constructed compression schemes for memoryless sources in which the average number of compressed bits accessed or modified per source symbol remains constant, independent of $n$.

A distinct line of research considers locality in the context of channel coding, where each codeword symbol must be recoverable from a small subset of other codeword symbols; see, for instance,~\cite{yekhanin2012locally,gopalan2012locality,tamo2014family,cadambe2015bounds,mazumdar2014update,tamo2016bounds} and references therein.

\subsection*{Paper organization}
Section~\ref{sec:problemstatement} recalls the notion of compression with private local decoding and states the main result, Theorem~\ref{thm:main}. Section~\ref{sec:scheme} presents the compression scheme. The remainder of the paper is devoted to its analysis. Section~\ref{polyrep} develops a representation of codeword distributions that underpins the subsequent arguments. Sections~\ref{pvethm1}, \ref{lapro1}, and~\ref{lapro3} establish the key intermediate results leading to Theorem~\ref{thm:main}. Section~\ref{concrem} concludes the paper.

\section{Main Result}\label{sec:problemstatement}

Throughout the paper, we consider the compression of an i.i.d. Bernoulli($p$) bit string $$\bo{X}=X_1,\ldots,X_n$$ with parameter $p=\Pr (X_1=1)\in (0,1/2]$. We recall the notion of private local decodability \cite{chandar2023data}:

\begin{definition}[Locally decodable code]
Given an integer $n\geq 1$ and a constant $R>0$, a rate-$R$ \emph{locally decodable code} ${\cal{C}}_n$ consists of:
\begin{itemize}
    \item an encoder
which maps each length-n bit string $\bo{x}$ to a random $nR$-bit codeword\footnote{We write ${nR}$ to mean $\lceil nR \rceil$.}
$$\bo{C}=C_1,\ldots,C_{nR}$$ according to a distribution $P_{\bo{C}|\bo{x}}$;
\item a set of $n$ deterministic decoding functions $\{f_i:1\leq i\leq n\}$ where each function $$f_i:\{0,1\}^{|\cI_i|}\to\{0,1\}$$ takes as input a set of codeword components $$\bo{C}_{\cI_i}\defeq \{C_j:j\in\cI_i\}$$ indexed by some set of indices $$\cI_i\subseteq \{1,2,\ldots,nR\},$$ and outputs an estimate $$\hat{X}_i\defeq  f_i(\bo{C}_{\cI_i}).$$
\end{itemize}
\end{definition} 
\begin{definition}[Error probability and privacy]\label{defrelpri}
The error probability of a rate-$R$ locally decodable code ${\cal{C}}_n$ is defined as
    \[
     P_{e}^{(n)} \defeq \Pr (\hat{\bo{X}}\neq \bo{X}),
    \]
    where $$\hat{\bo{X}}\defeq \hat{X}_1,\ldots,\hat{X}_n$$ denotes the concatenation of the estimates of the $n$ local decoders, and where the probability is with respect to the randomness of the source and the randomness of the encoding, that is $$\Pr (\hat{\bo{X}}\neq \bo{X})=\sum_{\bo{x}}P_{\bo{X}}(\bo{x})\sum_{{\bo{c}}:\hat{\bo{x}}(\bo{c}) \ne \bo{x}} P_{\bo{C}|\bo{X}}(\bo{c}|\bo{x}).$$
 The code is said to be \emph{locally privately decodable} (or simply \emph{private}) if 
\begin{align}\label{privatecond}
P_{\bo{C}_{\mathcal{I}_i}|\bo{X}=\bo{x}} = P_{\bo{C}_{\mathcal{I}_i}|X_i=x_i},
\end{align}
for every \( \bo{x} \in \{0,1\}^n \) and every \( i \in [n] \).

\end{definition}
The privacy constraint ~\eqref{privatecond} requires that $\bo{C}_{\cI_i}$ and $\bo{X}_{-i}$ be independent conditioned on $X_i$, almost surely (since $p\notin \{0,1\}$).\footnote{$\bo{X}_{-i} \defeq \{X_j : j \neq i\}$.} Since the source is i.i.d., this conditional independence implies unconditional independence, \emph{i.e.},
\begin{align}\label{altpri}
    P_{\bo{C}_{\cI_i}|\bo{X}_{-i}} = P_{\bo{C}_{\cI_i}},
\end{align}
which can be viewed as an alternative, weaker form of  privacy.

Our main result, stated next, says that private local decodability can be achieved at any rate above entropy:

 \begin{theorem}\label{thm:main}
  For every $0< p \leq \frac{1}{2}$ and $R > H(p)$, there exists a sequence of rate-$R$
compression schemes $\{{\mathcal{C}}_n\}_{n\geq 1}$ that are all locally privately decodable and that achieve vanishing error probability 
  $P_e^{(n)}$ as $n\to \infty$.
\end{theorem}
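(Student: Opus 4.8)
The plan is to build the codeword in a hierarchical (multi-resolution) fashion rather than via concatenation into blocks. The first move is a reduction: since $R > H(p)$ can be made to hold with room to spare, it suffices to construct, for a small target rate overhead $\varepsilon$, a scheme that compresses to rate $H(p)+\varepsilon$ with the privacy property and vanishing error; letting $\varepsilon \to 0$ slowly with $n$ then gives the theorem for every $R > H(p)$. The core difficulty, as the introduction flags, is that any scheme at rate near $H(p)$ forces the index sets $\mathcal{I}_i$ to overlap, and overlapping accessed bits threaten the marginal independence condition~\eqref{altpri}. So I would first fix the combinatorial skeleton — which codeword bits each local decoder reads — and only then ask whether a joint distribution $P_{\bo{C}\mid\bo{x}}$ exists that makes every marginal $\bo{C}_{\mathcal{I}_i}$ independent of $\bo{X}_{-i}$ while still letting $f_i$ distinguish $X_i=0$ from $X_i=1$ reliably.

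The key steps, in order: (1) Describe the compression scheme of Section~\ref{sec:scheme} concretely — a layered encoding in which low layers capture coarse statistics (e.g.\ approximate locations / counts of the sparse $1$'s) and higher layers refine, with each $\mathcal{I}_i$ a union of a few bits drawn from each relevant layer; show that the total length is $n(H(p)+\varepsilon)$ and that, ignoring privacy, the $f_i$'s recover $\bo{X}$ with vanishing error. (2) Set up the marginal problem: for each $i$, the required marginal $P_{\bo{C}_{\mathcal{I}_i}\mid X_i=x_i}$ is pinned down (it must be the ``privacy-clean'' distribution that depends on $x_i$ alone but still supports reliable decoding of $x_i$), and we must exhibit a single joint law on $\bo{C}$ consistent with all of these. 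Invoke the block-marginal representation from Section~\ref{polyrep}: encode candidate joint distributions as points of the block marginal polytope. (3) Run the perturbative argument: start from the product/uniform distribution (trivially private but useless for reliability), and apply a sequence of local perturbations — each one adjusting the conditional law on the bits of one $\mathcal{I}_i$ toward the reliability-inducing target while staying inside the polytope and not breaking the already-satisfied marginal constraints of overlapping sets. Show the perturbations converge (or terminate) at a distribution satisfying all constraints simultaneously; this is where Lemmas/Propositions of Sections~\ref{pvethm1}, \ref{lapro1}, \ref{lapro3} are assembled. (4) Bound the final error probability via a union bound over the $n$ local decoders, using the per-decoder reliability from step (1) together with the fact that the perturbed joint law keeps each marginal close to the intended one; conclude $P_e^{(n)} \to 0$.

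The main obstacle is step (3): controlling the perturbations so that fixing the marginal for one index set $\mathcal{I}_i$ does not destroy the marginal already arranged for every $\mathcal{I}_j$ that overlaps it. Because the overlaps are dense (rate near entropy forces each codeword bit to be shared by many decoders), a naive one-at-a-time fix would ripple uncontrollably. The geometric idea — staying inside the block marginal polytope, where the overlap constraints are linear and a perturbation can be decomposed into a component that moves the targeted marginal and a component orthogonal to all the others — is what makes this tractable, and proving that such a decomposition exists with a quantitative margin (so that after all $n$ steps every marginal is within $o(1)$ of its target) is the technical heart of the paper. The secondary subtlety is ensuring the reliability targets are themselves achievable as marginals at the prescribed rate, i.e.\ that the ``privacy-clean'' per-bit target distributions carry enough information about $x_i$ — this is a balance between how much each layer reveals and how many layers $\mathcal{I}_i$ is allowed to touch without overshooting rate $H(p)+\varepsilon$.
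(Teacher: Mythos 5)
Your high-level intuition about the marginal problem and the need to stay inside the block-marginal polytope matches the paper, but there are two concrete gaps that would sink the argument as you've sketched it.

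First, the union bound in your step~(4) does not close the loop. The paper's perturbation argument (via $\phi_{A,\bo{x}}$) gives each local decoder an error probability of order $1/(n+1)$, and a union bound over $n$ decoders then yields $P_e^{(n)} = O(1)$, not $o(1)$. The paper resolves this with a two-stage construction you have omitted: the first stage achieves only small Hamming \emph{distortion} ($d_H(\bo{X},\hat{\bo{X}}) \leq \delta n$ with high probability, Propositions~\ref{lemma:NI_prob_error} and~\ref{lemma:exp_decay_pe_step4aa}), and the sparse error vector $\bo{Z} = \bo{X} \oplus \hat{\bo{X}}$ is then re-encoded \emph{without loss} by the private scheme of~\cite{chandar2023data} at rate $O(\delta \log^2(1/\delta))$. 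Without that second stage, vanishing $P_e^{(n)}$ does not follow.

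Second, your combinatorial skeleton and perturbation strategy differ from the paper in ways that matter. The paper does not use a hierarchical/layered encoding: it draws a random bipartite graph with $b = \Theta(\log n)$ uniformly chosen neighbors per decoder and composes each $f_i$ with the syndrome map of a distance-$\geq 4$ code, precisely so that (with high probability) $|\cI_i \cap \cI_j| \leq 2$ and the pairwise overlaps carry no information (Property~\ref{rem:phiideal_consistency}). This is what makes the marginal problem tractable: the perturbations are matched order-by-order (first all bit marginals, then all bit-pair marginals, then all block marginals simultaneously, the last step decoupling because of the $\leq 2$ overlap), rather than index-by-index as you propose. Moreover, your description of "keeping each marginal close to the intended one" is not enough --- privacy in~\eqref{privatecond} is an exact equality, not an approximation. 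The paper sidesteps this by targeting the vector $\phi_{A,\bo{x}} = \frac{n}{n+1}\phi_{I,\bo{x}} + \frac{1}{n+1}\phi_U$, whose $i$-th block \emph{exactly} depends on $x_i$ alone, and then showing $\phi_{A,\bo{x}}$ is generable by writing it as a convex combination $\frac{1}{n+1}(\phi_U - n\eta) + \frac{n}{n+1}(\phi_{I,\bo{x}} + \eta)$ for a single well-chosen $\eta \in \cE_\cG$ (Lemmas~\ref{lemma:eta_small_existence} and~\ref{lemma:eta_good_existence}). That simultaneous decomposition --- one $\eta$ serving both halves --- is the technical move you would need to supply to make step~(3) rigorous.
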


Theorem~\ref{thm:main} also yields private locally decodable compression
schemes for non-binary i.i.d.\ sources. Let $\bo{X}$ be a memoryless source
over a finite alphabet $\cX$, and represent each symbol $X_i$ by a binary
string $B_i(1),\ldots,B_i(k)$ with
\[
k=\lceil \log_2 |\cX| \rceil .
\]
Denote by $\bo{B}(\ell)$ the length-$n$ sequence formed by the $\ell$th bits
of $X_1,\ldots,X_n$. Since the source is i.i.d., each $B_i(\ell)$ is
Bernoulli$(p_\ell)$ for some $0<p_\ell\le 1/2$.

Encoding the sequences $\bo{B}(1),\ldots,\bo{B}(k)$ independently using the
scheme of Theorem~\ref{thm:main} yields private local recovery of each symbol
$X_i$ (equivalently of its bit representation). The resulting compression rate
can approach any value above
\[
\sum_{\ell=1}^k H(p_\ell),
\]
which may be strictly smaller than $\log |\cX|$. For instance, consider a ternary source distribution $P=(P(a),P(b),P(c))$. Under the binary labeling
\[
a\mapsto 00,\qquad b\mapsto 10,\qquad c\mapsto 01,
\]
we have $k=2$, $p_1=P(b)$, and $p_2=P(c)$. Therefore,
\[
\sum_{\ell=1}^2 H(p_\ell)=H(P(c))+H(P(b))<\log 3
\]
whenever
\[
\min\{P(b),P(c)\}<H^{-1}(\log(3/2))\approx 0.138,
\]
since $H(p)\le 1$ for all $p\in[0,1]$.

\section{Compression scheme}\label{sec:scheme}
\begin{figure}
\begin{center}
  \includegraphics[width=0.25\textwidth]{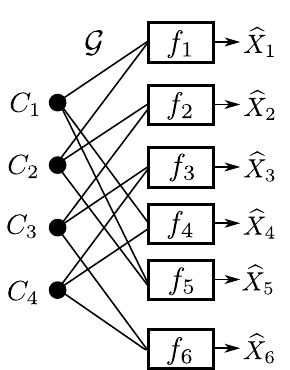}
  \caption{Illustration of the decoder in Section~\ref{sec:scheme} for $n=6$ and $R=2/3$. 
Each local decoder $i$ observes two codeword bits and outputs $\hat{X}_i = f_i(C_{\cI_i})$---for example, $\hat{X}_1 = f_1(C_1,C_2)$.}\label{fig:scheme_new}
\end{center}
\end{figure}
 
 In this section, we describe the coding scheme that establishes Theorem~\ref{thm:main}. 
The encoder maps source sequence \( \bo{X} \) to a random \( nR \)-bit codeword drawn according to a carefully designed conditional distribution \( P_{\bo{C}|\bo{X}} \).

The decoder is defined through a bipartite graph \( \cG \), illustrated in Fig.~\ref{fig:scheme_new}. 
The graph consists of a set \( \cV_L \) of \( nR \) left vertices and a set \( \cV_R \) of \( n \) right vertices. 
The left vertices correspond to the components of the codeword \( \bo{C} \), while each right vertex corresponds to the estimate of a source symbol.
Right vertex \( i \in \cV_R \) is connected to a subset of \( b \) codeword components
\[
\bo{C}_{\mathcal{I}_i} \defeq \{ C_j : j \in \mathcal{I}_i \}, 
\qquad \mathcal{I}_i \subseteq [nR], \; |\mathcal{I}_i| = b,
\]
and is associated with a local decoding function
\[
f_i : \{0,1\}^b \to \{0,1\},
\]
which produces the estimate
\[
\hat{X}_i \defeq f_i(\bo{C}_{\mathcal{I}_i})
\]
for source symbol $X_i$. The global decoding function is obtained by concatenating the local decoders:
\[
f(\bo{C})
\defeq
\big( f_1(\bo{C}_{\mathcal{I}_1}), \ldots, f_n(\bo{C}_{\mathcal{I}_n}) \big)
=
(\hat{X}_1, \ldots, \hat{X}_n)
\defeq
\hat{\bo{X}}.
\]

To prove the existence of a reliable and private code
\[
\mathcal{C}_n = \big(\underbrace{P_{\bo{C}|\bo{X}}}_{\text{encoder}},\underbrace{ \cG, f}_{\text{decoder}} \big),
\]
 we first draw the decoder pair \( (\cG, f) \) at random from a suitable decoder ensemble. 
We then show that, with positive probability, the selected decoder admits an encoding distribution \( P_{\bo{C}|\bo{X}} \) such that the resulting code satisfies the privacy constraint and achieves vanishing error probability \( P_e^{(n)} \).

We next describe the random decoder ensemble.

\subsection{Random decoder ensemble}\label{sct:random}
The decoding graph $\cG$ and the decoding function $f$ are randomly and independently selected as follows. 
For $\cG$, each right vertex is independently connected to a randomly chosen set of $b=\Theta(\log n)$ left vertices (among all $nR$ left vertices) selected uniformly with replacement. 
Further, each $f_i$ is a composition of two functions 
$$f_i \defeq f^{(U)}_i\circ S,$$
where 
$S:\{0,1\}^b\to\{0,1\}^{b'}$ is the syndrome of a Simplex code (dual of a Hamming code), and where each $$f^{(U)}_i:\{0,1\}^{b'}\to\{0,1\}$$ is independently selected by randomly and uniformly sampling a subset of $\{0,1\}^{b'}$ of size $\lceil p 2^{b'}\rceil$ as the preimages of $1$. Accordingly, the parameters $b$ and $b'$ are chosen as
\begin{align}
b = 2^{r}-1\; \text{and}\; b' = 2^r - r - 1\; \text{where}\; r\in \mathbb{N}, r\geq2. \label{implicito}
\end{align}
 In what follows, we parametrize the construction by either $b$ or $n$, depending on the context, and always implicitly assume that \eqref{implicito} holds. We will typically consider the asymptotic $b \to \infty$ with $b = \Theta(\log n)$, which corresponds to $r \to \infty$. In this regime,
\[
\frac{b'}{b} \to 1.
\]

\begin{remark}\label{rem0}
    The reason for $f_i$ to be of the form $f^{(U)}_i\circ S$ (instead of, say, $f^{(U)}_i$) is the following. To achieve compression, the index sets $\mathcal{I}_i$ and $\mathcal{I}_j$ must overlap for some indices $i \neq j$, with $i,j\in [n]$. Consequently, the common bits $\bo{C}_{\mathcal{I}_i \cap \mathcal{I}_j}$ may reveal information about $X_j$ to the $i$-th local decoder (and vice versa), thereby violating privacy. For the decoding graph ensemble defined above, it turns out that $|\cI_i\cap\cI_j|\leq 2$ with high probability (see Lemma~\ref{lemma:smalloverlap}). By defining $f_i$ as $f^{(U)}_i \circ S$ guarantees that 
$f_i(\mathbf{C}_{\mathcal{I}_i})$ cannot be determined from any two coordinates of $\mathbf{C}_{\mathcal{I}_i}$ (see Lemma~\ref{rem:phiideal_consistency}).
     In fact, $S$ could be replaced by the syndrome of the dual of any linear error correcting code with minimum distance $\geq 3$ and with $b'/b\to 1$ as $b\to\infty$.
\end{remark}


To establish Theorem~\ref{thm:main}, we show that, with positive probability, there exists a decoder $(\cG,f)$ that admits an encoding $P_{\bo{C}|\bo{X}}$ that yields a private and reliable code $(P_{\bo{C}|\bo{X}},\cG,f)$.

At the heart of the proof of Theorem~\ref{thm:main}, particularly in constructing a suitable encoding map \( P_{\bo{C}|\bo{X}} \), lies a vector representation of probability mass functions over \( \{0,1\}^{nR} \), which we introduce in the next section.

\section{Block-marginal polytope representation}
\label{polyrep}
\subsection{Definitions and properties}
By Definition~\ref{defrelpri}, the privacy and reliability of $n$ local decoders depend on the marginal distributions on $\{\bo{C}_{\mathcal{I}_i}\}_{i\in [n]}$ induced by the encoder. A block-marginal vector, defined next, represents a set of $n$ such marginals, without reference to an underlying joint distribution.
\begin{definition}[$\cB_\cG$]
Fix a decoding graph $\cG$ with index sets $\cI_1,\ldots,\cI_n \subset [nR]$,
each of size~$b$. The \emph{block-marginal polytope} is
\[
\cB_\cG \defeq 
\Bigl\{
\phi=(\phi^{(1)},\ldots,\phi^{(n)}) \in \mathbb{R}^{n2^b} :
\phi^{(i)}(c^b)\ge 0,\ 
\sum_{c^b\in\{0,1\}^b}\phi^{(i)}(c^b)=1,\ 
\forall i\in[n]
\Bigr\}.
\]

Any vector $\phi\in\cB_\cG$ is called a \emph{block-marginal vector}. It consists
of $n$ blocks
\[
\phi=(\phi^{(1)},\ldots,\phi^{(n)}),
\]
where for each $i\in[n]$ the block $$\phi^{(i)}\in\mathbb{R}^{2^b} $$ is a
probability mass function on $\{0,1\}^b$, referred to as the
\emph{block marginal} associated with the coordinate set $\cI_i$. Coordinates of $\phi^{(i)}$ are indexed by $c^b\in\{0,1\}^b$ in lexicographic order. For instance, for $b=3$ the fifth coordinate of $\phi^{(i)}$ corresponds to $\phi^{(i)}(100)$.

\end{definition}

We now define the set of block-marginal vectors that are realizable as marginals of some joint distribution on the full codeword.

\begin{definition}[$\cP_\cG$]\label{ladefini}
   Fix a decoding graph $\cG$ with index sets $\cI_1,\ldots,\cI_n \subset [nR]$,
each of size~$b$.

Given a codeword $\mathbf c\in\{0,1\}^{nR}$, define the corresponding block-marginal vector $\phi_{\mathbf c}\in\cB_\cG$ by
\[
\phi_{\mathbf c}^{(i)}(c^b)
=
\begin{cases}
1 & \text{if}\;\mathbf c_{\cI_i}=c^b\\
0 & \text{otherwise}
\end{cases}
\qquad i\in[n],\ c^b\in\{0,1\}^b .
\]

Given a codeword distribution $P$ on $\{0,1\}^{nR}$, define the block-marginal vector \emph{generated} by $P$ as 
\begin{align}\label{repr}
\phi_P \defeq \sum_{\mathbf c\in\{0,1\}^{nR}} P(\mathbf c)\,\phi_{\mathbf c}.
\end{align}
Thus, $$\phi_P^{(i)}(c^b)=\Pr_P(\bo{C}_{\cI_i}=c^b).$$ 

We say that block-marginal vector $\phi\in \cB_\cG$ can be \emph{generated} if there exists at least one codeword distribution $P$ such that $$\phi=\phi_P.$$ The set of $\phi$'s that can be generated is denoted by $\cP_\cG$.
\end{definition}

\begin{ex}\label{esemp}
For the graph in Fig.~\ref{fig:phivector_graph_1},\footnote{This graph does not yield compression since $R>1$, and is only used as a toy example to illustrate the block-marginal polytope representation.} the block-marginal vectors for the codewords $(000)$, $(110)$, and $(111)$ have length $2\times 2^{2}=8$, and are given by
\begin{align}
\phi_{000}&=(1,0,0,0,1,0,0,0)&\notag\\
\phi_{110}&=(0,0,0,1,0,0,1,0)&\notag\\
\phi_{111}&=(0,0,0,1,0,0,0,1).&\notag
\end{align}

    If $P$ is the distribution given by $P(000)=0.2$, $P(110)=0.3$, $P(111)=0.5$, then using \eqref{repr} we get
\[
\phi_{P}= (\underbrace{0.2,0,0,0.8}_{\phi_P^{(1)}},\underbrace{0.2,0,0.3,0.5}_{\phi_P^{(2)}}).
\]
\begin{figure}
\begin{center}
  \includegraphics[width=0.25\textwidth]{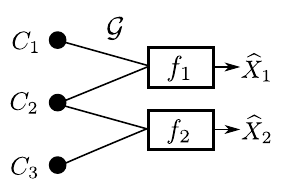}
  \caption{An example to illustrate the block-marginal polytope.}\label{fig:phivector_graph_1}
\end{center}
\end{figure}
\end{ex}

A few remarks are in order:
\begin{enumerate}
\item The vector $\phi_{\mathbf{c}}$, $\bo{c}\in \{0,1\}^{nR}$, depends on the choice of the graph $\cG$ (and similarly for $\phi_P$).
\item  Every $\phi_{\bo{c}}$, $\bo{c}\in \{0,1\}^{nR}$, is a vertex (extreme point) of $\cB_\cG$. However, not every vertex of $\cB_\cG$ is of the form $\phi_{\mathbf{c}}$ for some codeword~$\mathbf{c}$.
\item
If $\phi_1,\phi_2\in \cP_\cG$, then all their convex combinations
$$\phi=\alpha \phi_1 + (1-\alpha)\phi_2 \qquad \alpha \in [0,1]$$ also belong to $\cP_\cG$. Indeed, if $P_1$ and $P_2$ are two codeword distribution such that $$\phi_1=\phi_{P_1}\quad \text{and}\quad\phi_2=\phi_{P_2},$$ then by \eqref{repr}  $$\phi=\alpha \phi_1+(1-\alpha)\phi_2=\phi_{P_\alpha}\in \cP_\cG,$$ where $P_\alpha\defeq\alpha P_1 +(1-\alpha)P_2$.  Hence, the set $\cP_\cG$ is a convex subset of $\cB_\cG$, and its extreme points form a subset of those of $\cB_\cG$.
\item[iv.] The $i$'th block $\phi^{(i)}_P$ of $\phi_P$ corresponds to the distribution of $\bfC_{\cI_i}$, which we interpret as the codeword marginal associated with the $i$'th local decoder. 
\item[v.] Two distinct codeword distributions $P$ and $\tilde{P}$ may produce the same block-marginal vector, \ie, $\phi_P=\phi_{\tilde{P}}$. In contrast, for each $i$, the mapping between a distribution on $\{C_{\cI_i}\}$ and the block marginal $\phi^{(i)}$ is bijective.
\item[vi.] \label{utilee} If $\phi_1\in  \cP_\cG$ and $\alpha \phi_1 + (1-\alpha)\phi_2\in \cP_\cG$ for some $\alpha\in (0,1)$, then this does necessarily imply that  $\phi_2\in \cP_\cG$.
\end{enumerate}

In general, the concatenation of marginal distributions does not correspond to a valid joint distribution. Consequently, not every block marginal vector $\phi$ can be generated.\footnote{For this reason, we use $\phi$ rather than $P$ to denote block-marginal vectors.} For example, consider the graph in Fig.~\ref{fig:phivector_graph_1} and the vector
\[
\phi=(p,0,(1-p),0,0,0,1,0)\in\cB_\cG.
\]
Then
\[
\phi^{(1)}=(p,0,(1-p),0), 
\qquad
\phi^{(2)}=(0,0,1,0).
\]
The marginal $\phi^{(1)}$ assigns probability $p$ to $(C_1,C_2)=(0,0)$ and probability $1-p$ to $(C_1,C_2)=(1,0)$, whereas $\phi^{(2)}$ assigns probability one to $(C_2,C_3)=(1,0)$. These marginals are inconsistent since they assign different probabilities to the shared coordinate $C_2$. Hence no joint distribution on $(C_1,C_2,C_3)$ has marginals $\phi^{(1)}$ and $\phi^{(2)}$.
\begin{definition}[Marginally consistent block-marginal vector] \label{marcon}
Fix a decoding graph $\cG$ with index sets $\cI_1,\ldots,\cI_n \subset [nR]$.  
A block-marginal vector $\phi=(\phi^{(1)},\ldots,\phi^{(n)})$ is said to be \emph{marginally consistent} if for every $i\neq j$ blocks $\phi^{(i)}$ and $\phi^{(j)}$ have their marginals on components $\cI_i\cap\cI_j$ that coincide.
\end{definition}

Note that if $\phi$ and $\tilde{\phi}$ are marginally consistent, then so is
their convex combination $\alpha\phi+(1-\alpha)\tilde{\phi}$, $0\le \alpha\le 1$.
\begin{notation}[Bit and bit-pair consistency]
Given $\phi\in\mathcal P$, we write
\[
\phi^{(i)}_\ell(a)
\;=\;
\sum_{c^b:\,c_\ell=a}\phi^{(i)}(c^b),
\qquad  i\in[n],\ell\in \mathcal I_i,a\in\{0,1\},
\]
for the bit-marginal of $\phi^{(i)}$ at coordinate $\ell$. Similarly, we write
\[
\phi^{(i)}_{\ell,m}(a_1,a_2)
\;=\;
\sum_{c^b:\,c_\ell=a_1,\;c_m=a_2}\phi^{(i)}(c^b),
\qquad i\in[n], \ell,m\in\mathcal I_i,(a_1,a_2)\in\{0,1\}^2,
\]
for the corresponding bit-pair marginal.
\end{notation}
Hence, from Definition~\ref{marcon}, $\phi$ is \emph{bit marginally consistent} if
\[
\phi^{(i)}_\ell=\phi^{(j)}_\ell
\quad\text{for all } i,j\in [n], i\neq j, \text{ and } \ell\in\mathcal I_i\cap\mathcal I_j,
\]
and is \emph{bit-pair marginally consistent} if
\[
\phi^{(i)}_{\ell,m}=\phi^{(j)}_{\ell,m}
\quad\text{for all } i,j\in [n], i\neq j,\text{ and } \ell,m\in\mathcal I_i\cap\mathcal I_j .
\]
If $\phi=(\phi^{(1)},\ldots,\phi^{(n)})$ is bit marginally consistent, then
$\phi^{(i)}_\ell$ is independent of $i$
(whenever defined), and we simply write $\phi_\ell$---and similarly for higher order marginals.

Marginal consistency of $\phi$ does not imply that $\phi$ can be generated, as illustrated next.

\begin{ex}\label{consmarnoprob}
Let $(C_1,C_2,C_3)$ be binary random variables with the following pairwise marginals:\footnote{This would correspond to three block marginal vector, describing the marginals $(C_1,C_2)$, $(C_2,C_3)$, and $(C_1,C_3)$.}
\begin{align*}
P_{C_{1,2}}(00)&=P_{C_{1,2}}(11)=\tfrac12,\\
P_{C_{2,3}}(00)&=P_{C_{2,3}}(11)=\tfrac12,\\
P_{C_{1,3}}(01)&=P_{C_{1,3}}(10)=\tfrac12 .
\end{align*}
These marginals have consistent bit marginals (each variable is Bernoulli$(1/2)$). However, there exists no joint distribution $P_{C_1,C_2,C_3}$ with these pairwise marginals; the first two marginals imply $C_1=C_2=C_3$ with probability one,
whereas the third implies $C_1\neq C_3$ with probability one.
\end{ex}

The following lemma follows from the definition of $\phi$ and the definition of local private decoding (see Definition~\ref{defrelpri}):
\begin{lemma}[Privacy]\label{lempri}
A code is private if and only if, for every $\bo{x}\in\{0,1\}^n$ and $i\in[n]$,
the block-marginal vector $\phi^{(i)}$ depends on $\bo{x}$ only through $x_i$.
\end{lemma}

The next section introduces three block-marginal vectors that will play a central role in establishing Theorem~\ref{thm:main}.
\subsection{The ideal, the uniform, and the approximate block-marginal vectors}
Fix a decoder $(\cG,f)$. We define the three block-marginal vectors and comment thereafter:
\begin{description}
    \item[The ideal block-marginal vector] for a given $\bo{x}\in\{0,1\}^n$, denoted by $\phi_{I,\bo{x}}$, is the vector in $\cB_\cG$ whose $i$-th block is given by
\begin{equation}\label{idblo}
\phi_{I,\bo{x}}^{(i)}(c_{\cI_{i}}) = \begin{cases}
                                    \frac{1}{|f_{i}^{-1}(x_{i})|} & \text{ if } c_{\cI_{i}}\in f_{i}^{-1}(x_{i})\\
                                    0 &\text{ otherwise, }
                     \end{cases}
\end{equation}
for all $i\in [n]$. 


\item[The uniform block-marginal vector,] denoted by $\phi_{{U}}$, is the vector given by $$\phi_{{U}}^{(i)}(c_{\cI_{i}})=2^{-b},\quad i\in [n].$$

\item[The approximate block-marginal vector] 
 (of the ideal block-marginal vector) for a given $\bo{x}\in\{0,1\}^n$ is defined as
\begin{align}\label{phia}
\phi_{A,\bo{x}}\defeq \frac{n}{n+1}\phi_{I,\bo{x}} + \frac{1}{n+1}\phi_{{U}}.
\end{align}
\end{description}

If \( \phi_{I,\bo{x}} \) could be generated (\ie, $\phi_{I,\bo{x}}\in \cP_\cG$), we could encode and decode
$\bo{x}$ privately since \( \phi_{I,\bo{x}}^{(i)} \) depends only on
\( x_i \) (Lemma~\ref{lempri}). Furthermore, because
\(
f_i(c_{\mathcal{I}_i}) = x_i,
\)
the scheme would incur zero error. However, proving that
\(\phi_{I,\bo{x}}\) can be generated with high probability over the
source realizations and the decoder ensemble appears difficult.

In contrast to \( \phi_{I,\bo{x}} \), the vector \( \phi_U \) can always be generated—regardless of \( \mathcal{G} \)—by independently sampling each codeword bit from the Bernoulli\((1/2)\) distribution. Since codewords are chosen independently of \( \bo{x} \), vector \( \phi_U \) yields a scheme that is trivially privately locally decodable but completely unreliable.

Interestingly, the vector \( \phi_{A,\bo{x}} \) can be generated with high probability over both the decoder ensemble and the source realizations (see Proposition~\ref{lemma:phi_NI_generated}), and is ``close'' to
\( \phi_{I,\bo{x}} \). This yields a coding scheme that approximates the ideal block-marginal: it preserves privacy and incurs only a small error probability. Note here that if \( \phi_{A,\bo{x}} \) and \( \phi_U \) can be generated, this does not guarantee that \( \phi_{I,\bo{x}} \) can be generated (see Property~vi. after Example~\ref{esemp}).

Given a decoder $(\cG,f)$, assume that $\phi_{A,\bo{x}}$ can be generated. This means that $\phi_{A,\bo{x}}$ corresponds to some distribution $$P_{\bo{C}|\bo{x}}\defeq P_{A,\bo{x}}$$ over $\{0,1\}^{nR}$. In particular, for any $i\in [n]$ the marginal $P_{A,\bo{x}}^{(i)}$ satisfies  $$P_{A,\bo{x}}^{(i)}=\frac{n}{n+1}P_{I,\bo{x}}^{(i)}+\frac{1}{n+1}P_{U}^{(i)}$$ 
where $P_{I,\bo{x}}^{(i)}$ denotes the uniform distribution over $f_i^{-1}(x_i)$, and where $P_{U}^{(i)}$ denotes the uniform distribution over $\{0,1\}^b$. This corresponds to selecting  $\bo{C}_{\cI_i}$ uniformly at random in the set  $f_i^{-1}(x_i)$, with probability $n/(n+1)$, and selecting a uniformly random bit string $\bo{C}_{\cI_i}$ with probability $1/(n+1)$. Therefore, the local error probability for source realization $\bo{x}$ satisfies
\begin{align}
    \Pr[\hat{X}_i\ne x_i]\defeq \Pr[f_i(\bo{C}_{\cI_i})\ne x_i]\leq 1/(n+1)\quad i\in [n].
\end{align}
Concerning privacy, since neither $P_{I,\bo{x}}^{(i)}$ nor $P_U^{(i)}$ depend on $\bo{x}_{-i}$, any of their convex combination, and $P_{A,\bo{x}}^{(i)}$ in particular, yields a locally private scheme. The following lemma follows:
\begin{lemma}\label{goodcase}
Fix a decoder $(\cG, f)$, a source realization $\bo{x}\in\{0,1\}^n$, and suppose  
$$\phi_{A,\bo{x}}\in \cP_\cG.$$ Then, there exists $P_{\bo{C}|\bo{x}}$ such that the code $(P_{\bo{C}|\bo{x}},\cG, f )$ restricted to $\bo{x}$ is private and achieves error probability 
$$\Pr[\hat{X}_i\ne x_i]\leq \frac{1}{n+1}\qquad i\in [n].$$
\end{lemma}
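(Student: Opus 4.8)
The plan is to read the required encoding distribution straight off the hypothesis and then verify privacy and reliability by inspecting the block marginals. Since by assumption $\phi_{A,\bo{x}}\in\cP_\cG$, property~(ii) of $\cP_\cG$ guarantees a distribution $P_{A,\bo{x}}$ on $\{0,1\}^{nR}$ with $\phi_{P_{A,\bo{x}}}=\phi_{A,\bo{x}}$, and I would simply take $P_{\bo{C}|\bo{x}}\defeq P_{A,\bo{x}}$. Because the correspondence between a distribution on $\{c_{\cI_i}\}$ and its representation $\phi^{(i)}$ is bijective (as noted in the comments after \eqref{repr}), the $i$-th block $\phi_{A,\bo{x}}^{(i)}$ is exactly the law of $\bo{C}_{\cI_i}$ under $P_{A,\bo{x}}$, and by the definition~\eqref{phia} this law equals $\tfrac{n}{n+1}P_{I,\bo{x}}^{(i)}+\tfrac{1}{n+1}P_{U}^{(i)}$, where $P_{I,\bo{x}}^{(i)}$ is uniform on $f_i^{-1}(x_i)$ and $P_{U}^{(i)}$ is uniform on $\{0,1\}^b$.

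For privacy, I would invoke the criterion established earlier: a code is private iff, for every $\bo{x}$ and every $i$, the law of $\bo{C}_{\cI_i}$ depends on $\bo{x}$ only through $x_i$. Here $P_{I,\bo{x}}^{(i)}$ depends on $\bo{x}$ only through the set $f_i^{-1}(x_i)$, hence only through $x_i$, while $P_{U}^{(i)}$ does not depend on $\bo{x}$ at all; a fixed convex combination of the two therefore depends on $\bo{x}$ only through $x_i$, which gives privacy.

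For reliability, I would condition on which branch of the mixture generates $\bo{C}_{\cI_i}$. With probability $n/(n+1)$ the string $\bo{C}_{\cI_i}$ is drawn uniformly from $f_i^{-1}(x_i)$, in which case $f_i(\bo{C}_{\cI_i})=x_i$ deterministically (cf.\ Remark~\ref{rem1}); hence $\Pr[\hat{X}_i\ne x_i]=\Pr[f_i(\bo{C}_{\cI_i})\ne x_i]\le 1/(n+1)$ for every $i\in[n]$, as claimed. Note that no union bound over $i$ is needed, since the guarantee is stated per coordinate.

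I do not expect a genuine obstacle at this level: the lemma is essentially an unpacking of the definitions, and all the difficulty is pushed into the hypothesis $\phi_{A,\bo{x}}\in\cP_\cG$, which is to be established separately (Proposition~\ref{lemma:phi_NI_generated}). The only points needing a little care are (i) using the block-marginal/distribution bijection correctly, so that $\phi_{A,\bo{x}}^{(i)}$ is genuinely the marginal law of $\bo{C}_{\cI_i}$ under the (not necessarily unique) generating distribution $P_{A,\bo{x}}$, and (ii) observing that a convex combination of block-marginal vectors corresponds to the mixture of the underlying codeword distributions (property~(iii) of $\cP_\cG$), so that the mixture interpretation used in the reliability step is legitimate.
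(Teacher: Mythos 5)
Your proof is correct and follows essentially the same route as the paper: take a generating distribution $P_{A,\bo{x}}$ for $\phi_{A,\bo{x}}$, read off that each block marginal is the mixture $\tfrac{n}{n+1}P_{I,\bo{x}}^{(i)}+\tfrac{1}{n+1}P_U^{(i)}$, and deduce privacy (both components depend on $\bo{x}$ only through $x_i$) and reliability (the $P_{I,\bo{x}}^{(i)}$-component decodes $x_i$ with certainty, contributing at most $1/(n+1)$ error). The only cosmetic slip is citing Remark~\ref{rem1} for the fact that $f_i(\bo{C}_{\cI_i})=x_i$ when $\bo{C}_{\cI_i}\in f_i^{-1}(x_i)$ — that is just the definition of preimage — but this does not affect the argument.
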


\section{Proof of Theorem~\ref{thm:main}}\label{pvethm1}
If $p=1/2$, it suffices to let $\hat{\bo{X}}=\bo{C}=\bo{X}$, which gives a rate $R=1$ private and zero-error scheme.

From now on we assume $0<p<1/2$ and consider the decoder ensemble $(\cG,f)$
described in Section~\ref{sct:random}. 
Each local decoder observes a subset $\cI_i$ of $b$ codeword components (determined by $\cG$), computes the length $b'$ syndrome using $S$, and subsequently applies a binary mapping $f^{(U)}_i$.

The parameters are chosen as follows. Let $\varepsilon>0$ be sufficiently
small so that $p+\varepsilon\le 1/2$, and set
\[
R = H(p+\varepsilon), \qquad p_\varepsilon \defeq p+\frac{\varepsilon}{2}.
\]
Let $\varepsilon'\in[0,1/2)$ satisfy
\[
H(\varepsilon') = R - H(p_\varepsilon),
\]
and, as in Section~\ref{sct:random}, let
\[
b=2^{r}-1, \quad b' = 2^r - r - 1,
\]
where we restrict $r$ to be the smallest integer such that 
\[
2^r-1 \geq \frac{16\log n}{\varepsilon'}.
\]
It can be verified that we can choose $b$ satisfying
\[
\frac{16\log n}{\varepsilon'}\leq b\leq \frac{32\log n}{\varepsilon'}.
\]
Notice that $b'/b \to 1$ as $n \to \infty$. 

The proof of Theorem~\ref{thm:main} proceeds in two steps. In the first and main step, presented in Section~\ref{smalldi}, we show that any rate above entropy is achievable with private local decoding and small Hamming distortion; that is, $\bo{X} \oplus \hat{\bo{X}} \leq \delta n$ with high probability for any $\delta > 0$. This is established by showing that the conditions of Lemma~\ref{goodcase} hold with high probability over both the decoder ensemble and the source realizations. In the second step, presented in Section~\ref{zdi}, we strengthen this small-distortion result by demonstrating that the error term $\bo{X} \oplus \hat{\bo{X}}$ can be privately compressed without loss, leveraging a result from~\cite{chandar2023data}.

\subsection{Private compression with small distortion}
\label{smalldi}

The following key proposition that says that $\phi_{A,\bo{x}}\in \cP_\cG$ with high probability over our decoder ensemble, for any source realization $\bo{x} $ in the Hamming ball $$\cB(n, p_\varepsilon)\defeq \{\bo{x}\in \{0,1\}^n: \text{weight}(\bo{x})\leq n p_\varepsilon\}.$$

\begin{proposition}
    \label{lemma:phi_NI_generated}
For any $\bo{x}\in\cB(n, p_\varepsilon)$, we have
\[
\overline{\Pr}[ \phi_{A,\bo{x}}\in \cP_\cG]\geq 1-\frac{\text{poly}({\log n})}{n}
\]
where $\overline{\Pr}[\cdot]$ indicates averaging over the rate $R=H(p+\varepsilon)$ decoder ensemble. 
\end{proposition}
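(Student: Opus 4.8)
The goal is to show that, for any fixed $\bo{x}$ of weight at most $np_\varepsilon$, the approximate block-marginal vector $\phi_{A,\bo{x}}$ lies in $\cP_\cG$ with probability at least $1-\tfrac{1}{n^2}(1+o(1))$ over the random choice of decoder $(\cG,f)$. Since $\cP_\cG$ is the convex hull of the $\phi_{\bo{c}}$'s, membership of $\phi_{A,\bo{x}}$ is equivalent to the existence of a codeword distribution $P$ whose $i$-th block marginal equals $\tfrac{n}{n+1}P_{I,\bo{x}}^{(i)}+\tfrac{1}{n+1}P_U^{(i)}$ for every $i$. The natural route is to exhibit such a $P$ explicitly — or rather, to show one exists — by decomposing $\phi_{A,\bo{x}}$ as a convex combination of $\phi_{\bo{c}}$'s. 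The key structural fact we would exploit is Remark~\ref{rem:error_correction} / the two-error-correction property of $g$: when the pairwise intersections $|\cI_i\cap\cI_j|$ are all at most two, the block marginals $P_{I,\bo{x}}^{(i)}$ (uniform over $f_i^{-1}(x_i)$) are mutually consistent on overlaps, because fixing at most two bits of $\bo{c}_{\cI_i}$ does not change $g(\bo{c}_{\cI_i})$ and hence leaves $f_i(\bo{c}_{\cI_i})$ unconstrained. So the first step is to condition on the high-probability event $\mathcal{E}_1=\{|\cI_i\cap\cI_j|\le 2\ \forall i\ne j\}$ and show $\overline{\Pr}[\mathcal{E}_1^c]=o(1/n^2)$ — a routine union bound over $\binom{n}{2}$ pairs using $b=\Theta(\log n)$ and $nR$ left vertices, which gives a per-pair collision probability of roughly $(b^2/nR)^3$.

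**Constructing the distribution.** Given $\mathcal{E}_1$, I would build $P_{A,\bo{x}}$ as a mixture following the structure of $\phi_{A,\bo{x}}$ itself: with probability $\tfrac{1}{n+1}$ sample $\bo{C}$ uniform over $\{0,1\}^{nR}$ (this realizes the $\phi_U$ component for every block simultaneously, and trivially lies in $\cP_\cG$); with probability $\tfrac{n}{n+1}$ we need a distribution realizing the ideal marginals $P_{I,\bo{x}}^{(i)}$ on every block. The crux is therefore to show that a joint distribution with marginals $\{P_{I,\bo{x}}^{(i)}\}_i$ exists — i.e. that $\phi_{I,\bo{x}}\in\cP_\cG$ on the event $\mathcal{E}_1$, or at least that some convex-combination trick recovers $\phi_{A,\bo{x}}$ even if $\phi_{I,\bo{x}}$ itself is not generable. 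This is precisely the marginal problem flagged in the introduction, and Example~\ref{consmarnoprob} warns that pairwise consistency alone is not enough. I expect the actual argument to go through the perturbative / block-marginal-polytope machinery the paper advertises (developed in the later sections \ref{lapro1}, \ref{lapro3}): start from $\phi_U$, which is generable, and show that one can perturb it toward $\phi_{I,\bo{x}}$ along generable directions, with the $\tfrac{1}{n+1}$ slack toward $\phi_U$ providing exactly the room needed to stay inside $\cP_\cG$ — an interior-point argument in which the uniform component keeps $\phi_{A,\bo{x}}$ bounded away from the "bad" faces of $\cP$ that Example~\ref{consmarnoprob} lives on.

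**The main obstacle, and the error budget.** The main difficulty is genuinely the marginal-consistency step: showing that the ideal marginals can be stitched into a joint law. The two-error-correcting code handles pairwise overlaps, but one must control what happens when a single left vertex is shared by three or more right vertices, or when the "overlap hypergraph" on $[nR]$ has more complex local structure; this is where the random graph $\cG$ is used a second time — to ensure that, with high probability, the overlap structure is sparse/tree-like enough (locally a forest, or with few short cycles) that a sequential / greedy assignment of codeword bits, or the perturbation argument, succeeds. I would organize the failure event as $\mathcal{E}_1^c\cup\mathcal{E}_2^c$ where $\mathcal{E}_2$ is the event that the relevant local graph structure is benign, prove $\overline{\Pr}[\mathcal{E}_1^c]=o(1/n^2)$ and $\overline{\Pr}[\mathcal{E}_2^c]=o(1/n^2)$ separately, and then show deterministically that $\mathcal{E}_1\cap\mathcal{E}_2$ implies $\phi_{A,\bo{x}}\in\cP_\cG$. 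The dominant term $\tfrac{1}{n^2}$ in the bound should come from the least-unlikely of these structural bad events (most plausibly the three-way left-vertex collision or a short-cycle count), with everything else genuinely $o(1/n^2)$; pinning down which one and matching the constant is where the careful counting — deferred to the later sections — will be needed.
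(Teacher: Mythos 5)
Your proposal correctly identifies the right starting point ($\phi_U$ is generable, the small-overlap event $\mathcal{E}_1$ is needed, the two-error-correcting $g$ gives marginal consistency of $\phi_{I,\bo{x}}$) and correctly recognizes that the obstacle is the marginal problem for $\phi_{I,\bo{x}}$. But there is a genuine gap at exactly the point where you hedge: the naive mixture — uniform codeword with probability $\tfrac{1}{n+1}$, a distribution with block marginals $\{P_{I,\bo{x}}^{(i)}\}$ with probability $\tfrac{n}{n+1}$ — would require $\phi_{I,\bo{x}}\in\cP_\cG$, and the paper explicitly states it does not know how to establish this. The "interior-point / perturb $\phi_U$ toward $\phi_{I,\bo{x}}$ along generable directions" description you give is too vague to close this gap and is in fact not the mechanism used.

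The actual trick is a \emph{shifted} decomposition: one finds a single perturbation vector $\eta\in\cE_\cG$ and writes
\[
\phi_{A,\bo{x}}=\frac{1}{n+1}\bigl(\phi_U - n\eta\bigr)+\frac{n}{n+1}\bigl(\phi_{I,\bo{x}}+\eta\bigr),
\]
then shows both bracketed terms lie in $\cP_\cG$. The second term is handled by taking $\bo{C}$ uniform over a slightly \emph{expurgated} set of valid codewords $\cC_{\mathrm{exp}}(\bo{x})$ and showing, via a second-moment/Chebyshev concentration argument on $|\cC(\bo{x})|$ and its cylinder sets, that the resulting block marginals deviate from $\phi_{I,\bo{x}}$ by an eligible $\eta$ with probability $1-O(1/n^2)$ (this is where the $1/n^2$ in the bound actually comes from, not from a three-way collision or short-cycle count). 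The first term is handled by a deterministic, constructive three-stage argument (match bit marginals, then bit-pair marginals, then block marginals, using $|\cI_i\cap\cI_j|\le 2$ at the last stage) showing $\phi_U$ is far enough from the boundary of $\cP_\cG$ to absorb the compensating shift $-n\eta$; note the factor $n$ here, which your proposal does not anticipate and which is why the perturbation $\eta$ must be very small (scale $n^{-3}$--$n^{-4}$). Your proposed second failure event $\mathcal{E}_2$ (local forest structure / few short cycles, followed by a greedy bit assignment) is not what the paper does and would require a separate, nontrivial argument; the paper's route through counting valid codewords bypasses any explicit control of cycle structure.
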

The proof of Proposition~\ref{lemma:phi_NI_generated} is deferred to Section~\ref{lapro1}. 

Now let $(\cG,f)$ be drawn from our  decoder ensemble, and let random variable $\tilde{\bo{X}}$ be independent of $(\cG,f)$ and follow a distribution with support $${{\cal{S}}}_n\subseteq\cB(n, p_\varepsilon).$$ From Proposition~\ref{lemma:phi_NI_generated}, we get
\begin{align}
1-\frac{\text{poly}({\log n})}{n} &\leq \mathbb{E}_{\tilde{\bo{X}}} (\overline{\Pr}[ \phi_{A,\tilde{\bo{X}}}\in \cP_\cG])\nonumber \\
&= \mathbb{E}_{\tilde{\bo{X}}} \mathbb{E}_{\cG,f}(\openone\{\phi_{A,\tilde{\bo{X}}}\in \cP_\cG\}) \nonumber \\
&=\mathbb{E}_{\cG,f} \mathbb{E}_{\tilde{\bo{X}}} (\openone\{\phi_{A,\tilde{\bo{X}}}\in \cP_\cG\})\notag\\
&=\mathbb{E}_{\cG,f} {\Pr}[ \phi_{A,\tilde{\bo{X}}}\in \cP_\cG]
\label{fubi}
\end{align}
where the second equality follows from Fubini's theorem. 
From \eqref{fubi} we deduce the following Corollary:

\begin{corollary}\label{match}
Suppose $\tilde{\bo{X}}$ has a distribution with support ${\cal{S}}_n\subseteq\cB(n, p_\varepsilon)$. Then, there exists a rate $R=H(p+\varepsilon)$ private decoder $(\cG,f)$ such that $${\Pr}[ \phi_{A,\tilde{\bo{X}}}\in \cP_\cG]\geq 1-\frac{\mathrm{poly}({\log n})}{n}.$$
\end{corollary}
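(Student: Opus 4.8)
The plan is to obtain the corollary as an immediate consequence of the averaged bound~\eqref{fubi} via the probabilistic method. The chain of equalities leading to~\eqref{fubi} already establishes that
$$
\mathbb{E}_{\cG,f}\bigl[\Pr(\phi_{A,\tilde{\bo{X}}}\in\cP_\cG)\bigr]\geq 1-\tfrac{1}{n^2}(1+o(1)),
$$
where the inner probability is over the randomness of $\tilde{\bo{X}}$, which is independent of the decoder, and the outer expectation is over the decoder ensemble of Section~\ref{sct:random}. A random variable cannot lie everywhere strictly below its mean, so there must exist at least one realization $(\cG,f)$ in the support of the ensemble for which $\Pr(\phi_{A,\tilde{\bo{X}}}\in\cP_\cG)\geq 1-\tfrac{1}{n^2}(1+o(1))$. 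First I would state this averaging step explicitly and fix such a decoder $(\cG,f)$.

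It then remains to check that this $(\cG,f)$ qualifies as a ``rate $R=H(p+\varepsilon)$ private decoder.'' The rate is immediate: by the parameter choices in Section~\ref{pvethm1} every decoder produced by the ensemble has rate exactly $R=H(p+\varepsilon)$. Privacy is not a property of the particular graph realization but of the structure of the decoding maps $f_i=f^{(U)}_i\circ g$: whenever a source $\bo{x}$ satisfies $\phi_{A,\bo{x}}\in\cP_\cG$, the associated encoding $P_{A,\bo{x}}$ has $i$-th marginal $P_{A,\bo{x}}^{(i)}=\tfrac{n}{n+1}P_{I,\bo{x}}^{(i)}+\tfrac{1}{n+1}P_U^{(i)}$, a convex combination of the uniform distribution on $f_i^{-1}(x_i)$ and the uniform distribution on $\{0,1\}^b$, both of which depend on $\bo{x}$ only through $x_i$; hence $P_{\bo{C}_{\cI_i}|\bo{x}}=P_{\bo{C}_{\cI_i}|x_i}$, which is exactly~\eqref{privatecond}. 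This is precisely the content of Lemma~\ref{goodcase}, so I would simply invoke it rather than re-derive it.

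The only point needing a word of care is the uniformity of the $o(1)$ term: Proposition~\ref{lemma:phi_NI_generated} gives a bound that holds uniformly over all $\bo{x}\in\cB(n,p_\varepsilon)$, so taking the expectation over $\tilde{\bo{X}}$, whose support lies inside $\cB(n,p_\varepsilon)$, preserves the bound $1-\tfrac{1}{n^2}(1+o(1))$ with the $o(1)$ independent of the law of $\tilde{\bo{X}}$. I do not anticipate any genuine obstacle at this stage: the corollary is essentially a repackaging of~\eqref{fubi} together with the existential step of the probabilistic method, and all the substantive work is concentrated in Proposition~\ref{lemma:phi_NI_generated}, whose proof is deferred to Section~\ref{lapro1}.
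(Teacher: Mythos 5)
Your proof is correct and follows exactly the paper's route: Corollary~\ref{match} is obtained from the averaged bound~\eqref{fubi} by the standard averaging/probabilistic-method step (some realization of $(\cG,f)$ must achieve at least the mean), with the rate and privacy remarks handled by the ensemble parameters and Lemma~\ref{goodcase}. Your observation about the uniformity of the $o(1)$ term over $\bo{x}\in\cB(n,p_\varepsilon)$ is a worthwhile point of care that the paper leaves implicit.
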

Now let $\tilde{\bo{X}}$ be a source with support in ${\cal{S}}_n\subseteq\cB(n, p_\varepsilon)$, let $(\cG,f)$ be the decoder predicted by Corollary~\ref{match}, and let $\tilde{\bo{x}}_0$ be any realization such that 
 $\phi_{A,\tilde{\bo{x}}_0}\in \cP_\cG$.
 
 Given an instance $\tilde{\bo{x}}$ of $\tilde{\bo{X}}$, we distinguish two cases:
\begin{description}
    \item[$i.$ $\phi_{A,\tilde{\bo{x}}}\in \cP_{\cG}$:] 
 From Lemma~\ref{goodcase}, source $\tilde{\bo{x}}$ can be encoded with some distribution $P_{\bo{C}|\tilde{\bo{x}}}$ and decoded with $(\cG,f)$ in a locally private fashion with error probability 
\begin{align}
    \Pr[\hat{\tilde{{X}}}_i\ne {\tilde{{x}}}_i]\leq 1/(n+1)\quad i\in [n],
\end{align}
where $\hat{\tilde{X}}_i$ refers to the $(\cG, f)$-decoder estimate of $\tilde{X}_i$.
\item[$ii.$ $\phi_{A,\tilde{\bo{x}}}\notin \cP_{\cG}$:]
In this case, let us encode $\tilde{\bo{x}}$ with distribution $P_{\bo{C}|\bo{x}_0}$ given by Lemma~\ref{goodcase}---the point here is only to guarantee privacy, not reliability.  
\end{description}
By distinguishing cases $i.$ and  $ii.$ and by using Corollary~\ref{match}, we get for any $i\in [n]$
\begin{align*}
    \Pr[\hat{\tilde{{X}}}_i\ne {\tilde{{X}}}_i]&\leq \Pr[\hat{\tilde{{X}}}_i\ne {\tilde{{X}}}_i|\phi_{A,\tilde{\bo{X}}} \in \cP_\cG]+{\Pr}[ \phi_{A,\tilde{\bo{X}}}\notin \cP_\cG] \\
    &\leq \frac{1}{n+1}+\frac{\text{poly}({\log n})}{n}\\
    &\leq \frac{\text{poly}({\log n})}{n}.
\end{align*}
The following Corollary follows.
\begin{corollary}\label{matcho}
Suppose $\tilde{\bo{X}}$ has a distribution with support ${\cal{S}}_n\subseteq\cB(n, p_\varepsilon)$. Then, there exists a rate $R=H(p+\varepsilon)$ private code $(P_{\bo{C}|\tilde{\bo{X}}},\cG, f )$ such that $${\Pr}[\hat{\tilde{X}}_i\neq \tilde{X}_i]\leq \frac{\text{poly}({\log n})}{n} \quad i\in [n].$$
Hence, by Markov inequality
$${\Pr}[d_H(\tilde{\bo{X}},\hat{\tilde{\bo{X}}})\geq \delta n]\leq \frac{\text{poly}({\log n})}{\delta  n}$$
for any $\delta>0$.\footnote{$d_H(\bo{X},\bo{Y})\defeq \bo{X}\oplus \bo{Y}$ where $\oplus$ denotes the component-wise modulo two sum.}
\end{corollary}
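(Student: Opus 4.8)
The plan is to glue together the pieces already established. The genuinely hard input---that the approximate block-marginal vector $\phi_{A,\bo{x}}$ can be generated with probability $1-\frac{1}{n^2}(1+o(1))$ over the decoder ensemble---is packaged in Proposition~\ref{lemma:phi_NI_generated} and, after averaging over the source, in Corollary~\ref{match}; and Lemma~\ref{goodcase} converts the event $\{\phi_{A,\bo{x}}\in\cP_\cG\}$ into an explicit private, low-error encoder. So all that remains is (i) to commit to one decoder and one matching encoder, (ii) to bound the per-symbol error, and (iii) to turn that bound into the claimed distortion estimate via Markov.

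\emph{Fixing the code.} First I would apply Corollary~\ref{match} to the given source $\tilde{\bo{X}}$, whose support lies in $\cB(n,p_\varepsilon)$, to fix a single rate-$H(p+\varepsilon)$ private decoder $(\cG,f)$ with $\Pr[\phi_{A,\tilde{\bo{X}}}\in\cP_\cG]\geq 1-\frac{1}{n^2}(1+o(1))$. With $(\cG,f)$ frozen, I would then define the encoder realization by realization along the dichotomy already described: on the event $\phi_{A,\tilde{\bo{x}}}\in\cP_\cG$, take $P_{\bo{C}|\tilde{\bo{x}}}$ to be the distribution supplied by Lemma~\ref{goodcase}, which makes $(P_{\bo{C}|\tilde{\bo{x}}},\cG,f)$ private for $\tilde{\bo{x}}$ and achieves $\Pr[\hat{\tilde{X}}_i\neq\tilde{x}_i]\leq\frac{1}{n+1}$ for every $i$; on the complementary event, take $P_{\bo{C}|\tilde{\bo{x}}}=P_U$, an i.i.d.\ Bernoulli$(1/2)$ codeword whose sole purpose is to keep $\bo{C}$---hence every $\bo{C}_{\cI_i}$---independent of $\tilde{\bo{x}}$, so that privacy is trivially maintained in that (rare) case.

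\emph{Error and distortion.} For reliability I would split on the same two events and invoke Corollary~\ref{match} for the bad one: for every $i$,
\begin{align*}
\Pr[\hat{\tilde{X}}_i\neq\tilde{X}_i] &\leq \Pr[\hat{\tilde{X}}_i\neq\tilde{X}_i \mid \phi_{A,\tilde{\bo{X}}}\in\cP_\cG] + \Pr[\phi_{A,\tilde{\bo{X}}}\notin\cP_\cG]\\
&\leq \frac{1}{n+1}+\frac{1}{n^2}(1+o(1)) = \frac{1+o(1)}{n}.
\end{align*}
Since $d_H(\tilde{\bo{X}},\hat{\tilde{\bo{X}}})=\sum_{i=1}^n\openone\{\hat{\tilde{X}}_i\neq\tilde{X}_i\}$, linearity of expectation gives $\mathbb{E}[d_H(\tilde{\bo{X}},\hat{\tilde{\bo{X}}})]\leq 1+o(1)$, and Markov's inequality then yields $\Pr[d_H(\tilde{\bo{X}},\hat{\tilde{\bo{X}}})\geq\delta n]\leq\frac{1+o(1)}{\delta n}$ for every $\delta>0$, which is exactly the asserted estimate.

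These last steps are routine bookkeeping. The point I would be most careful about is the privacy of the \emph{combined} code: one must check that choosing the encoder as a function of $\tilde{\bo{x}}$ (through the event $\phi_{A,\tilde{\bo{x}}}\in\cP_\cG$) does not covertly make $\bo{C}_{\cI_i}$ depend on $\tilde{\bo{X}}_{-i}$ beyond what $\tilde{X}_i$ already determines---so that the per-block marginal $P_{\bo{C}_{\cI_i}\mid\tilde{\bo{x}}}$ genuinely depends on $\tilde{\bo{x}}$ only through $\tilde{x}_i$ in both cases. Beyond that, the whole statement rests on Proposition~\ref{lemma:phi_NI_generated}, the substantive result, which I would invoke here as a black box.
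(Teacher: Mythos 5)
Your argument reproduces the paper's own proof almost verbatim: fix $(\cG,f)$ via Corollary~\ref{match}, split the encoding on the event $\phi_{A,\tilde{\bo{x}}}\in\cP_\cG$ (Lemma~\ref{goodcase} in the good case, $P_U$ in the bad case), bound the per-symbol error by conditioning on that event plus the $O(n^{-2})$ bound on its complement, and finish with Markov applied to $d_H(\tilde{\bo{X}},\hat{\tilde{\bo{X}}})=\sum_i\openone\{\hat{\tilde{X}}_i\neq\tilde{X}_i\}$. The routing and constants are identical, so there is nothing methodologically new to report.

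The one point that deserves a comment is the caveat you raise at the end, which is sharp and is in fact \emph{not} discharged explicitly in the paper. In case~(i) the local-decoder marginal is $P_{A,\tilde{\bo{x}}}^{(i)}=\tfrac{n}{n+1}P_{I,\tilde{\bo{x}}}^{(i)}+\tfrac{1}{n+1}P_{U}^{(i)}$, while in case~(ii) it is $P_{U}^{(i)}$; these differ, yet the case selector $\openone\{\phi_{A,\tilde{\bo{x}}}\in\cP_\cG\}$ depends on all of $\tilde{\bo{x}}$, not just on $\tilde{x}_i$. If two source strings agree in coordinate $i$ but land in different cases, then $P_{\bo{C}_{\cI_i}|\tilde{\bo{x}}}\neq P_{\bo{C}_{\cI_i}|\tilde{\bo{x}}'}$ and the exact privacy condition~\eqref{privatecond} fails at index $i$. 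The paper's phrase ``the point here is only to guarantee privacy by generating a uniformly random codeword'' does not by itself resolve this, since privacy is a property of the map $\tilde{\bo{x}}\mapsto P_{\bo{C}_{\cI_i}|\tilde{\bo{x}}}$ across \emph{all} $\tilde{\bo{x}}$ sharing the same $\tilde{x}_i$, not of the bad branch in isolation. Note also that one cannot simply union-bound Proposition~\ref{lemma:phi_NI_generated} over all $\tilde{\bo{x}}\in\cS_n$ to eliminate case~(ii), since $|\cS_n|$ is exponential while the failure probability is only $O(n^{-2})$. So your instinct to flag this is correct: a fully rigorous version of the corollary would need either to make the bad set depend on $\tilde{x}_i$ alone, or to restate the conclusion so that privacy is only claimed on a restricted support (which is roughly what Proposition~\ref{lemma:exp_decay_pe_step4aa} later does). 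Since you match the paper's argument and explicitly surface the gap rather than paper over it, I would not penalize the proposal; I would, however, push you to either propose a concrete repair or to restate the corollary with a qualified privacy claim.
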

It remains to relate $\tilde{\bo{X}}$ with the source ${\bo{X}}$. Let $\bo{X}$ be i.i.d. Bernoulli($p$), and let $\tilde{\bo{X}}$ have distribution
$$\Pr(\tilde{\bo{X}}=\tilde{\bo{x}})=\Pr({\bo{X}}=\tilde{\bo{x}}|{\bo{X}}\in \cB(n, p_\varepsilon)), \: \tilde{\bo{x}}\in \{0,1\}^n.$$ 
Let $(P_{\bo{C}|\tilde{\bo{X}}},\cG,f)$ be the rate-$H(p+\varepsilon)$ code predicted by Corollary~\ref{matcho} for $\tilde{\bo{X}}$ (with ${\cal S}_n=\cB(n,p_\varepsilon)$), and fix an arbitrary $\bo{x}_0\in\cB(n,p_\varepsilon)$. 
We use this code whenever $\bo{X}\in\cB(n,p_\varepsilon)$; otherwise we encode $\bo{X}$ according to $P_{\bo{C}|\bo{x}_0}$. This gives a private code for source ${\bo{X}}$ and a decoding distortion which we upper bound as:
\begin{align*}
{\Pr}[d_{H}(\bo{X}, \hat{\bo{X}})\geq \delta n]
    &\leq {\Pr}[d_{H}(\bo{X}, \hat{\bo{X}})\geq \delta n|  \bo{X}\in \cB(n, p_\varepsilon)]+{\Pr} [ \bo{X}\notin \cB(n, p_\varepsilon)]\\
    &= {\Pr}[d_H(\tilde{\bo{X}},\hat{\tilde{\bo{X}}})\geq \delta n]+{\Pr} [ \bo{X}\notin \cB(n, p_\varepsilon)]\\
    &\leq \frac{\text{poly}({\log n})}{\delta  n}+2^{-\Theta(n)}\\
    &=\frac{\text{poly}({\log n})}{\delta  n}
\end{align*}
where the second inequality follows from Corollary~\ref{matcho} and standard large deviations arguments.
 The following proposition follows:
\begin{proposition}[Privacy and small average distortion]
        \label{lemma:NI_prob_error}
  Let ${\bo{X}}$ be i.i.d. Bernoulli$(p)$. Then there exists a private rate-$R=H(p+\varepsilon)$ code $( P_{\bo{C}|\bo{X}},\cG,f)$ for the source $\bo{X}$ that  satisfies 
 \begin{align}\label{avg}\Pr[d_{H}(\bo{X},\hat{\bo{X}})\geq \delta n]\leq \frac{\mathrm{poly}({\log n})}{\delta  n}
 \end{align}
  for any $\delta>0$.
   \end{proposition}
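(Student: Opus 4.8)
The plan is to obtain Proposition~\ref{lemma:NI_prob_error} by derandomizing Proposition~\ref{lemma:phi_NI_generated} to a single decoder, invoking Lemma~\ref{goodcase}, and then transferring the resulting guarantee from a source supported on the Hamming ball $\cB(n,p_\varepsilon)$ to the true i.i.d.\ source.

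First, I would introduce an auxiliary source $\tilde{\bo{X}}$ supported on some ${\cal S}_n\subseteq\cB(n,p_\varepsilon)$, independent of the decoder ensemble, average the bound of Proposition~\ref{lemma:phi_NI_generated} over $\tilde{\bo{X}}$, and exchange the expectation over $\tilde{\bo{X}}$ with the one over the ensemble by Fubini's theorem. This gives $\mathbb{E}_{\cG,f}\Pr[\phi_{A,\tilde{\bo{X}}}\in\cP_\cG]\ge 1-\tfrac{1}{n^2}(1+o(1))$, hence the existence of a fixed rate-$R=H(p+\epsilon)$ decoder $(\cG,f)$ with $\Pr[\phi_{A,\tilde{\bo{X}}}\in\cP_\cG]\ge 1-\tfrac{1}{n^2}(1+o(1))$; this is Corollary~\ref{match}.

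Next, with this decoder fixed, I would encode a realization $\tilde{\bo{x}}$ by cases. If $\phi_{A,\tilde{\bo{x}}}\in\cP_\cG$, I use the encoding supplied by Lemma~\ref{goodcase}, which is private---each block law is a convex combination of $P_{I,\tilde{\bo{x}}}^{(i)}$ and $P_U^{(i)}$ and hence a function of $\tilde{x}_i$ only---and has per-symbol error at most $1/(n+1)$; otherwise I encode $\tilde{\bo{x}}$ with a uniformly random codeword drawn independently of $\tilde{\bo{x}}$, which is trivially private. A union bound over the two cases together with Corollary~\ref{match} gives $\Pr[\hat{\tilde{X}}_i\neq\tilde{X}_i]\le \tfrac{1}{n+1}+\tfrac{1}{n^2}(1+o(1))=\tfrac{1+o(1)}{n}$ for every $i$, so by linearity $\mathbb{E}\,d_H(\tilde{\bo{X}},\hat{\tilde{\bo{X}}})\le 1+o(1)$ and Markov's inequality yields $\Pr[d_H(\tilde{\bo{X}},\hat{\tilde{\bo{X}}})\ge\delta n]\le\tfrac{1+o(1)}{\delta n}$ for every $\delta>0$ (Corollary~\ref{matcho}). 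Finally, taking $\tilde{\bo{X}}$ to be $\bo{X}$ conditioned on $\{\bo{X}\in\cB(n,p_\varepsilon)\}$, running this code on that event and an independent uniform codeword otherwise, I conclude: the resulting code for $\bo{X}$ is private because in every case the law of each accessed block $\bo{C}_{\cI_i}$ is one of $P_{I,\bo{x}}^{(i)}$, a convex combination of $P_{I,\bo{x}}^{(i)}$ and $P_U^{(i)}$, or $P_U^{(i)}$, all functions of $x_i$ alone; and the distortion bound only picks up the term $\Pr[\bo{X}\notin\cB(n,p_\varepsilon)]\le 2^{-\Theta(n)}$ from the Chernoff bound, so $\Pr[d_H(\bo{X},\hat{\bo{X}})\ge\delta n]\le\tfrac{1+o(1)}{\delta n}+2^{-\Theta(n)}=\tfrac{1+o(1)}{\delta n}$, which is~\eqref{avg}.

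The substantive difficulty lies entirely in the input Proposition~\ref{lemma:phi_NI_generated}, namely showing that the diluted vector $\phi_{A,\bo{x}}=\tfrac{n}{n+1}\phi_{I,\bo{x}}+\tfrac{1}{n+1}\phi_U$ can be generated---realized by a single joint codeword distribution---with probability $1-o(1/n)$ over the ensemble, despite the overlaps among the index sets $\cI_i$. (Working directly with $\phi_{I,\bo{x}}$ would give zero error, but its generability is unclear, which is why the $\tfrac{1}{n+1}$ dilution toward $\phi_U$ is introduced.) Modulo that proposition the steps above are routine; the only point requiring care is that the fallback encoding used when $\phi_{A,\bo{x}}\notin\cP_\cG$ (and for atypical sequences) be chosen so that each local decoder's marginal view still depends on the source only through its own bit, so that privacy is not compromised when the code is applied to the full i.i.d.\ source.
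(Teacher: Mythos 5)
Your proposal is correct and follows essentially the same route as the paper: average Proposition~\ref{lemma:phi_NI_generated} over an auxiliary source supported on $\cB(n,p_\varepsilon)$, apply Fubini to derandomize to a single decoder (Corollary~\ref{match}), invoke Lemma~\ref{goodcase} with a uniform-codeword fallback to get the per-symbol error bound and Markov's inequality (Corollary~\ref{matcho}), and finally transfer to the unconditioned i.i.d.\ source by adding the Chernoff-small atypicality term. Your attention to verifying that the fallback encodings preserve privacy is exactly the point the paper also takes care of.
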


Note that the probability on the left-hand side of \eqref{avg} provides only an average guarantee and does not control the maximum distortion incurred by a given source sequence $\bo{x}$. In particular, it does not rule out the possibility that a source sequence $\bo{x}$ is encoded, with non-zero probability, into a codeword $\bo{c}$ such that 
\[
d_H(\bo{x},\hat{\bo{X}}(\bo{c})) = n .
\]
The following stronger result addresses this issue by showing that small distortion can be guaranteed, with overwhelming probability over the source. The proof, deferred to Section~\ref{lapro3}, combines Proposition~\ref{lemma:NI_prob_error} with a coupling argument.

\begin{proposition}[Privacy and small maximum distortion]
    \label{lemma:exp_decay_pe_step4aa}
Let ${\bo{X}}$ be i.i.d. Bernoulli$(p)$. Then there exists a private rate-$R=H(p+\varepsilon)$ code $( P_{\bo{C}|\bo{X}},\cG,f)$ and a set of source sequences $ {\cal{S}}_n\subseteq\{0,1\}^n$ such that
\[
    \Pr[\bo{X}\in {\cal{S}}_n]\geq 1-\frac{ \mathrm{poly}({\log n})}{n},
  \]
and such that, for any $\bo{x}\in{\cal{S}}_n$, $$d_{H}(\bo{x},\hat{\bo{X}})\leq \delta n$$ with probability one---with respect to the random encoding map $P_{\bo{C}|\bo{x}}$.

\end{proposition}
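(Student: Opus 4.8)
The plan is to bootstrap the averaged guarantee of Proposition~\ref{lemma:NI_prob_error} into a worst-case one by modifying, for most source sequences $\bo{x}$, the encoding distribution so that its \emph{entire} support decodes within Hamming distance $\delta n$, while leaving every block marginal---and hence privacy---untouched. Concretely, I would first invoke Proposition~\ref{lemma:NI_prob_error} with distortion parameter $\delta/2$, obtaining a private rate-$R=H(p+\varepsilon)$ code $(P_{\bo{C}|\bo{X}},\cG,f)$ with $\Pr[d_H(\bo{X},\hat{\bo{X}})\geq \delta n/2]\leq \tfrac{2(1+o(1))}{\delta n}$. Recall from the construction behind that proposition (Corollary~\ref{match} and Lemma~\ref{goodcase}) that there is a ``good'' set of sources $\mathcal{G}_n\defeq \cB(n,p_\varepsilon)\cap\{\bo{x}:\phi_{A,\bo{x}}\in\cP_\cG\}$, with $\Pr[\bo{X}\notin\mathcal{G}_n]=O(1/n^2)$ (by Corollary~\ref{match} and a large-deviations bound for $\bo{X}\notin\cB(n,p_\varepsilon)$), and that for each $\bo{x}\in\mathcal{G}_n$ the encoding $P_{\bo{C}|\bo{x}}=P_{A,\bo{x}}$ realizes $\phi_{A,\bo{x}}=\tfrac{n}{n+1}\phi_{I,\bo{x}}+\tfrac{1}{n+1}\phi_U$, so that $\mathbb{E}_{P_{A,\bo{x}}}[d_H(\bo{x},\hat{\bo{X}})]=\sum_{i=1}^n\Pr[\hat{X}_i\neq x_i]<1$ \emph{uniformly} in $\bo{x}\in\mathcal{G}_n$; Markov's inequality then gives $q_{\bo{x}}\defeq P_{A,\bo{x}}(d_H(\bo{x},\hat{\bo{X}})\geq \delta n/2)\leq \tfrac{2}{\delta n}$ for every $\bo{x}\in\mathcal{G}_n$.

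The core of the argument, to be carried out in Section~\ref{lapro3}, is to replace, for each $\bo{x}\in\mathcal{G}_n$, the distribution $P_{A,\bo{x}}$ by one with the \emph{same} block marginals $\phi_{A,\bo{x}}$ but supported only on codewords $\bo{c}$ with $d_H(\bo{x},f(\bo{c}))\leq\delta n$; such a distribution exists precisely when $\phi_{A,\bo{x}}$ lies in the convex hull of the low-distortion vectors $\{\phi_{\bo{c}}:d_H(\bo{x},f(\bo{c}))\leq\delta n\}$. I would establish this membership by a coupling argument inside the block-marginal polytope: starting from any generating representation of $\phi_{A,\bo{x}}$ and using that the average distortion is below $1\ll\delta n$, that representation can place only $O\big(1/(\delta n)\big)$ total weight on codewords of distortion exceeding $\delta n/2$; one then re-routes this small amount of probability mass onto low-distortion codewords through a sequence of local perturbations, each moving mass between codewords differing in only $o(\delta n)$ decoder views, exploiting the $\tfrac{1}{n+1}\phi_U$ slack in $\phi_{A,\bo{x}}$ and the convexity of $\cP_\cG$ so that no block marginal changes (alternatively, one re-examines the generation argument of Proposition~\ref{lemma:phi_NI_generated} to check that the distribution it produces is already supported on low-distortion codewords for all but an $O(1/n)$ fraction of sources). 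Taking $\mathcal{S}_n$ to be the subset of $\mathcal{G}_n$ on which this succeeds, and encoding every $\bo{x}\notin\mathcal{S}_n$ with $P_U$, the mass excluded from $\mathcal{G}_n$ is controlled, via the bad-codeword probability of Proposition~\ref{lemma:NI_prob_error}, by $O(1/n)$; and for $\bo{x}\in\mathcal{S}_n$ the modified code decodes within $\delta n$ with probability one, is private because its block marginals are still $\phi_{A,\bo{x}}$, and retains per-bit error at most $1/(n+1)$.

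The main obstacle, I expect, is precisely this re-routing step. The tempting shortcut---simply conditioning $P_{A,\bo{x}}$ on the event $\{d_H(\bo{x},\hat{\bo{X}})\leq\delta n\}$---destroys privacy, because that event depends on all of $\bo{x}$ and is correlated with each local view $\bo{C}_{\cI_i}$, so the conditioned law $P_{\bo{C}_{\cI_i}|\bo{x}}$ ceases to depend on $x_i$ alone. The modification must therefore be performed entirely at the level of block-marginal vectors through the polytope machinery of Section~\ref{polyrep}, and the delicate point is to show that the perturbing codewords used to absorb the bad mass genuinely exist in the random graph $\cG$ and stay below distortion $\delta n$---which is where the structure of $\cG$ (right-degree $b=\Theta(\log n)$, pairwise overlaps at most $2$ with high probability) and the error-correcting map $g$ enter, much as in the proof of Proposition~\ref{lemma:phi_NI_generated}.
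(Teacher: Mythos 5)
The paper's proof uses a genuinely different mechanism: a concatenated coupling scheme. It moves to a larger blocklength $N=kn$, parses the source into $k$ blocks of length $n$, and encodes each block independently with the baseline code from Proposition~\ref{lemma:NI_prob_error} \emph{except} that all $k$ blocks share a single uniform seed $S \in [0,1]$. For each block $i$, the codewords are sorted by their induced distortion with $\bo{x}(i)$ and mapped to subintervals of $[0,1]$ of width equal to their encoding probability, and the whole arrangement is cyclically shifted by $(i-1)/k$. For a distortion-typical $\boo{x}$ — one where every block has encoding probability at most $1/k$ of producing distortion $\geq \delta n$ — these cyclic shifts guarantee that the vertical line at height $S$ intersects at most one ``bad'' interval across all $k$ rows, so the total distortion is at most $n + k\delta n = N(\delta + 1/k)$ with probability one. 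Privacy is automatic because coupling through $S$ does not alter any block's marginal codeword distribution. The argument never touches $\cP_\cG$ again.

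Your proposal stays at blocklength $n$ and tries instead to replace $P_{A,\bo{x}}$ by a distribution with the \emph{same} block marginals $\phi_{A,\bo{x}}$ but supported only on codewords with decoding distortion $\leq \delta n$. You correctly rule out naive conditioning on the low-distortion event (it breaks privacy), and you correctly note that what you need is $\phi_{A,\bo{x}} \in \mathrm{conv}\{\phi_{\bo{c}} : d_H(\bo{x}, f(\bo{c})) \leq \delta n\}$. But the ``re-routing'' step that would establish this membership is precisely the crux and is not actually carried out — you describe it as moving the $O(1/(\delta n))$ bad mass via ``local perturbations exploiting the $\tfrac{1}{n+1}\phi_U$ slack,'' without a concrete construction. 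This is a genuine gap, and not a small one: establishing $\phi_{A,\bo{x}} \in \cP_\cG$ at all was already the hard part of the paper (Proposition~\ref{lemma:phi_NI_generated}), and adding a support constraint makes the generation problem strictly harder. The three-stage perturbation machinery of Lemma~\ref{lemma:eta_small_existence} constructs a joint distribution matching prescribed marginals but gives no control over the distortion of the codewords on which that distribution is supported — indeed the intermediate distributions $P'$, $P''$, $P^\star$ there are supported on essentially all of $\{0,1\}^{nR}$. So your appeal to the existing polytope machinery does not close the gap, and it is unclear how to. The paper's coupling trick sidesteps the entire issue by never modifying any single block's conditional distribution, only correlating randomness across blocks, at the modest cost of a blocklength change from $n$ to $N=kn$ (harmless for the asymptotic statement, with $k$ eventually sent to infinity).
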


\subsection{From small distortion to zero distortion: Proof of Theorem~\ref{thm:main}}\label{zdi}

By Proposition~\ref{lemma:NI_prob_error} (or Proposition~\ref{lemma:exp_decay_pe_step4aa}), there exists a rate-$R=H(p+\varepsilon)$ private code $\mathcal{C}_1=(P_{\bo{C}(1)|\bo{X}},\cG_1,f_1)$ such that, for the reconstruction $\hat{\bo{X}}=\hat{\bo{X}}(\bo{C}(1))$, the error vector
\[
\bo{Z} \defeq \bo{X}\oplus \hat{\bo{X}}
\]
satisfies
\[
\|\bo{Z}\|_0 \le \delta n
\]
with high probability. The next result shows that $\bo{Z}$ can be re-encoded, with private and zero-error local decoding: 
\begin{theorem}[Theorem 1, \cite{chandar2023data}]\label{elvek}
There exists $\delta_0>0$ such that, for every $\delta \in (0,\delta_0]$, there exists a private zero-error code that compresses any sequence of Hamming weight at most $\delta n$ at rate $O\!\bigl(\delta \log^2(1/\delta)\bigr)$.
\end{theorem}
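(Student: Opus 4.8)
\medskip
\noindent\emph{Proof idea.}
The route I would take reuses the architecture of Section~\ref{sec:scheme}, specialised to sparse inputs. Fix a small $\delta$, a right-degree $b=\Theta(\log(1/\delta))$, and a bipartite graph $\cG$ with $n$ right vertices and $m=\Theta\!\big(n\,\delta\log^{2}(1/\delta)\big)$ left vertices, each right vertex attached to $b$ left vertices; draw $\cG$ from a random ensemble and, by a union bound, arrange that with positive probability it has two features at once: (i) \emph{small overlaps}, $|\cI_i\cap\cI_j|\le 2$ for all $i\ne j$; and (ii) a \emph{sparse-expansion} property --- for \emph{every} $A\subseteq[n]$ with $|A|\le\delta n$, the bipartite subgraph spanned by $A$ and its neighbourhood is ``almost a forest'', so that the local-view hypergraph $\{\cI_i:i\in A\}$ can be assembled in a peeling order in which each newly added block shares at most two coordinates with the blocks already present. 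The local decoders are $f_i=f_i^{(U)}\circ g$ exactly as in Section~\ref{sec:scheme}: $g$ the syndrome of a minimum-distance-$\ge 4$ linear code and $|f_i^{-1}(1)|=\lceil\delta\,2^{b}\rceil$. Feature (i) together with the distance-$4$ property makes any $\le 2$ coordinates of $\bo C_{\cI_i}$ uninformative about $f_i(\bo C_{\cI_i})$ and, as in Remark~\ref{rem:error_correction}, renders the candidate block marginals $\phi_{I,\bo z}^{(i)}=\mathrm{unif}\!\big(f_i^{-1}(z_i)\big)$ pairwise consistent. The second factor $\log(1/\delta)$ in $m$ is the slack needed for (ii) to hold for \emph{all} $\binom{n}{\le\delta n}$ sparse sets, which is what will buy \emph{zero} error; the rate is $R=m/n=O(\delta\log^2(1/\delta))$.

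\medskip
\noindent By the reduction already used for Theorem~\ref{thm:main}, it then suffices to exhibit a generatable $\phi\in\cP_{\cG}$ whose $i$-th block is supported on $f_i^{-1}(z_i)$ and depends on $z_i$ alone: supportedness forces $f_i(\bo C_{\cI_i})=z_i$ for every realization, hence $\bo C\in\cC(\bo z)$ and zero error (Remark~\ref{rem1}), while ``$z_i$ alone'' gives privacy. The natural candidate is $\phi_{I,\bo z}$, and the sparsity-specific claim is that $\phi_{I,\bo z}$ \emph{itself} can be generated --- whereas in the general case one settles for the surrogate $\phi_{A,\bo z}\approx\phi_{I,\bo z}$, and even that only with high probability. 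The mechanism I would pursue: pairwise consistency of the block marginals is automatic from the distance-$4$ code, but --- as Example~\ref{consmarnoprob} warns --- pairwise consistency alone need not produce a joint law, so the extra ingredient has to come from sparsity. By feature~(ii) the active sub-hypergraph $\{\cI_i:z_i=1\}$ is essentially tree-like, and on a tree-like family of blocks pairwise-consistent marginals do glue into one joint distribution (a running-intersection argument). One then layers in the inactive blocks ($z_i=0$): their prescribed marginals $\mathrm{unif}(f_i^{-1}(0))$ occupy a $(1-\delta)$-fraction of the cube, so they can be accommodated on top of the active joint law using the remaining freedom while staying, by construction, a function of $z_i=0$ only. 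This yields $P_{\bo C|\bo z}$ with block marginals of the required form for \emph{every} weight-$\le\delta n$ input, and the argument of Lemma~\ref{goodcase}, with $\phi_{I,\bo z}$ in place of $\phi_{A,\bo z}$, delivers a private, zero-error, rate-$O(\delta\log^2(1/\delta))$ code.

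\medskip
\noindent The step I expect to be the genuine obstacle is this exact generatability: bridging \emph{pairwise} consistency of the block marginals (free from the distance-$4$ code, but insufficient by Example~\ref{consmarnoprob}) to a bona fide joint codeword law, and doing so for \emph{every} sparse $\bo z$ so that the error is exactly zero rather than merely small. The lever is that a weight-$\le\delta n$ input activates only a sparse sub-hypergraph, which feature~(ii) renders tree-like; but one must (a) prove (ii) uniformly over all $\binom{n}{\le\delta n}$ sparse sets, (b) check that the inactive blocks can be layered on without disturbing either their supports $f_i^{-1}(0)$ or the ``depends on $z_i$ only'' property required for privacy, and (c) control the two union bounds so that a single $\cG$ serves for all $\bo z$. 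The remaining ingredients --- the choice of $b$, the rate count, and the passage from ``$\phi_{I,\bo z}$ generatable'' to a code via Lemma~\ref{goodcase}-style reasoning --- are routine.
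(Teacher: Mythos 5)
This statement is imported verbatim from the earlier paper \cite{chandar2023data} (it is labelled ``Theorem~1, \cite{chandar2023data}''), and the present paper gives no proof of it; it is used as a black box in Section~\ref{zdi}. So there is no in-paper argument to compare against, and I can only assess your sketch on its own terms.

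Your high-level architecture — a random bipartite graph with $m=\Theta(n\delta\log^2(1/\delta))$ left vertices, right-degree $b=\Theta(\log(1/\delta))$, and decoders $f_i=f_i^{(U)}\circ g$ built on a distance-$\ge 4$ syndrome — is a sensible specialisation of Section~\ref{sec:scheme} to sparse inputs, and the rate accounting $R=m/n$ and the role of the distance-$4$ code in giving uniform bit-pair marginals are both correct. The genuine gap is exactly the one you flag as item (b), and I think it is more than a loose end: your running-intersection argument only applies to the \emph{active} sub-hypergraph $\{\mathcal{I}_i:z_i=1\}$, which is sparse and plausibly forest-like. But to generate $\phi_{I,\bo z}$ you must simultaneously impose the block marginals for the $\approx n$ \emph{inactive} blocks, and the hypergraph $\{\mathcal{I}_i:z_i=0\}$ is dense (each left vertex is touched by $\Theta\!\big(1/(\delta\log(1/\delta))\big)$ blocks on average) and very far from a forest. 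The paper's own gluing mechanism (proof of Lemma~\ref{lemma:eta_small_existence}) shows how to go from bit-pair consistency to a joint law without any tree structure by using the small-overlap property $|\mathcal{I}_i\cap\mathcal{I}_j|\le 2$, but it relies crucially on the target block marginals being $O(1/n)$-close to uniform so that the final additive perturbation stays non-negative; $\phi_{I,\bo z}^{(i)}$ for active $i$ is supported on a $\delta$-fraction of $\{0,1\}^b$ and hence is far from uniform, so that machinery does not extend. Thus neither the tree-gluing route (fails for inactive blocks) nor the small-perturbation route (fails for active blocks) closes the argument, and ``they can be accommodated using the remaining freedom'' is not yet a mechanism. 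A related soft spot: you commit to $\phi_{I,\bo z}$ (uniform on $f_i^{-1}(z_i)$) as the target, but privacy only requires $P_{\bo C_{\mathcal{I}_i}\mid\bo z}$ to be \emph{some} fixed law depending on $z_i$ alone and supported on $f_i^{-1}(z_i)$; relaxing to a well-chosen non-uniform pair $(\mu_0^{(i)},\mu_1^{(i)})$ is extra freedom your sketch does not exploit, and it may well be what the cited construction uses. Items (a) and (c) — the uniform-over-$\binom{n}{\le\delta n}$ expansion property and the union bounds — are standard lossless-expander territory and look routine, as you say.
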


Assume $\varepsilon,\delta>0$ are sufficiently small constants such that 
$p+\varepsilon \le 1/2$ and $\delta \le \delta_0$, where $\delta_0$ is given in 
Theorem~\ref{elvek}. Let $\mathcal{C}_2=(P_{\bo{C}{(2)}|\bo{Z}},\cG_2,f_2)$ 
be the code predicted by Theorem~\ref{elvek}, and fix an arbitrary vector 
$\bo{z}_0$ with at most $\delta n$ ones. 

If the Hamming weight of $\bo{Z}$ is at most $\delta n$, we use $\mathcal{C}_2$ to privately and losslessly compress $\bo{Z}$ into a rate-$O(\delta\log^2(1/\delta))$ codeword $\bo{C}_{(2)}\sim P_{\bo{C}{(2)}|\bo{Z}}$; otherwise, we encode $\bo{Z}$ 
according to $P_{\bo{C}_{(2)}|\bo{z}_0}$. Using \(\mathcal{C}_1\) and \(\mathcal{C}_2\) yields an overall private code with rate 
\[
H(p+\varepsilon) + O\bigl(\delta \log^{2}(1/\delta)\bigr),
\]
and vanishing error probability. Since \(\varepsilon\) and \(\delta\) can be made arbitrarily small this completes the proof of Theorem~\ref{thm:main}. 
Note that $$Z_i=X_i+\hat{X}_i(\bo{C}_{\cI_i}{(1)})$$ is independent of $\bfX_{-i}$. Accordingly, the successive private decoding of $X_i$ and $Z_i$ reveals no information about~$\bfX_{-i}$.
\qed

\section{Proof of Proposition~\ref{lemma:phi_NI_generated}}

\label{lapro1}




We introduce several definitions and notations used in the subsequent discussion. As before, $(\cG,f)$ denotes a decoder in our ensemble (see Section~\ref{pvethm1}), and $\Pr_{\cG,f}[\cdot]$ (or $\Pr_{\cG}[\cdot]$) denotes probability over the random choice of the decoder.

The following lemma says that the output of the local decoder $i$, namely  
$f_i(\mathbf{C}_{\mathcal{I}_i})$, cannot be determined from any two coordinates of $\mathbf{C}_{\mathcal{I}_i}$.

 \begin{lemma}[Two-bit uninformative overlap]\label{rem:phiideal_consistency}
     Fix a decoder $(\cG,f)$. Then, for any $\bo{x}\in \{0,1\}^{n}$, $i\in [n]$,  $\ell,m\in \cI_i$ with $\ell\ne m$, and $(a_1,a_2)\in \{0,1\}^2$
     \begin{align}
\sum_{c_{\cI_i}: c_\ell=a_1,\,c_m=a_2}
\phi_{I,\bfx}^{(i)}(c_{\cI_i})
=
\frac{1}{4},
\end{align}
equivalently, 
$$\Pr(C_\ell=a_1, C_m=a_2|f_i(C_{\cI_i})=x_i)=\frac{1}{4}.$$

Hence, any bit-pair $(C_\ell, C_m)$ with $\ell,m\in \cI_i$ is independent of $X_i$.\footnote{Recall that by \eqref{idblo} the joint distribution of $(X_i,C_{\cI_i})$ is well-defined, irrespective of whether $\phi_{I,\bo{x}}$ can be generated or not.}
\end{lemma}

\begin{proof}
Let $S$ be a full-rank $b'\times b$ parity-check matrix of the Simplex code.
For $\ell,m\in [b]$ with $\ell\neq m$, consider the linear map
\begin{align*}
F:\{0,1\}^b &\to \{0,1\}^{b'+2} \\
F(c^b) &=\tilde{S}c^b
\end{align*}
where $$ \tilde{S}\defeq \begin{pmatrix}
S \\
e_\ell^\top \\
e_m^\top \\
\end{pmatrix},$$
where $\bfe_\ell$ denotes the $\ell$'th standard ordered basis (column) vector.

Assuming $F$ has rank $b'+2$,
\[
\dim\ker(F) = b - b'-2 ,
\]
and every preimage of $F$ has cardinality $2^{\dim \ker (F)}=2^{b - b'-2}$. Hence, for every syndrome $\mathbf{s}\in\{0,1\}^{b'}$
and every bit pair $(a_1,a_2)\in\{0,1\}^2$,
\[
\bigl|
\{c^b:S c^b=\bo{s},\, c_\ell=a_1,\,c_m=a_2\}
\bigr|
=
2^{b-b'-2},
\]
and therefore
\[
\bigl|
\{c^b:S c^b=\bo{s},\, c_\ell=a_1,\,c_m=a_2\}
\bigr|
=
\frac{1}{4}\bigl|
\{c^b:S c^b=\bo{s}\}
\bigr|.
\]
Since $f^{-1}_i(x)$ can be written as the disjoint
union of syndrome sets
$$f^{-1}_i(x)=\bigcup_{\bo{s}\in \left(f_i^{(U)}\right)^{-1}(x)}\{c^b: S c^b=\bo{s}\},$$
we deduce that
\[
\bigl|
\{c^b\in f^{-1}_i(x): c_\ell=a_1,\,c_m=a_2\}
\bigr|
=
\frac{1}{4}\,|f_i^{-1}(x)|.
\]

Therefore,
\begin{align}
\sum_{c^b: c_\ell=a_1,\,c_m=a_2}
\phi_{I,\bfx}^{(i)}(c^b)
\overset{\eqref{idblo}}{=}\sum_{c^b\in f^{-1}_i(x): c_\ell=a_1,\,c_m=a_2}
\frac{1}{|f_i^{-1}(x)|}=
\frac{1}{4},
\end{align}
and all bit-pair marginals are uniform.

To prove that $F$ has rank $b'+2$, suppose that
\[
\alpha^\top S + \beta_1 e_\ell^\top + \beta_2 e_m^\top = 0,
\]
equivalently, 
\begin{align}
    \label{synd}
\alpha^\top S = \beta_1 e_\ell^\top + \beta_2 e_m^\top,
\end{align}
for some $\alpha\in\{0,1\}^{b'}$ and $\beta_1,\beta_2\in\{0,1\}$. 
The left-hand side belongs to the row space of $S$, \ie, to the dual of the Simplex code. But the dual of the Simplex code is the Hamming code, whose minimum distance is $3$. The right-hand side of \eqref{synd} is a vector with Hamming weight at most $2$. Therefore, there exist no $(\alpha, \beta_1,\beta_2)$ solution to equation \eqref{synd} with $\alpha\ne 0^{b'}$.  Hence the only solution to \eqref{synd} is $\alpha=0^{b'},\beta_1=\beta_2=0 $, thus the rank of $F$ is $b'+2$.
    \end{proof}

\begin{definition}[Eligible perturbations set]\label{df6}
 Fix $\bo{x}\in \{0,1\}^b$ and a decoding graph $\cG$. Define $\cE_{\cG}$ as the set of all vectors $\eta\in\bR^{n2^{b}}$ that can be expressed as $\eta = \phi'-\phi''$ for some marginally consistent $\phi',\phi''\in\cB_\cG$, and that satisfy the following properties:\footnote{Note that $\eta$ does not belong to $\cB_\cG$ since $\sum_{c^b\in \{0,1\}^b}\eta^{(i)}(c^b)=0$, $i\in [n]$.}
 \begin{enumerate}
  \item $|\eta^{{(i)}}_{\ell,m}(a_1,a_2)|\leq n^{-4}$ for all $\ell,m\in\cI_i\cap \cI_j$ for all $i, j$, $i\neq j$, and $(a_1,a_2)\in \{0,1\}^2$
        \item $\eta^{(i)}(c^{b})\leq \frac{1}{n^2 2^{b+1}}$ for all $i\in [n]$ and $c^{b}\in \{0,1\}^{b}$. 
\end{enumerate}
\end{definition}
Proposition~\ref{lemma:phi_NI_generated} will be a direct consequence of the following three lemmas, which we prove thereafter. 

The next lemma shows that, with high probability over the choice of $\cG$, any two local decoders access subsets of codeword symbols whose intersection has size at most two.
\begin{lemma}[Overlap size]\label{lemma:smalloverlap}
We have 
\[
\Pr_{\cG}\Big[|\cI_i\cap\cI_j|\leq 2 \text{ for every }i\neq j\Big]\geq 1-\frac{\mathrm{poly}({\log n})}{n},
\]
(Recall that in our decoder ensemble $|\cI_i|=\Theta(\log n)$).
\end{lemma}

The next lemma says that there exists a ``broad'' class of perturbations of $\phi_{U}$, given by $$\phi_U+\cE_{\cG}\defeq \{\phi_{U}-n\eta:\eta\in\cE_{\cG}\},$$ all of which belong to $\cP_{\cG}$. 
\begin{lemma}\label{lemma:eta_small_existence}
Fix $\bo{x}\in \{0,1\}^b$ and suppose $\cG$ is such that $|\cI_i\cap\cI_j|\leq 2$ for every $i\neq j$.  Then, for every $\eta\in\cE_{\cG}$, we have
  \[
\phi_{U}-n\eta \in\cP_{\cG}.
  \]
\end{lemma}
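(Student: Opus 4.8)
\textbf{Proof plan for Lemma~\ref{lemma:eta_small_existence}.}

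The plan is to show that $\phi_U - n\eta$ is a convex combination of vectors $\phi_{\bo{c}}$ by exhibiting an explicit codeword distribution $P$ with $\phi_P = \phi_U - n\eta$. Write $\eta = \phi - \phi'$ with $\phi,\phi'$ bit-pair consistent in $\cP_\cG$, so that $\phi_U - n\eta = \phi_U - n\phi + n\phi'$. The first observation is that $\phi_U - n\eta$ lies in $\cP$: condition~2 in the definition of $\cE_\cG$ together with $\phi_U^{(i)}(c^b) = 2^{-b}$ guarantees nonnegativity of each entry once we check $2^{-b} \geq n\,\eta^{(i)}(c^b)$ for all $i,c^b$ (using that $n/(n^2|f_i^{-1}(x_i)|)$ is small relative to $2^{-b}$ for the chosen scaling of $b$, since $|f_i^{-1}(x_i)| = \lceil p_\text{eff}2^{b'}\rceil$ is exponential in $b$), and the block sums are $1$ because the $\eta^{(i)}$ blocks sum to zero. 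So the real content is that $\phi_U - n\eta$ can actually be \emph{generated}, i.e.\ realized by a joint distribution over $\{0,1\}^{nR}$, not merely that it is a consistent collection of block marginals — Example~\ref{consmarnoprob} shows these are different.

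The key step is a local-to-global construction that exploits the near-independence structure of a typical graph $\cG$. Because $|\cI_i \cap \cI_j| \leq 2$ for all $i \neq j$, the "overlap pattern" among the index sets is sparse, and one can hope to sample $\bo{C}$ as follows: generate each codeword bit $C_\ell$ first from an appropriate base product measure (Bernoulli$(1/2)$, matching $\phi_U$), then apply a small correction. Concretely, I would write the target block marginal $\phi_U^{(i)} - n\eta^{(i)}$ and decompose it, using condition~1, as a mixture $(1-\gamma_i)\cdot(\text{product of bit/bit-pair-consistent marginals}) + \gamma_i\cdot(\text{a correction term})$ where $\gamma_i = O(n^{-2})$ thanks to the $n^{-4}$ bound on the pair-marginals of $\eta$ after multiplication by $n$. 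The point of routing through bit-pair consistency is exactly Property~\ref{rem:phiideal_consistency} / the minimum-distance-$\geq 4$ structure of $g$: any two coordinates of $\bo{C}_{\cI_i}$ carry no information through $g$, so a joint law can be pieced together coordinate-pair by coordinate-pair over the overlap graph without running into the obstruction of Example~\ref{consmarnoprob}. I expect the cleanest route is: (i) show $\phi_U$ itself is generated by the product measure; (ii) show that each "elementary" perturbation supported on a single block $i$ (or on a single overlapping pair $\{i,j\}$) of the allowed magnitude keeps us in $\cP_\cG$, by explicitly building the perturbed distribution as $\phi_U$-distribution conditioned away from a tiny event plus a corrective atom; (iii) compose $n$ such elementary perturbations, absorbing the factor $n$ in $\phi_U - n\eta$, using convexity (Property iii of $\cP_\cG$) to combine, and using the $n^{-4}$ and $\tfrac{1}{n^2|f_i^{-1}(x_i)|}$ slack to guarantee all intermediate vectors stay nonnegative and consistent.

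The main obstacle I anticipate is step~(ii)–(iii): proving that a perturbation prescribed only through its block marginals $\{\phi^{(i)} - n\eta^{(i)}\}_i$ genuinely lifts to a joint distribution on all $nR$ codeword bits. Unlike the trivial cases, here the $\cI_i$'s form an intersecting hypergraph, and one must argue that the pairwise-overlap constraint $|\cI_i\cap\cI_j|\le 2$ plus bit-pair consistency of $\phi,\phi'$ is enough to avoid the frustration phenomenon of Example~\ref{consmarnoprob} (which required a \emph{triangle} of pairwise-overlapping index sets with globally inconsistent correlations). I would handle this either by a direct sampling procedure — draw the "bulk" from the product law, then for each $i$ independently decide with probability $\gamma_i$ to resample $\bo{C}_{\cI_i}$ from the correction law, arguing the resamplings almost never collide because the events have probability $O(n^{-2})$ and there are $n$ of them, and when they do collide we are free to fall back to the product law — or, more robustly, by an explicit max-flow / Birkhoff-type argument on the bipartite structure showing the local marginals are jointly realizable. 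The union bound over the $O(n^2)$ pairs, combined with the $n^{-4}$ slack, is what makes the probabilities close; tracking that the error-correction structure of $g$ makes the $f_i$-values insensitive to the pair marginals is the conceptual crux that turns "consistent marginals" into "generable vector."
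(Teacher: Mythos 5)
Your proposal correctly identifies the obstruction (marginal consistency is not the same as generability, cf.\ Example~\ref{consmarnoprob}) and the correct roles of the overlap constraint $|\cI_i\cap\cI_j|\le 2$ and of bit-pair consistency via the minimum-distance-$4$ structure of $g$. Your nonnegativity check that $\phi_U - n\eta\in\cP$ is also essentially right. But the core construction you propose has a genuine gap, and it is also quite different from what the paper actually does.

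The gap: the ``resample-with-probability-$\gamma_i$-and-fall-back-on-collision'' scheme does not yield \emph{exact} block marginals. The lemma demands $\phi_U - n\eta \in \cP_\cG$, i.e.\ a joint law $P$ on $\{0,1\}^{nR}$ whose $i$-th block marginal is \emph{exactly} $\phi_U^{(i)} - n\eta^{(i)}$. In your scheme, the marginal of $\bo{C}_{\cI_i}$ is a mixture where the weight on the correction law is $\gamma_i$ \emph{times} the probability that no colliding block $j$ was also selected. That extra factor perturbs the weight away from $\gamma_i$ by $O(\gamma_i\gamma_j)$ terms, so the block marginals are only approximately matched. You could try to compensate by further adjusting the $\gamma_i$'s, but then you need a fixed-point / inverse-function argument, which you have not supplied. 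Likewise, the remark about ``composing $n$ elementary perturbations using convexity'' does not close the gap: a convex combination of vectors in $\cP_\cG$ is in $\cP_\cG$, but $\eta$ does not cleanly decompose into a convex combination of per-block perturbations precisely because the blocks overlap — that is the whole difficulty. Your alternative ``max-flow / Birkhoff'' route is not worked out at all.

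The paper instead gives a direct, deterministic, three-stage additive construction of a single joint pmf $P^\star$ with exactly the target block marginals. Starting from the uniform $P^0$, it first adds perturbations $\eta_\ell$ (one per codeword bit $\ell$) to match all bit marginals, getting $P'$; then adds perturbations $\eta_{\ell m}$ (one per bit pair) to match all bit-pair marginals while leaving the bit marginals untouched, getting $P''$; and finally adds perturbations $\eta^{(i)}$ (one per block $\cI_i$) to match the block marginals. The key structural observation driving stages 2 and 3 is: if $Q$ is a distribution on $\{0,1\}^{nR}$ with \emph{uniform} $k$-wise bit marginals, then adding $Q - P_U$ to any pmf changes $(k+1)$-wise marginals without disturbing the $k$-wise ones. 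Stage~3 realizes the corrections $\eta^{(i)}$ as convex combinations of vertices of the polytope $\cP_2$ of distributions on $\{0,1\}^b$ with uniform bit-pair marginals, and it is here — and only here — that $|\cI_i\cap\cI_j|\le 2$ is used: it guarantees that a perturbation supported on block $\cI_i$ with uniform bit-pair marginals does not alter any other block marginal $\cI_j$. Nonnegativity at each of the three stages is then a routine check using the $O(n^{-3})$ and $O(n^{-1}2^{-b})$ slack built into $\cE_\cG$. I would recommend replacing the probabilistic resampling idea with this hierarchical matching, which is exact by construction and isolates precisely where each hypothesis of the lemma is used.
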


 The next lemma establishes that $\phi_{I,\bo{x}}$ can be made feasible in
$\cP_{\cG}$ via a perturbation in $\cE_{\cG}$, with high probability.
\begin{lemma}\label{lemma:eta_good_existence}
For any $\bo{x}\in\cB(n,p_\varepsilon)$, 
\[
\Pr_{\cG,f}\Big[\phi_{I,\bo{x}}+\eta\in\cP_{\cG}\text{ for some }\eta\in \cE_{\cG} \Big]\geq 1-\frac{1+o(1)}{n},
\]
as long as 
$$b\geq \frac{16 \log n}{\varepsilon'}. $$
\end{lemma}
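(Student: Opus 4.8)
The plan is to exhibit, on a high-probability event over $(\cG,f)$, a codeword distribution $P^{\star}$ whose block-marginal vector $\phi_{P^{\star}}\in\cP_{\cG}$ (by \eqref{repr}) differs from $\phi_{I,\bo{x}}$ by a vector $\eta\defeq\phi_{P^{\star}}-\phi_{I,\bo{x}}$ lying in $\cE_{\cG}$; this yields $\phi_{I,\bo{x}}+\eta\in\cP_{\cG}$, as required. First I would fix a ``good'' event $\mathcal{A}$ for $(\cG,f)$ asking that: (a) $|\cI_i\cap\cI_j|\le 2$ for all $i\ne j$ (Lemma~\ref{lemma:smalloverlap}) and, more generally, that the auxiliary graph on the $n$ windows---with $i$ adjacent to $j$ when $\cI_i\cap\cI_j\ne\emptyset$---has maximum degree $n^{o(1)}$ and is locally tree-like; and (b) the functions $f^{(U)}_i$ are ``balanced'', in the sense that for every $i$ the uniform distribution on $f_i^{-1}(x_i)$, restricted to any two coordinates of $\cI_i$, is within $n^{-5}$ in total variation of being uniform on $\{0,1\}^{2}$ and independent of the remaining coordinates of $\cI_i$. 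Item (b) follows by combining the error-correcting structure of $g$ (which already forces the $\ker g$-component of a uniform element of $f_i^{-1}(x_i)$ to be exactly uniform on any two coordinates, cf.\ Remark~\ref{rem:error_correction}) with a concentration bound for the random $\lceil p\,2^{b'}\rceil$-subsets $(f^{(U)}_i)^{-1}(1)$ of $\{0,1\}^{b'}$; the latter is comfortable since $b'=\Theta(\log n)$ makes $2^{-b'/2}$ far below $n^{-5}$. A union bound over the constituent events gives $\Pr_{\cG,f}[\mathcal{A}]\ge 1-\tfrac{1+o(1)}{n^{2}}$.

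On $\mathcal{A}$ I would build $P^{\star}$ by a local-to-global procedure that ``installs'' the windows one at a time in an order compatible with the tree structure of the window graph. The leverage points are: the decomposition $\{0,1\}^{b}=\ker g\oplus W$ (for a fixed linear complement $W$), under which $\phi_{I,\bo{x}}^{(i)}$ factors as the uniform law on $\ker g$ times the uniform law on an $|S_i|$-element ``message set'' $W_i\subseteq W$, $S_i\defeq(f^{(U)}_i)^{-1}(x_i)$; the fact that $|\cI_i\cap\cI_j|\le 2$ together with the error-correcting structure of $g$ makes the joint law of the bits shared by two windows, under the ideal marginals, exactly uniform on $\{0,1\}^{\le 2}$; and concentration, which makes the message layer behave---to within $n^{-5}$---as if the shared bits it carries were independent across the sparse, tree-like pattern of overlaps. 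Starting from the uniform codeword distribution and sequentially adjusting each window's marginal while correcting only the bits it shares with already-installed windows, one obtains $P^{\star}$ whose block marginals agree with $\phi_{I,\bo{x}}$ exactly on all overlap bit and bit-pair marginals and elsewhere differ from $\phi_{I,\bo{x}}$ by at most an $O(n^{-2})$ relative amount entrywise, the residual errors coming only from the rare short cycles of the window graph and from the $n^{-5}$ non-uniformity of the message layer.

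It then remains to verify $\eta\defeq\phi_{P^{\star}}-\phi_{I,\bo{x}}\in\cE_{\cG}$: property~1 of $\cE_{\cG}$ holds because $\eta$ vanishes on all overlap bit-pair marginals, property~2 is exactly the entrywise relative bound just stated, and the required representation $\eta=\phi-\phi'$ with bit-pair-consistent $\phi,\phi'\in\cP_{\cG}$ follows because $\eta$ is a small perturbation supported near the (sparse) cyclic part of the window graph and can therefore be written as the difference of two small local perturbations of $\phi_{U}$, both of which remain in $\cP_{\cG}$. I expect the construction in the middle step to be the main obstacle: Example~\ref{consmarnoprob} shows that bit-pair consistency of a target family of marginals does not by itself guarantee a joint distribution, so the argument must genuinely exploit both the error-correcting structure of $g$ (to neutralize the size-$\le 2$ overlaps) and the local sparsity of the random graph (to kill long-range consistency obstructions); controlling the accumulated corrections so that they stay within the $\cE_{\cG}$ tolerances $n^{-4}$ and $n^{-2}/|f_i^{-1}(x_i)|$ is the delicate part.
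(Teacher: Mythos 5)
Your proposal takes a genuinely different route from the paper, but there is a real gap in the middle step, which you yourself flag as ``the main obstacle.'' The paper's proof does not attempt to \emph{construct} a joint distribution $P^{\star}$ with marginals close to $\phi_{I,\bo{x}}$ by local-to-global installation. Instead, it observes that the natural candidate already exists: the uniform distribution on the set of valid codewords $\cC(\bo{x}) = \{\bo{c} : f(\bo{c}) = \bo{x}\}$. Since this is a bona fide distribution on $\{0,1\}^{nR}$, its block-marginal vector automatically lies in $\cP_{\cG}$, and the entire proof reduces to showing that its marginals are close to $\phi_{I,\bo{x}}$. This is established by a second-moment argument (Lemma~\ref{lemma:exp_valid_codewords}): for each window $i$ and each $c^{b}\in f_i^{-1}(x_i)$, the cylinder $\cC_i(\bo{x}, c^{b})$ contains about $N_{\bo{x}}/|f_i^{-1}(x_i)|$ codewords with high probability, so the uniform law on $\cC(\bo{x})$ puts roughly the right mass on each cylinder. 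A light expurgation of over-weighted cylinders then produces $\phi_{\tilde V,\bo{x}}$ with $\eta = \phi_{\tilde V,\bo{x}} - \phi_{I,\bo{x}}\in\cE_{\cG}$. Your proposal never invokes $\cC(\bo{x})$ at all, which is what makes your construction hard: you are attempting to solve a large-perturbation marginal problem by hand, whereas the paper sidesteps it by pointing to an explicit distribution.

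Concretely, the gap is this: the sequential ``installation'' of windows along a tree order is only described in outline, and there is no argument that it produces a nonnegative, normalized $P^{\star}$ whose block marginals match $\phi_{I,\bo{x}}$ up to $\cE_{\cG}$-sized errors. The perturbation technique that the paper actually uses for this kind of matching (Lemma~\ref{lemma:eta_small_existence}) is valid only for marginals within $O(1/n)$ of uniform; here the target $\phi_{I,\bo{x}}^{(i)}$ differs from uniform by a \emph{constant} factor (roughly $1/p$ on the support of $f_i^{-1}(x_i)$ and by $1$ off it), so that construction does not transfer. Moreover, the window-overlap graph has $\Theta(n\log^{2}n)$ edges, so it is far from a forest globally; ``locally tree-like'' does not eliminate the long-range consistency obstructions that Example~\ref{consmarnoprob} illustrates, and nothing in your sketch bounds how the corrections accumulate around cycles. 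Finally, your justification that $\eta$ can be written as a difference of two bit-pair consistent elements of $\cP_\cG$ is asserted rather than argued. To repair the proof along the paper's lines you should: (i) take $P^{\star}$ to be uniform on (an expurgated version of) $\cC(\bo{x})$, (ii) compute $\bE|\cC(\bo{x})|$ and $\mathrm{Var}|\cC(\bo{x})|$ over the random ensemble, using the $n^{-100}$-type bounds from $b = \Theta(\log n)$ to kill cross terms in the second moment, and (iii) convert the concentration into the entrywise and bit-pair bounds required by $\cE_\cG$, expurgating the few bad cylinders via Lemma~\ref{lemma:excess_fraction_bound}.
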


\begin{figure}
  \begin{center}
    \includegraphics[width=0.4\textwidth]{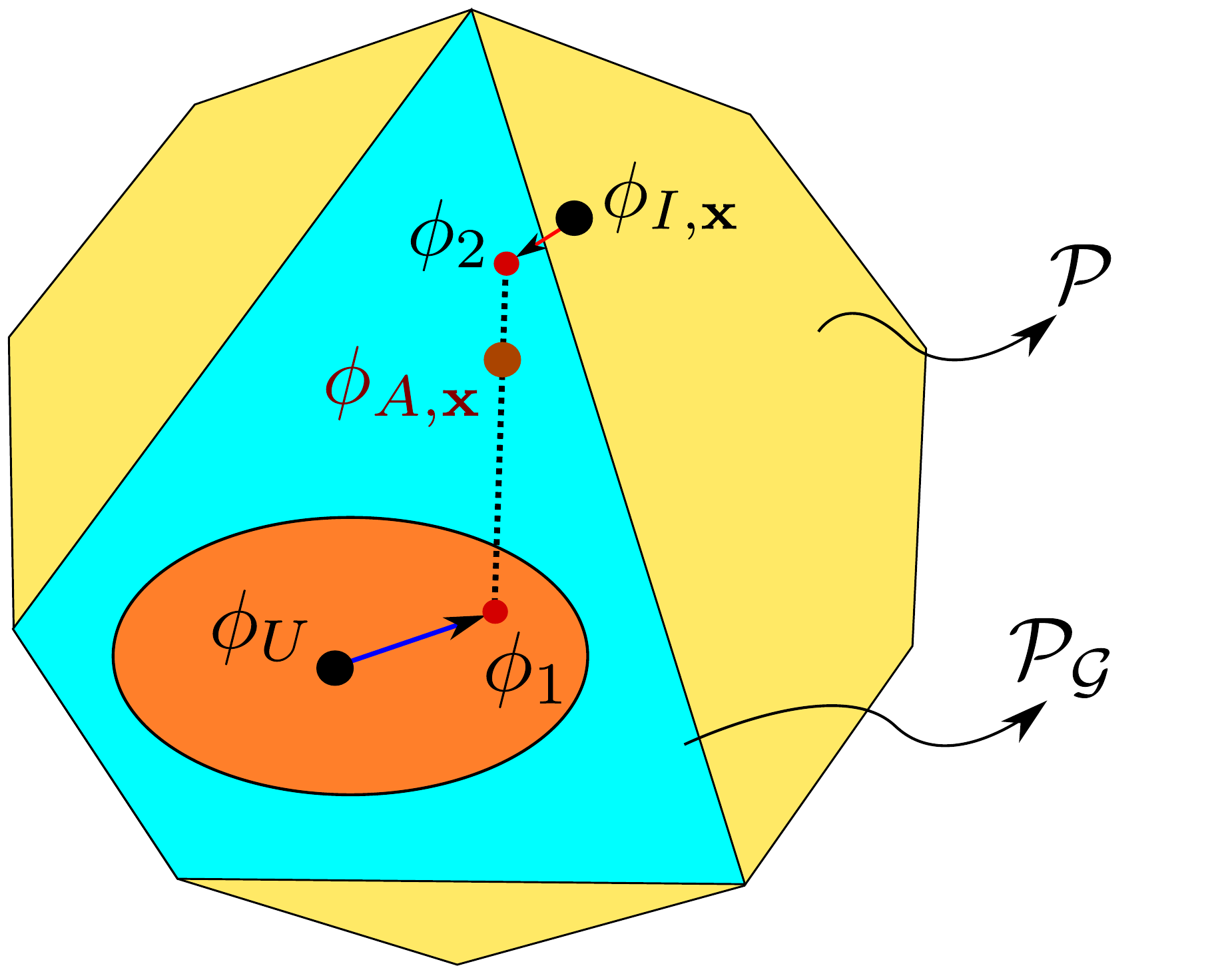}
    \end{center}
\caption{Illustration of the idea behind the proof of Proposition~\ref{lemma:phi_NI_generated}. Recall that $\cP_\cG$ denotes the set of block-marginal vectors that can be generated. This is a convex subset of the set of block-marginal vectors $\cB_\cG$, which is itself convex. The vector $\phi_U$ belongs to $\cP_\cG$. 
Lemma~\ref{lemma:eta_small_existence} says that a ``broad'' class of vectors obtained through perturbations of $\phi_U$ can also be generated. This family can be expressed as $\phi_1\defeq \phi_U-n \eta$ with perturbation $\eta \in \cE_\cG$ and is illustrated by the oval region around $\phi_U$.  
Lemma~\ref{lemma:eta_good_existence} guarantees that with high probability, there exists $\eta \in \cE_\cG$ such that $\phi_2\defeq \phi_{I,\bfx}+\eta$ can be generated.  Consequently, since $\phi_{A,\bfx}$ can be written as a convex combination of $\phi_1$ and $\phi_2$ for some $\eta$, it follows that $\phi_{A,\bfx}$ can also be generated. 
}\label{fig:illustr_phi_ni_gen}
\end{figure}

We now have the ingredients to prove Proposition~\ref{lemma:phi_NI_generated}. 
We need to show that for any $\bo{x}\in\cB(n, p_\varepsilon)$, we have
\[
\overline{\Pr}[ \phi_{A,\bo{x}}\in \cP_\cG]\geq 1-\frac{1}{n^2}(1+o(1)),
\]
where $\overline{\Pr}[\cdot]$ indicates averaging over the rate $R=H(p+\varepsilon)$ decoder ensemble. 

The proof is illustrated in Fig.~\ref{fig:illustr_phi_ni_gen}.
Pick $\bo{x}\in\cB(n,p_\varepsilon)$. By Lemmas~\ref{lemma:smalloverlap}, \ref{lemma:eta_small_existence}, and~\ref{lemma:eta_good_existence}, with probability $$\geq 1-\frac{\text{poly}({\log n})}{n}$$ over our decoder ensemble there exists a decoder $(\cG,f)$ and $\eta\in \cE_{\cG}$ such that 
\begin{align*}
  \phi_{1} &\defeq \phi_{U}-n\eta\quad \text{and}\\
  \phi_{2} &\defeq \phi_{I,\bo{x}}+\eta
\end{align*}
belong to $\cP_{\cG}$.
Any convex combination of such $\phi_{1}$ and $\phi_{2}$ also lies in $\cP_{\cG}$ (see Property iii. before Definition~\ref{marcon}), and in particular
\begin{align*}
  \phi_{A,\bo{x}} &\overset{\text{Eq.\eqref{phia}}}{=} \frac{1}{n+1} \phi_{1} + \frac{n}{n+1} \phi_{2} &\notag\\
           &= \frac{n}{n+1} \phi_{I,\bo{x}} + \frac{1}{n+1}\phi_{U}
\end{align*}
lies in $\cP_{\cG}$. This concludes the proof of Proposition~\ref{lemma:phi_NI_generated}.

\subsection{Proof of Lemma~\ref{lemma:smalloverlap}}
For any distinct pair $(i,j)$, we have
\begin{align}
\Pr_{\cG}[ |\cI_i \cap\cI_j|\geq k] &=\sum_{l=k}^{b}\frac{\nchoosek{b}{l}\nchoosek{nR}{b-l}}{\nchoosek{nR}{b}} = O\left( \frac{b^{2k+1}}{n^k} \right). 
\end{align}
Using the union bound and recalling that $b=\Theta(\log n)$, 
\begin{align*}
    \Pr_{\cG}\Big[ |\cI_i \cap\cI_j|\geq 3, \text{ for some }i\neq j\Big] &\leq \nchoosek{n}{2} O\left( \frac{b^{7}}{n^3} \right) =  \frac{\mathrm{poly}(\log n)}{n} .
\end{align*}

\begin{remark}[Dependence of $(b,b')$ on $n$] 
In our proofs, we choose $b=\Theta(\log n)$. This is to balance the tension between the following two requirements. The proof of Lemma~\ref{lemma:smalloverlap} holds as long as $b=O(\log n)$ (not necessarily $b=\Theta(\log n)$). Later for Lemma~\ref{lemma:eta_good_existence} and~\ref{lemma:exp_valid_codewords} we will require that  $b=\Omega(\log n)$.  
These two requirements force $b$  to grow as $\Theta(\log n)$. 
\end{remark}

\subsection{Proof of Lemma~\ref{lemma:eta_small_existence}: Perturbations of the uniform distribution that can be generated}\label{sec:prf_lemma_perturbation_exist}

Lemma~\ref{lemma:eta_small_existence} can be stated in the following equivalent form.
We are given a collection of $n$ marginal distributions  $\phi^{(1)},\ldots,\phi^{(n)}$, each defined on $\{0,1\}^b$ and all being sufficiently ``close'' to the uniform distribution, and we want to show that there exists a joint distribution $P$ on $\{0,1\}^{nR}$ with marginal $P_{{\cI_i}}$ equal to $\phi^{(i)}$ for every $i$.
This is an instance of the classical \emph{marginal problem}, a family of problems with a long history in the literature; see, \emph{e.g.},~\cite{hoeffding1940masstabinvariante,frechet1951tableaux,csiszar1975divergence,jha2025random}.

A probability distribution satisfying the prescribed marginal constraints exists if and only if there is a nonnegative (column) vector $P \in \mathbb{R}^{2^{nR}}$, representing a probability mass function on $\{0,1\}^{nR}$, solving the linear system
\[
A P = B.
\] The matrix $A$ has dimension $n2^b\times 2^{nR}$ and is given by
\[
A =
\begin{pmatrix}
A^{(1)} \\
A^{(2)} \\
\vdots \\
A^{(n)} 
\end{pmatrix}
\]
where each $2^b \times 2^{nR}$ submatrix $A^{(i)}$, $i\in [n]$, enforces the marginal constraints associated with the index set $\cI_i$. Its rows are indexed by $c^b\in\{0,1\}^b$, and its columns by $\mathbf{c}\in\{0,1\}^{nR}$. The entry in row $c^b$ and column $\mathbf{c}$ equals $1$ if $\mathbf{c}_{\cI_i}=c^b$, and $0$ otherwise. Thus, $A^{(1)},\ldots,A^{(n)}$ are completely determined by $\cG$.

The column vector $B = \phi^T$ consists of the prescribed block marginal probabilities (stacked in the same order as the rows of $A^{(1)},\ldots,A^{(n)}$).

Note that since $$\sum_{\bo{c}} P(\bo{c})=\sum_{c^b}\sum_{\bo{c}:\bo{c}_{\cI}=c^b}P(\bo{c})=\sum_{c^b}\phi^{(i)}(c^b)=1,$$ normalization of $P$ is automatically enforced by each block marginal constraint.

Since for 
\[
B = B_0 \defeq \phi_U^T
\]
the linear system admits the solution $$P^0 = P_U,$$ (the uniform distribution), one might attempt to prove feasibility for perturbed marginals by invoking classical linear-inequality arguments, such as Farkas' lemma. In particular, one could try to show that a nonnegative solution persists when $\phi_U$ is perturbed by an amount of order $n\eta$. 
We briefly explored such approaches. However, because the matrix $A$ is so large, the characterization from Farkas' lemma is difficult to directly analyze and obvious simplifications produce overly restrictive sufficient conditions that do not accommodate the class of perturbations required in our construction.

Instead, we use a constructive approach. Starting from the uniform distribution $P^{(0)}$, we first perturb it to obtain a distribution $P^{(1)}$ whose bit marginals match those of the target block marginals. In the second step, we further perturb $P^{(1)}$ to obtain $P^{(2)}$, which matches all bit-pair marginals. Finally, we perturb $P^{(2)}$ to obtain $P^\star$, which matches the block marginals.
The final step exploits the property $|\cI_i \cap \cI_j| \le 2$: once the bit-pair marginals are fixed, the block marginals can be adjusted independently. Fig.~\ref{fig:phi_existence_proof} illustrates the construction.

\begin{figure}
    \centering
    \includegraphics[width=0.6\linewidth]{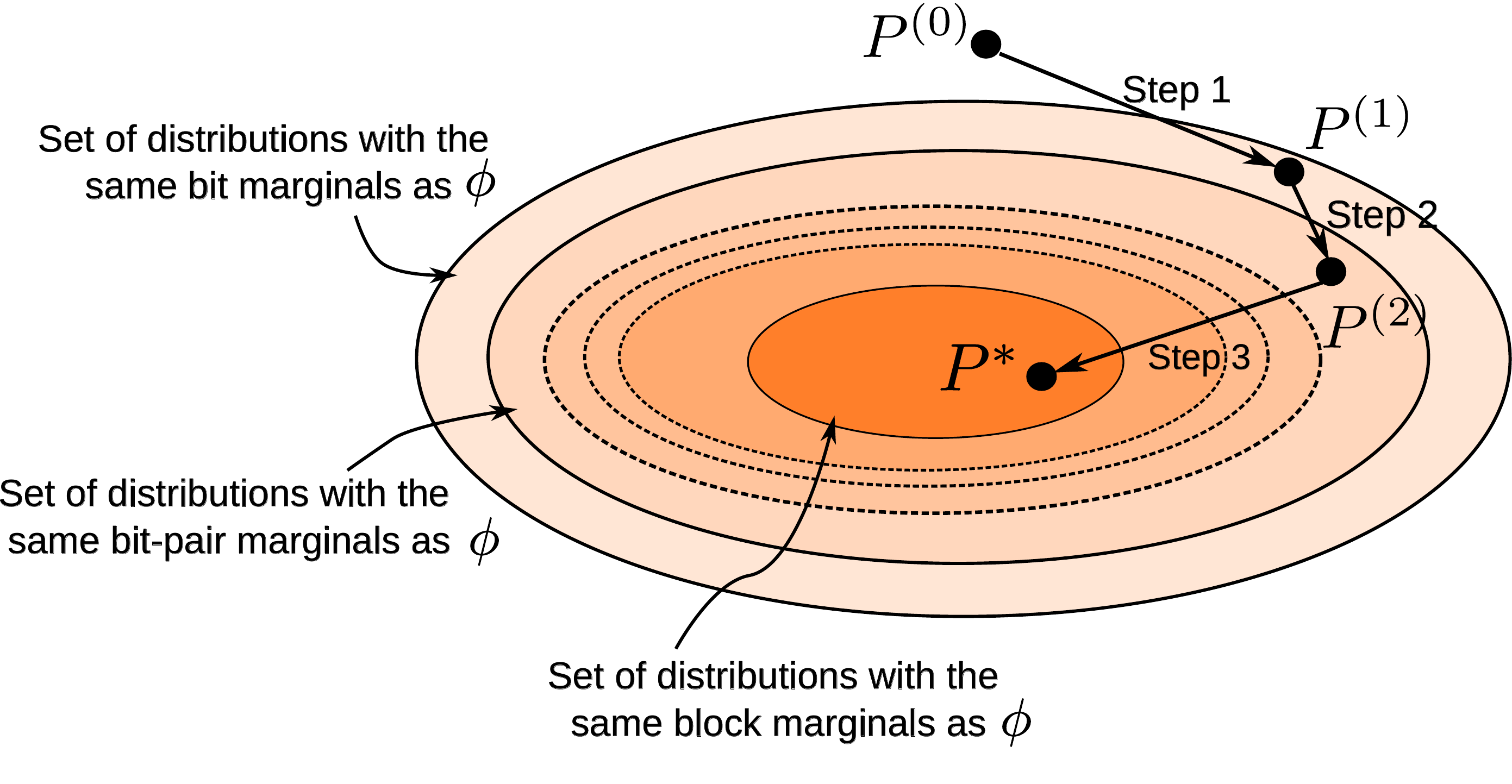}
    \caption{Illustration of the proof of Lemma~\ref{lemma:eta_small_existence}. Each region represents the set of distributions consistent with an increasing number of marginal constraints derived from $\{\phi^{(i)}\}_{i\in [n]}$: the outermost region enforces the bit marginals, the intermediate region enforces the bit-pair marginals, and the innermost region enforces the full block marginals. 
}
    \label{fig:phi_existence_proof}
\end{figure}



\subsubsection*{Proof of Lemma~\ref{lemma:eta_small_existence}}
Let $M \defeq nR$ (assumed to be an integer),  and let
\[
\phi \defeq \phi_U - n\eta = \bigl(\phi^{(1)},\ldots,\phi^{(n)}\bigr)
\]
with $\eta\in \cE_{\cG}$. Note that $\phi^{(i)}$ is a distribution on $\cI_i$ that satisfies 
\[
\phi^{(i)}(c^b)\;\ge\;2^{-b}\Bigl(1-\frac{1}{2n}\Bigr)\qquad \forall\, i\in[n],\ c^b\in\{0,1\}^b,
\]
by Definition~\ref{df6} (Property \emph{ii.}).

Define the set of order-2 marginals
\[
\cM \defeq  \bigl\{(\ell,m):\exists i\in[n]\text{ with }\{\ell,m\}\subseteq \cI_i, \ell<m\bigr\}.
\]
For $\ell\in[M]$ write $\phi_\ell$ for the (well-defined) bit marginal, and for $(\ell,m)\in\cM$
write $\phi_{\ell,m}$ for the corresponding bit-pair marginal.
Set
\[\delta_1 \defeq \max_{\ell\in[M]}\ \max_{a\in\{0,1\}}\Bigl|\phi_{\ell}(a)-\frac12\Bigr|, \qquad
\delta_2 \defeq \max_{(\ell,m)\in\cM}\ \max_{(a_1,a_2)\in\{0,1\}^2}\Bigl|\phi_{\ell,m}(a_1,a_2)-\frac14\Bigr|.\]
By Definition~\ref{df6} (Property \emph{i.}),
\[
\delta_{2}\leq n^{-3},
 \]
and
\begin{align}\label{e1max}
\delta_{1} &\leq \max_{\ell \in [M]}\max_{a_1\in \{0,1\}}\left(\left\vert  \phi_{\ell, m}(a_1,0)-\frac{1}{4}\right\vert + \left\vert \phi_{\ell, m}(a_1,1)-\frac{1}{4}\right\vert  \right)\notag\\
&\leq 2\delta_2\notag\\
&=2n^{-3}.
\end{align}

We construct a global distribution on $\{0,1\}^M$ in stages, starting from the uniform distribution $P_U$. At each stage, we apply perturbations to match a higher order class of marginals of $\phi$: first bits, then pairs, and finally blocks. 

We next introduce the notion of a cylinder lift, along with the two types of perturbations that will be used repeatedly throughout the construction.

\medskip
\noindent\textbf{Cylinder lift.}
For any index set $\cI\subseteq[M]$ and any pmf $P$ on $\{0,1\}^{|\cI|}$, define its \emph{cylinder lift}
$\mathsf{Cyl}_\cI(P)$ on $\{0,1\}^M$ by
\[
\bigl(\mathsf{Cyl}_\cI(P)\bigr)(\bo{c})\;\defeq\;2^{-(M-|\cI|)}\,P(\bo{c}_\cI).
\]
Thus $\mathsf{Cyl}_\cI(P)$ extends $P$ uniformly on the coordinates outside $\cI$.

\medskip
\noindent\textbf{Update identity.}
Let $V,W,W'$ be pmfs on $\{0,1\}^M$ and define $$V' \defeq V +\Delta,$$ where $$\Delta\defeq W - W'$$ is interpreted as a perturbation of $V$ to obtain update $V'$. 
For every $\cI \subseteq [M]$,
\[
V'_\cI = V_\cI + \Delta_\cI,
\]
where $\Delta_\cI \defeq W_\cI - W'_\cI$ is the perturbation restricted to $\cI$.
We will consider two types of local perturbations:
\begin{description}
    \item[Type I.] If $W_\cI = W'_\cI$, then the $\cI$-marginal is preserved, \ie, $V'_\cI = V_\cI$.
    \item[Type II.] If $V_\cI = W'_\cI$, then the $\cI$-marginal is replaced by that of $W$, \ie, $V'_\cI = W_\cI$.
\end{description}

\medskip

Without loss of generality we assume that $\bigcup_{i=1}^n \cI_i = [M]$; any coordinates outside the union
can be made independent $\mathrm{Bernoulli}(1/2)$ without affecting the block marginals on $\{\cI_i\}$.

\noindent\textbf{Stage 0: start from uniform distribution.}
Let $P^{(0)}\defeq P_U$, \ie, $P^{(0)}(\bo{c})=2^{-M}$ for all $\bo{c}\in \{0,1\}^M$.

\medskip
\noindent\textbf{Stage 1: match all bit marginals.}
For each $\ell\in[M]$, define the cylinder lifts
\[
W^\ell \defeq \mathsf{Cyl}_{\{\ell\}}(\phi_\ell),
\qquad
\tilde{W}^\ell \defeq \mathsf{Cyl}_{\{\ell\}}(P^{(0)}_\ell).
\]
Set the update direction $\Delta^\ell \defeq  W^\ell - \tilde{W}^\ell$ and define
\[
P^{(1)} \defeq P^{(0)} + \sum_{\ell'=1}^M \Delta^{\ell'}.
\]
We check that $P^{(1)}$ is a valid distribution on $\{0,1\}^M$ and that its bit marginals correspond to those of $\phi$. 

Each $\Delta^{\ell'}$ has zero total mass, hence $\sum_\bo{c} P^{(1)}(\bo{c})=1$. For the pointwise nonnegativity of $P^{(1)}$, note that
\[
|\Delta^\ell(\bo{c})| \le 2^{-(M-1)} \delta_1,\]
and therefore
\[
P^{(1)}(\bo{c})\ge 2^{-M}\bigl(1-2M\delta_1\bigr)=2^{-\Theta (n )}(1-O(n^{-2})),
\]
since $M=\Theta (n)$ and $\delta_1\leq n^{-3}$. Hence, for all sufficiently large $n$ we have $P^{(1)}(\bo{c})\ge 0$ for all $\bo{c}$.

\emph{Correct bit marginals.}
The $\ell$-marginal of $P^{(1)}$ is given by 
\begin{align}
P^{(1)}&= P^{(0)}_\ell + \sum_{\ell'=1}^M \Delta^{\ell'}_\ell\\
&= (P^{(0)}_\ell + \Delta^\ell_\ell) +\sum_{\ell'\neq \ell}\Delta_\ell^{\ell'}.
\end{align}
All (local) perturbations $\Delta^{\ell'}_\ell$, $\ell'\neq \ell$, are of Type I., hence are equal to zero. Instead, $\Delta^\ell_\ell$ is a Type II. perturbation of $P^{(0)}_\ell$ that results in $\phi_\ell$. Therefore, 
\begin{align*}
P_\ell^{(1)} &=\phi_\ell.
\end{align*}

\medskip
\noindent\textbf{Stage 2: match all bit-pair marginals while preserving bit marginals.}
For each $(\ell,m)\in\cM$, define
\[
W^{{\ell,m}} \defeq \mathsf{Cyl}_{\{\ell,m\}}(\phi_{\ell,m}),
\qquad
\tilde{W}^{{\ell,m}} \defeq \mathsf{Cyl}_{\{\ell,m\}}(P^{(1)}_{\ell,m}),
\]
and $\Delta^{\ell,m} \defeq W^{{\ell,m}} - \tilde{W}^{\ell,m}$.
Define
\[
P^{(2)} \defeq P^{(1)} + \sum_{(\ell',m')\in\cM} \Delta^{\ell',m'}.
\]
Each perturbation $\Delta^{\ell',m'}$ has zero total mass, hence $P^{(2)}$ is normalized. For the nonnegativity of $P^{(2)}$, define 
\begin{align*}
\delta_3 &\defeq \max_{(\ell,m)\in\cM}\ \max_{(a_1,a_2)\in\{0,1\}^2}\bigl|P_{\ell,m}^{(2)}(a_1,a_2)- P_{\ell,m}^{(1)}(a_1,a_2)\bigr|\\
&= \max_{(\ell,m)\in\cM}\ \max_{(a_1,a_2)\in\{0,1\}^2}\bigl|  \sum_{(\ell',m')\in\cM} \Delta^{\ell',m'}_{\ell,m}(a_1,a_2) \bigr|\\
&= \max_{(\ell,m)\in\cM}\ \max_{(a_1,a_2)\in\{0,1\}^2}\bigl|   \Delta^{\ell,m}_{\ell,m}(a_1,a_2) \bigr|,
\end{align*}
where we have used the property that all perturbations $\Delta^{\ell',m'}_{\ell,m}$ with $(\ell',m')\neq (\ell,m)$ are of Type I., hence equal to zero. 
\begin{align*}
\delta_3&= \max_{(\ell,m)\in\cM}\ \max_{(a_1,a_2)\in\{0,1\}^2}\bigl|   \phi_{\ell,m}(a_1,a_2) - P^{(1)}_{\ell,m}(a_1,a_2) \bigr|\\
&\leq \max_{(\ell,m)\in\cM}\ \max_{(a_1,a_2)\in\{0,1\}^2}\bigl|   \phi_{\ell,m}(a_1,a_2) - 1/4\bigr| + \bigl|   P^{(1)}_{\ell,m}(a_1,a_2)-1/4 \bigr| 
\end{align*}
The first term is upper bounded by $\delta_2$.
The second term is
\begin{align*}
\bigl|   P^{(1)}_{\ell,m}-P^{(0)}_{\ell,m} \bigr| &= \bigl| \sum_{\ell'=1}^{nR}\Delta^{\ell'}_{\ell,m} \bigr|= \bigl|\Delta^{\ell}_{\ell,m}+\Delta^m_{\ell,m} \bigr|\\
&\leq \bigl|\Delta^{\ell}_{\ell,m}\bigr|+\bigl|\Delta^m_{\ell,m} \bigr|\\
&\leq \frac{\delta_1}{2} + \frac{\delta_1}{2}=\delta_1,
\end{align*}
where we used the property that $\Delta^{\ell'}_{\ell,m}$ is a Type I perturbation for $\ell'\notin \{\ell,m\}$.
Hence,
\begin{align*}
\delta_3&\leq\delta_2+\delta_1 \\
&=O(n^{-3}).
\end{align*}

Then $|\Delta^{\ell,m}(\bo{c})|\le 2^{-(M-2)}\delta_3$ and therefore
\begin{align}
P^{(2)}(\bo{c})&\ge P^{(1)}(\bo{c}) - |\cM|\,2^{-(M-2)}\delta_3.
\end{align}
Using $|\cM|=O(n\log^2 n)$ and $\delta_1,\delta_3=O(n^{-3})$, $P^{(1)}(c)\ge 2^{-M}(1-2M\delta_1)$, one finds 
\[
P^{(2)}(\bo{c})\;\ge\;2^{-M}\Bigl(1 - 2M\delta_1 - 4|\cM|\delta_3\Bigr) 
                \;=\;2^{-\Theta(n)}\bigl(1 - O(n^{-2}\log^2 n)\bigr),
\]
so for all sufficiently large $n$, $P^{(2)}(\bo{c})\ge0$ for all $\bo{c}$.

\emph{Correct bit-pair marginals.}
The $(\ell,m)$-marginal of $P^{(2)}$ is given by
\begin{align*}
   P^{(2)}_{\ell,m}&= P^{(1)}_{\ell,m} +\sum_{(\ell',m')\in\cM}  \Delta^{\ell',m'}_{\ell, m} \notag \\
   &=P^{(1)}_{\ell,m} + \Delta^{\ell,m}_{\ell,m}+\sum_{(\ell',m')\in\cM, (\ell',m')\neq (\ell,m)} \Delta^{\ell',m'}_{\ell, m}.
\end{align*}
All perturbations $\Delta^{\ell',m'}$ with $(\ell',m')\neq (\ell,m)$ are of Type I., hence equal to zero. Note that this holds even if $(\ell',m')$ and $(\ell,m)$ share a common bit because $\phi$ and $P^{(1)}$ share the same bit marginals (by Stage $1$). Furthermore, $\Delta^{\ell,m}_{\ell, m}$ is a Type II. perturbation of $P^{(1)}_{\ell,m}$ which gives $\phi_{\ell,m}$. Therefore,
$$P^{(2)}_{\ell,m}=\phi_{\ell,m}.$$

Finally, we record a uniform bound on the block-marginal deviation of $P^{(2)}$ from uniform:
for any $\cI\subseteq[M]$ with $|\cI|\geq 2$, 
\begin{align}
\max_{\bo{c}_\cI}\Bigl|P^{(2)}_\cI(\bo{c}_\cI)-2^{-b}\Bigr|
&\;\le\;  \max_{\bo{c}_\cI}\Bigl|P^{(1)}_\cI(\bo{c}_\cI)-2^{-b}\Bigr| + \max_{\bo{c}_\cI}\Bigl|P^{(2)}_\cI(\bo{c}_\cI)-P^{(1)}_{\cI}(\bfc_{\cI})\Bigr| &\notag \\
&\;\le\;  \max_{\bo{c}_\cI}\Bigl|\sum_{\ell=1}^M \Delta^\ell_{\cI}(\bfc_{\cI})\Bigr| + \max_{\bo{c}_\cI}\Bigl| \sum_{(\ell,m)\in\cM} \Delta^{\ell,m}_{\cI}(\bfc_{\cI}) \Bigr| &\notag\\
&\;\le\; 
\frac{M}{2^{|\cI|-1}}\,\delta_1 \;+\; \frac{n\binom{b}{2}}{2^{|\cI|-2}}\,\delta_3
\;\defeq\; \frac{1}{2^{|\cI|}}\delta_{\mathrm{blk}}.\label{eq:deltablk_blk}
\end{align}
Since $\delta_1,\delta_3=O(n^{-3})$ and $b=O(\log n)$, we have $$\delta_{\mathrm{blk}}=O(n^{-2}\log^2n).$$

\medskip
\noindent\textbf{Stage 3: match all block marginals while preserving overlaps.}
For each $i\in[n]$, define
\[
W^{\cI_i} \defeq \mathsf{Cyl}_{\cI_i}(\phi^{(i)}),
\qquad
\tilde{W}^{\cI_i} \defeq \mathsf{Cyl}_{\cI_i}(P^{(2)}_{\cI_i}),
\qquad
\Delta^{\cI_i} \defeq W^{\cI_i} - \tilde{W}^{\cI_i},
\]
and set
\[
P^\star \defeq P^{(2)} + \sum_{i=1}^n \Delta^{\cI_i}.
\]
Each $\Delta^{\cI_i}$ has zero total mass, hence $P^\star$ is normalized. For nonnegativity,
for any $\bo{c}\in\{0,1\}^M$,
\begin{equation}\label{eq:pstarc_1}
P^\star(\bo{c})\ge 2^{-M}- \Bigl| P^{(2)} (\bo{c})- 2^{-M}\Bigr| + \sum_{i=1}^n \Bigl( W^{\cI_i}(\bfc)-2^{-M} -|\tilde{W}^{\cI_i}(\bfc)-2^{-M}|  \Bigr).
\end{equation}
   Using the entrywise bound on $\phi^{(i)}$ and~\eqref{eq:deltablk_blk}, we get 
\begin{align*}
W^{\cI_i}(\bfc)-2^{-M}-|\tilde{W}^{\cI_i}(\bfc)-2^{-M}|
&= \frac{1}{2^{M-b}}\left( \phi^{(i)}(\bfc_{\cI_i})-\frac{1}{2^b} - \Bigl| P^{(2)}_{\cI_i}(\bfc_{\cI_i})-2^{-b} \Bigr| \right)\\
&\geq \frac{1}{2^{M}}\left( -\frac{1}{2n} - \delta_{\mathrm{blk}} \right).
\end{align*}
We can further use~\eqref{eq:deltablk_blk} to bound the second term in~\eqref{eq:pstarc_1}.
Using this and the above in~\eqref{eq:pstarc_1}, we get that for every $\bfc\in\{0,1\}^M$,
\begin{align}
 P^{\star}(\bfc)&\ge \frac{1}{2^{M}}\left(1-\frac{1}{2} - n\,\delta_{\mathrm{blk}}\right)\notag \\
&= \frac{1}{2^{M}}\left(\frac{1}{2}-o(1)\right),
\end{align}
which is nonnegative for all sufficiently large $n$
since $b=\Theta(\log n)$, $M=\Theta(n)$ and $\delta_{\mathrm{blk}}=O(n^{-2}\log^2n)$.

\emph{Correct block marginals.} 
The $\cI_i$ marginal of  $P^{\star}$ is given by 
$$P^{\star}_{\cI_i}=P^{(2)}_{\cI_i} + \sum_{i'=1}^n \Delta^{\cI_{i'}}_{\cI_i}=P^{(2)}_{\cI_i} + \Delta^{\cI_i}_{\cI_i}+\sum_{i'\neq i} \Delta^{\cI_{i'}}_{\cI_i} $$

Perturbations $\Delta^{\cI_{i'}}_{\cI_i}$, $i'\neq i$, are all of Type I. and are equal to zero since $\phi$ and $P^{(2)}$ share the same bit-pair marginals (by Stage $2$) and that $|\cI_i\cap \cI_{i'}|\leq 2$ for all $i'\ne i$ (by Lemma assumption). Finally, $\Delta^{\cI_i}_{\cI_i}$ is a Type II. perturbation of  $P^{(2)}_{\cI_i}$ which gives $\phi_{\cI_i}= \phi^{(i))}$. Hence, 
\begin{align*}
   P^{\star}_{\cI_i}=\phi^{(i)}.
   \end{align*}
We constructed $P^\star$ on $\{0,1\}^M$ with the desired marginals given by $\phi$. This concludes the proof of Lemma~\ref{lemma:eta_small_existence}.

\subsection{Proof of Lemma~\ref{lemma:eta_good_existence}}\label{sec:prf_prob_good_phi}
As a preliminary step, we introduce the notion of \emph{valid codewords}.

\begin{definition}[Valid codewords]
Given a decoder $(\cG,f)$, we say that $\bo{c}$ is a valid codeword for source $\bo{x}$ if $f(\bo{c})=\bo{x}$. The set of all valid codewords for $\bo{x}$ is denoted by $\cC(\bo{x})$.\label{validcdw}
\end{definition}

\begin{remark}\label{rem1}
    Note that for a fixed decoder $(\cG,f)$, if we encode each source sequence $\bo{x}$ by selecting $\bo{C}$ at random with a distribution supported over $\cC(\bo{x})$, we achieve zero error probability. However, depending on $(\cG,f)$ privacy need not be guaranteed.
\end{remark}

To prove Lemma~\ref{lemma:eta_good_existence}, we need to show for any $\bo{x}\in\cB(n,p_\varepsilon)$, 
\[
\Pr_{\cG,f}\Big[\phi_{I,\bo{x}}+\eta\in\cP_{\cG}\text{ for some }\eta\in \cE_{\cG} \Big]\geq 1-\frac{1+o(1)}{n}.
\]
Let $\phi_{V,\bo{x}}$ denote the block-marginal vector induced by the uniform distribution on the valid codewords $\cC(\bo{x})$. We show that $\phi_{V,\bo{x}}$ is close to $\phi_{I,\bo{x}}$, but establishing that $(\phi_{V,\bo{x}} - \phi_{I,\bo{x}}) \in \cE_{\cG}$ with high probability over the decoder appears beyond reach. However, it turns out that the same block-marginal but obtained after a slight expurgation of $\cC(\bo{x})$ yields the desired conclusion.

We now define the expurgated set of codewords through Definitions~\ref{cyli} and~\ref{expuco}, then derive cardinality properties in Lemma~\ref{lemma:exp_valid_codewords} which will be used to establish Lemma~\ref{lemma:eta_good_existence}. 

\begin{definition}[Cylinder set] \label{cyli}Fix $\bo{x}$ and $(\cG,f)$.
For any $i\in [n]$ and $\tilde{c}^{b}\in \{0,1\}^{b}$, we define the cylinder set of valid codewords 
\begin{align}
  \cC_i(\bo{x},\tilde{c}^{b}) &\defeq \Big\{ \bo{c}: \bo{c}\in \cC(\bo{x}),  \bfc_{\cI_{i}}=\tilde{c}^{b}\Big\}.
\end{align}
\end{definition}

\begin{definition}[Expurgated valid codewords]\label{expuco}
Fix $\bo{x}$ and $(\cG,f)$. A cylinder $\cC_i(\bo{x},\tilde{c}^{b})$ is said to be ``bad'' if
\begin{equation}\label{eq:bad_bin}
\frac{|\cC_i(\bo{x},\tilde{c}^{b})|}{|\cC(\bo{x})|} > \frac{1}{|f_{i}^{-1}(x_{i})|} + \frac{1}{n^22^{b}}.
\end{equation}
A codeword $\bo{c}\in\{0,1\}^{nR}$ is said to be bad if it lies in a bad $\cC_i(\bo{x},\tilde{c}^{b})$ for some $(i,\tilde{c}^{b})$, and good otherwise. 

Let $\cC_{\mathrm{exp}}(\bo{x})$ denote the set of codewords obtained by expurgating $\cC(\bo{x})$ of its bad codewords, and let $\phi_{\tilde{V},\bo{x}}$ denote the block-marginal vector induced by the uniform distribution over $\cC_{\mathrm{exp}}(\bo{x})$.

\end{definition}

We now show that with high probability over $(\cG,f)$, every $\bo{x}$ with sufficiently small Hamming weight has exponentially many valid codewords. 

\begin{lemma}\label{lemma:exp_valid_codewords}
  Fix $\bfx$ with Hamming weight $w\leq np_{\varepsilon}$. Let $(\cG,f)$ be randomly chosen from our (rate $R=H(p+\varepsilon)$) decoder ensemble (see Section~\ref{sct:random}). 
 Define
  \begin{align}
    N_{\bo{x}}\defeq \bE|\cC(\bo{x})| &= \exp_{2}\Big( nR+w\log p+(n-w)\log(1-p) +\Theta(n\log(1+2^{-b'})) \Big).&\label{eq:expectation_valid_codewords}
  \end{align}
  Then, for every $\delta,\gamma>0$,
  \begin{align}
    \Pr\left[\big| |\cC(\bo{x})|-N_{\bo{x}}\big|  > \delta N_{\bo{x}} \right] &\leq \frac{1}{\delta^{2}n^{\gamma-1}}, &\label{eq:prob_exp_valid_codewords}
  \end{align}
  as long as 
  \[
    b \geq \left(\frac{\gamma}{H_2^{-1}(R-H_2(p_\varepsilon))}\right)\log n.
  \]
  Furthermore, for every $i\in[n]$ and  $c^b\in\{0,1\}^b$,
  \begin{align}
\bE \left[|\cC(\bfx)|\big\vert \bfc_{\cI_i}=c^b\right]&=\frac{N_{\bfx}}{|f_i^{-1}(x_i)|}, &\label{eq:avg_valid_codewords_condn}
\end{align}
and 
\begin{align}
    \bE\left[\left| |\cC(\bo{x})|- \frac{N_{\bfx}}{|f_i^{-1}(x_i)|}\right|^2  \Bigg\vert
   \bfc_{\cI_i}=c^b \right] &\leq O\left(\frac{N_{\bfx}^2}{n^{\gamma-1}|f_i^{-1}(x_i)|^2}\right).  \label{eq:var_bin_condn}
\end{align}
\end{lemma}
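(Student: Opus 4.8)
The plan is to read the two displayed means as first–moment identities and the two concentration bounds as Chebyshev estimates on a matching second moment; throughout, the randomness of the independent uniform maps $\{f^{(U)}_i\}$ carries the argument and the random graph $\cG$ enters only through the distribution of the index sets $\{\cI_i\}$. Write $\cC_0\subseteq\{0,1\}^b$ for the minimum–distance–$\ge 4$ code whose syndrome is $g$; then $g$ is linear and surjective with all fibres of size $2^{b-b'}$, so $|f_i^{-1}(x_i)|=p_{x_i}2^{b}$ where $p_1\defeq\lceil p2^{b'}\rceil/2^{b'}$, $p_0\defeq 1-p_1$, and since $(f^{(U)}_i)^{-1}(1)$ is a uniform $\lceil p2^{b'}\rceil$–subset of $\{0,1\}^{b'}$ we have $\Pr_f[f_i(c^b)=x_i]=p_{x_i}$ for every $c^b$, these events being independent across $i$. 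Writing $|\cC(\bo{x})|=\sum_{\bo{c}}\prod_i\openone\{f_i(\bo{c}_{\cI_i})=x_i\}$ and taking $\bE_f(\cdot\mid\cG)$ gives $2^{nR}p_1^{w}p_0^{n-w}$ regardless of $\cG$, hence $N_{\bo{x}}=2^{nR}p_1^{w}p_0^{n-w}$; since $\lceil p2^{b'}\rceil=p2^{b'}+O(1)$, replacing $(p_1,p_0)$ by $(p,1-p)$ costs a factor with exponent $O(n2^{-b'})=O(n\log(1+2^{-b'}))$, which yields \eqref{eq:expectation_valid_codewords}. The conditional mean \eqref{eq:avg_valid_codewords_condn} is the same sum with $\bo{c}_{\cI_i}$ frozen to $c^b$ and the $i$-th constraint assumed satisfied: it collapses to $2^{nR-b}\prod_{j\ne i}p_{x_j}=N_{\bo{x}}/(p_{x_i}2^b)=N_{\bo{x}}/|f_i^{-1}(x_i)|$.

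For \eqref{eq:prob_exp_valid_codewords} it suffices by Chebyshev to prove $\mathrm{Var}_{\cG,f}(|\cC(\bo{x})|)\le N_{\bo{x}}^2/n^{99}$, and since $\bE_f(|\cC(\bo{x})|\mid\cG)=N_{\bo{x}}$ for every $\cG$ this equals $\bE_\cG[\mathrm{Var}_f(|\cC(\bo{x})|\mid\cG)]$. By independence of $\{f^{(U)}_i\}$, $\bE_f(|\cC(\bo{x})|^2\mid\cG)=\sum_{\bo{c},\bo{c}'}\prod_i r_{x_i}(\bo{c},\bo{c}')$ with $r_{x_i}=p_{x_i}$ when $g(\bo{c}_{\cI_i})=g(\bo{c}'_{\cI_i})$ and $r_{x_i}\le p_{x_i}^2$ otherwise; substituting $\bo{d}=\bo{c}\oplus\bo{c}'$ and using linearity of $g$ — so $g(\bo{c}_{\cI_i})=g(\bo{c}'_{\cI_i})$ iff $\bo{d}_{\cI_i}\in\cC_0$ — the inner sum loses its dependence on $\bo{c}$ and one obtains
\[
\frac{\mathrm{Var}_f(|\cC(\bo{x})|\mid\cG)}{N_{\bo{x}}^2}\ \le\ \bE_{\bo{d}\sim\mathrm{Unif}(\{0,1\}^{nR})}\Big[\prod_{i:\,\bo{d}_{\cI_i}\in\cC_0}p_{x_i}^{-1}\Big]-1 .
\]
Thus everything reduces to $\bE_{\cG,\bo{d}}\big[\prod_{i:\bo{d}_{\cI_i}\in\cC_0}p_{x_i}^{-1}\big]\le 1+n^{-99}$. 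The conditional variance \eqref{eq:var_bin_condn} comes out of the identical computation after restricting every codeword–sum to $\bo{c}_{\cI_i}=c^b$ (so $\bo{d}_{\cI_i}=0$ always) and deleting the $i$-th factor everywhere; only the constants change.

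The remaining estimate is the heart of the matter and the main obstacle. Since the $\cI_i$ are i.i.d.\ over $\cG$, the expectation equals $2^{-nR}\sum_{\bo{d}}\prod_{i=1}^{n}\big(1+\pi(\bo{d})(p_{x_i}^{-1}-1)\big)$, where $\pi(\bo{d})\defeq\Pr_\cG[\bo{d}_{\cI_1}\in\cC_0]$; the term $\bo{d}=0$ contributes exactly $1/N_{\bo{x}}$, which is $2^{-\Omega(n)}$ because $N_{\bo{x}}$ is exponentially large for $\bo{x}\in\cB(n,p_\varepsilon)$ when $R=H(p+\varepsilon)$ and $\varepsilon$ is small (equivalently $R>H(p_\varepsilon)+D(p_\varepsilon\|p)$), so what must be shown is $2^{-nR}\sum_{\bo{d}\ne0}\big(\prod_i(1+\pi(\bo{d})(p_{x_i}^{-1}-1))-1\big)\le n^{-99}$. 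The lever is that $\pi(\bo{d})$ is very small for all but a sparse set of $\bo{d}$: choosing $\cC_0$ so that, in addition to $b'/b\to1$, all its nonzero codewords have weight $b/2$ or $b$ (a first–order Reed–Muller code), $\bo{d}_{\cI_1}\in\cC_0$ forces the overlap $|\cI_1\cap\mathrm{supp}(\bo{d})|$ to lie in $\{0,b/2,b\}$ and, in the middle case, that the induced length-$b$ pattern be one of the $O(b)$ weight-$b/2$ codewords; hence $\pi(\bo{d})\le(1-\tfrac{w_d}{nR})^{b}+(\tfrac{w_d}{nR})^{b}+\Pr_\cG[\bo{d}_{\cI_1}\text{ is a weight-}b/2\text{ codeword}]$ with $w_d=\mathrm{wt}(\bo{d})$, and the last term is $\mathrm{poly}(b)\,2^{-b}$ for generic $\bo{d}$. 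I would then split $\sum_{\bo{d}\ne0}$ by $w_d$: for small $w_d$ the decay of $(1-w_d/(nR))^{b}$, together with the slack in $R>H(p_\varepsilon)+D(p_\varepsilon\|p)$, beats the count $\binom{nR}{w_d}$; for moderate and large $w_d$, $\pi(\bo{d})$ is only polynomially small but with a large exponent (here $b=100\log n/\varepsilon'$ makes $2^{-b}\le n^{-200}$ and $b'/b\to1$), so $\prod_i(1+\pi(\bo{d})(p_{x_i}^{-1}-1))-1=O(n\,\pi(\bo{d}))$ is negligible even after multiplying by $2^{-nR}\binom{nR}{w_d}\le 1$. The genuinely delicate point, which I expect to require the most care, is controlling the ``structured'' $\bo{d}$ whose support aligns with the codeword geometry of $\cC_0$ well enough that $\pi(\bo{d})$ is not polynomially small; bounding their number amounts to estimating how many $b$-subsets of $[nR]$ can meet $\mathrm{supp}(\bo{d})$ in a codeword pattern, and the comfortable margin between the target $n^{-99}$ and the $2^{-\Omega(n)}$ and $O(n2^{-b'})$ actually available leaves room for a crude such bound.
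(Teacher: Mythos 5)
Your first–moment computations \eqref{eq:expectation_valid_codewords} and \eqref{eq:avg_valid_codewords_condn} are correct and match the paper's, and your rewrite of the second moment via the XOR $\bfd=\bfc\oplus\bfc'$ and the kernel code $\cC_0=\ker g$ is a clean reorganization of the same second--moment calculation the paper carries out over pairs $(\bfc,\tilde\bfc)$: in both cases the crux is the probability that $f_i$ sees a collision, i.e.\ that $\bfd_{\cI_i}\in\cC_0$, and your normalization $\mathrm{Var}/N_{\bfx}^2\le \bE_{\cG,\bfd}\bigl[\prod_{i:\bfd_{\cI_i}\in\cC_0}p_{x_i}^{-1}\bigr]-1$ is the right target. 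However, you explicitly leave a gap at exactly this target, and the gap is real, not cosmetic. Your bound needs $\pi(\bfd)=\Pr_\cG[\bfd_{\cI_1}\in\cC_0]$ to be polynomially small for almost all $\bfd$, and you correctly observe that this fails for ``structured'' $\bfd$. To be concrete: take $\supp(\bfd)=\{1,\dots,\lfloor nR/2\rfloor\}$. Conditioned on $|\cI_1\cap\supp(\bfd)|=b/2$ (an event of probability $\Theta(1/\sqrt b)=\Theta(1/\sqrt{\log n})$), the sorted restriction $\bfd_{\cI_1}$ is deterministically $1^{b/2}0^{b/2}$, which is a codeword of $\mathrm{RM}(1,m)$ in the standard ordering; so $\pi(\bfd)=\Omega(1/\sqrt{\log n})$, nowhere near $n^{-100}$. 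Your proposal then needs a counting argument showing that the set of such $\bfd$ is so small that their contribution to $2^{-nR}\sum_{\bfd\ne0}(\prod_i(1+\pi(\bfd)(p_{x_i}^{-1}-1))-1)$ is still $\le n^{-99}$. You say the margin ``leaves room for a crude such bound,'' but no bound is given, and this is precisely the delicate part: the per-$\bfd$ contribution is only controlled by the crude bound $1/N_{\bfx}=2^{-\Theta(n)}$ when $\pi(\bfd)$ is not small, so you must show that the number of bad $\bfd$ is $2^{o(n)}$ (in fact much smaller), and that requires estimating, for each $v\in\cC_0$, how many supports $\supp(\bfd)$ admit $\Omega(1)$ or $\Omega(1/\mathrm{poly}(b))$ probability that a random $b$-subset interleaves with $\supp(\bfd)$ in the pattern $v$. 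Until that count is done, the argument does not close.

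For calibration, the paper's own write-up of this lemma elides the same difficulty: the step ``If $\cI_i\cap\cS\ne\emptyset$ then $f_i(\bfc_{\cI_i})$ and $f_i(\tilde\bfc_{\cI_i})$ are independent'' is not correct as stated, because $\bfc_{\cI_i}\ne\tilde\bfc_{\cI_i}$ does not imply $g(\bfc_{\cI_i})\ne g(\tilde\bfc_{\cI_i})$; they coincide whenever $(\bfc\oplus\tilde\bfc)_{\cI_i}\in\cC_0\setminus\{0\}$, in which case $\Pr[f_i(\bfc_{\cI_i})=f_i(\tilde\bfc_{\cI_i})=x_i]=p_{x_i}$ rather than $\approx p_{x_i}^2$. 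The paper's $n^{-100}$ slack only absorbs $\Pr_\cG[\cI_i\cap\cS=\emptyset]$, not $\Pr_\cG[(\bfc\oplus\tilde\bfc)_{\cI_i}\in\cC_0\setminus\{0\}]$, which is precisely your $\pi_1(\bfd)$ term. So your reorganization has the virtue of exposing a missing estimate that the paper also needs; but as written your proposal, like the paper's proof, leaves that estimate unproven, and you should regard filling it in as the actual content of the lemma.
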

\begin{proof}[Proof of Lemma~\ref{lemma:exp_valid_codewords}]
We prove the lemma using the second moment method. 

Let us fix an arbitrary $\bo{x}\in\{0,1\}^{n}$ with Hamming weight $w\leq np_{\varepsilon}$ and $\bfc\in\{0,1\}^{nR}$.

Note that for every $i\in [n]$ and any fixed $\cG$, we have
\[
\Pr_{f}[f_{i}(\bfc_{\cI_{i}})=1] = \frac{\nchoosek{2^{b'}}{\lceil p2^{b'}\rceil -1}}{\nchoosek{2^{b'}}{\lceil p2^{b'}\rceil}} = \frac{\lceil p2^{b'}\rceil}{2^{b'}}.
\]
Hence,
\begin{align}
  \Pr_{f}[f(\bfc)=\bo{x}] &= \left( \frac{\lceil p2^{b'}\rceil}{2^{b'}}  \right)^{w}\left( \frac{2^{b'}-\lceil p2^{b'}\rceil}{2^{b'}} \right)^{n-w} &\notag\\
                       &= \exp_{2}\Big( w\log p + (n-w)\log (1-p)  + \Theta( n\log(1+2^{-b'})) \Big)\end{align}
as long as $w\leq np_\varepsilon$. It then follows that 
\begin{align}
  \bE|\cC(\bo{x})| &= \bE \left[\sum_{\bfc}1_{\{f(\bfc)=\bo{x}\}}  \right] &\notag\\
   &= \exp_{2}\Big( nR+w\log p+(n-w)\log(1-p) +\Theta(n\log(1+2^{-b'})) \Big)
\end{align}
as long as $w\leq np_\varepsilon$.

For the second moment of $|\cC(\bo{x})|$ we have
\begin{align}
  \bE[|\cC(\bo{x})|^{2}] &= \bE\left[ \sum_{\bfc}\sum_{\tilde{\bfc}} 1_{\{ f(\bfc)=f(\tilde{\bfc})=\bo{x} \}}   \right] &\notag\\
                            &= \bE[|\cC(\bo{c})|] + \sum_{\bfc}\sum_{\tilde{\bfc}:\tilde{\bfc}\neq \bfc} \Pr_{\cG,f}\left[ f(\bfc)=f(\tilde{\bfc})=\bo{x}\right] .  &\label{eq:exp_valid_cw_secondmoment_1}
\end{align}
We separately analyze the probability term for pairs  $(\bfc,\tilde{\bfc})$ in two different regimes, depending on whether the Hamming distance between the two is large or small. Fix an arbitrary $\delta'>0$.

\begin{description}
    \item[Codewords $(\bfc,\tilde{\bfc})$ with $d_{H}(\bfc,\tilde{\bfc})\geq \delta' n$.] 

Fix any $i\in [n]$, and let $\cS\subset [nR]$ denote the set of locations where $\bfc$ and $\tilde{\bfc}$ differ. Then,
\begin{align}
  \Pr_{\cG}[\cI_{i}\cap \cS=\emptyset] &\leq \frac{\nchoosek{n-|\cS|}{b}}{\nchoosek{n}{b}} \leq \left( \frac{n-b}{n} \right)^{|\cS|} &\notag\\
                           &\leq e^{-\delta' b(1+o(1))} &\notag\\
  &\leq O(n^{-\gamma})
\end{align}
if $b\geq \frac{\gamma}{\delta'}\log n$.

If $\cI_{i}\cap \cS\neq \emptyset$, then $\bfc_{\cI_{i}}\neq \tilde{\bfc}_{\cI_{i}}$. In this case, $f_{i}(\bfc_{\cI_{i}})$ and $f_{i}(\tilde{\bfc}_{\cI_{i}})$ are independent. Using the union bound,
\[
\Pr[f_{i}(\bfc_{\cI_{i}})=f_{i}(\tilde{\bfc}_{\cI_{i}})=1] \leq p^{2} + n^{-\gamma}
\]
and
\[
\Pr[f_{i}(\bfc_{\cI_{i}})=f_{i}(\tilde{\bfc}_{\cI_{i}})=0] \leq (1-p)^{2} + n^{-\gamma}.
\]
As the Hamming weight of $\bo{x}$ is $w$,
\begin{align}
  \Pr[f(\bfc)=f(\tilde{\bfc})=\bo{x}] &\leq (p^{2}+n^{-\gamma})^{w}  ((1-p)^{2}+n^{-\gamma})^{n-w}.
\end{align}

Simplifying, we get
\begin{align}
  \Pr[f(\bfc)=f(\tilde{\bfc})=\bo{x}] &\leq \exp_{2}\Big( 2w\log p + 2(n-w)\log(1-p) \Big)\left(1+\frac{2}{p^{2}n^{\gamma}}\right),&\notag
\end{align}
and therefore,
\begin{align}
  \sum_{\bfc}\sum_{\tilde{\bfc}:d_{H}(\bfc,\tilde{\bfc})>\delta' n}\Pr[f(\bfc)=f(\tilde{\bfc})=\bo{x}] 
  &\leq \exp_{2}\Big( 2nR+ 2w\log p + 2(n-w)\log(1-p) \Big) \left(1+\frac{2}{p^{2}n^{\gamma}}\right).&\label{eq:cv_secondmoment_part1}
\end{align}

\item[Codewords $(\bfc,\tilde{\bfc})$ with $d_{H}(\bfc,\tilde{\bfc})< \delta' n$.] 

We have,
\begin{align}
  \sum_{\bfc}\sum_{\tilde{\bfc}:d_{H}(\bfc,\tilde{\bfc})\leq \delta' n}\Pr[f(\bfc)=f(\tilde{\bfc})=\bo{x}] &\leq\sum_{\bfc}\sum_{\tilde{\bfc}:d_{H}(\bfc,\tilde{\bfc})\leq \delta' n}\Pr[f(\bfc)=\bo{x}] &\notag\\
  &\leq 2^{nH_{2}(\delta')(1+o(1))}\bE|\cC(\bo{x})| &\label{eq:cv_secondmoment_part2}
\end{align}
which is asymptotically much smaller than (\ref{eq:cv_secondmoment_part1}) if we choose $\delta'$ such that $$H_{2}(\delta')<(R+\frac{w}{n}\log p+(1-\frac{w}{n})\log(1-p))/2.$$
\end{description}
We now have the ingredients to upper-bound the variance.

Combining (\ref{eq:cv_secondmoment_part1}) and (\ref{eq:cv_secondmoment_part2}), for $n$ large enough we get
\begin{align}
  \mathrm{Var}(|\cC(\bo{x})|)&\leq \exp_{2}\Big(  2nR+2w\log p+2(n-w)\log(1-p) \Big)\frac{4(1+o(1))}{pn^{\gamma}}
\end{align}

Using~\eqref{eq:expectation_valid_codewords} and the above with Chebyshev's inequality gives us~\eqref{eq:prob_exp_valid_codewords},
\begin{align*}
\Pr\left[\big| |\cC(\bo{x})|-N_{\bo{x}}\big|  > \delta N_{\bo{x}} \right] &\leq \frac{1}{\delta^{2}n^{\gamma - 1}}. 
\end{align*}

Using similar calculations, \emph{mutatis mutandis}, we can show that 
for any $\bo{x}$ with Hamming weight $w\leq np_{\varepsilon}$, and any fixed $i\in[n]$, $\cI\in \nchoosek{[n]}{2^b}$, $f_i$, $c^b\in\{0,1\}^b$,
  \begin{align}
    \bE\left[|\cC(\bo{x})|\big\vert \cI_i,f_i,\bfc_{\cI_i}=c^b \right] &= N_{\bfx}(i) &\label{eq:expect_valid_condn}  
   \end{align}
and
   \begin{align}
    \mathrm{Var}\left[|\cC(\bo{x})|\big\vert \cI_i,f_i,\bfc_{\cI_i}=c^b \right] &= \frac{1}{n^{\gamma-1}}N^2_{\bfx}(i)(1+o(1))  &\label{eq:var_valid_condn}
   \end{align}
  where
   \begin{align}
    N_{\bfx}(i) &\defeq \exp_{2}\Big( nR+w\log p+(n-w)\log(1-p) -b -x_i\log p-(1-x_i)\log(1-p)&\notag\\
                        &\qquad \qquad +\Theta(n\log(1+2^{-b'})) \Big).
  \end{align}

The expectations above are taken with respect to $f_1,\ldots,f_{i-1},f_{i+1},\ldots,f_{n}$ and $\cI_1,\ldots,\cI_{i-1},\cI_{i+1},\ldots,\cI_n$.

The second moment method yields, for every $\delta>0$,
  \[
    \Pr\left[\big| |\cC(\bo{x})|- N_{\bfx}(i) \big|> \delta N_{\bfx}(i) \Big\vert \cI_i,f_i,\bfc_{\cI_i}=c^b \right] \leq \frac{1}{\delta^{2}n^{\gamma-1}}.
  \]
As $f_i,\cI_i$ are independent of $\{f_j,\cI_j:j\in [n]\backslash\{i\}\}$, we can average~\eqref{eq:expect_valid_condn} over $f_i$ and then $\cI_i$ to get
\[
\bE \left[|\cC(\bfx)|\big\vert \bfc_{\cI_i}=c^b\right]=\frac{N_{\bfx}}{|f_i^{-1}(x_i)|}.
\]
Note that $|f_i^{-1}(x_i)|$ depends only on $(p,b',x_i)$ and is independent of the realization of $f_i$. Using the upper bound on the conditional variance~\eqref{eq:var_valid_condn}, we get 
\begin{align}
    \bE\left[\left| |\cC(\bo{x})|- \frac{N_{\bfx}}{|f_i^{-1}(x_i)|}\right|^2  \Bigg\vert \bfc_{\cI_i}=c^b \right] &\leq O\left(\frac{N_{\bfx}^2}{|f_i^{-1}(x_i)|^2n^{\gamma -1}}\right), & \label{eq:var_bin_condn}
\end{align}
which completes the proof.
\end{proof}

We now have the ingredients to prove Lemma~\ref{lemma:eta_good_existence}.
\subsubsection*{Proof  of Lemma~\ref{lemma:eta_good_existence}}
Recall the definition of $\phi_{I,\bfx}^{(i)}$ and $\phi_{V,\bfx}^{{(i)}}$
\begin{align*}
\phi_{I,\bfx}^{(i)}(c^b)&=\begin{cases}
  \frac{1}{|f_i^{-1}(x_i)|} &\text{ if }f_i(c^b)= x_i\\
  0 &\text{ otherwise}. 
\end{cases}&\notag\\
\phi_{V,\bfx}^{(i)}(c^b)&=
\frac{|\cC_{i,\bfx}(c^{b})|}{|\cC(\bfx)|}
\end{align*}
for $i\in[n]$ and $c^b\in\{0,1\}^b$. Let us define
\[
\phi_{\tilde{V},\bfx}^{(i)}(c^b) \defeq \frac{|\cC_{exp,i}(\bfx,c^b)|}{|\cC_{exp}(\bfx)|},
\]
which is the block marginal corresponding to the distribution where the codewords are chosen uniformly at random from $\cC_{exp}(\bfx)$.
We now show that for any $\bo{x}$ with Hamming weight at most $np_\varepsilon$,
  \[
    \Pr_{\cG,f}[\phi_{\tilde{V},\bo{x}}-\phi_{I,\bo{x}}\notin \cE_{\cG}]\leq O\left(\frac{1}{n}\right).
  \]
  This will imply Lemma~\ref{lemma:eta_good_existence} with $\eta=\phi_{\tilde{V},\bo{x}}-\phi_{I,\bo{x}}$. 
  
  From the definition of $\cC_{exp}(\bfx)$, we are guaranteed that for all $i\in[n]$,
  \[
    \phi^{(i)}_{\tilde{V},\bfx}(c^b) \leq \frac{1}{|f_i^{-1}(x_i)|}+\frac{1}{n^2 2^b}.
  \]
  It suffices to show that with high probability, all the bit-pair marginals of $\phi_{\tilde{V},\bo{x}}$ are  close to $1/4$.  
  
  By Lemma~\ref{lemma:exp_valid_codewords}, a typical $\bfx$ has exponentially many valid codewords  and the first step will be to show that the expurgation process (with high probability) only removes a vanishing fraction of these. We use this to show that the bit-pair marginals of $\phi_{V,\bfx}$ and $\phi_{\tilde{V},\bfx}$ are close.  Finally, we use Lemma~\ref{lemma:exp_valid_codewords} to show that the bit-pair marginals of $\phi_{V,\bfx}$ are close to uniform and the result follows with a straightforward application of the union bound. 

We first show that with high probability, the expurgation process does not remove too many codewords.
From Lemma~\ref{lemma:exp_valid_codewords}, we have
\begin{align}
    \bE\left[\left| |\cC_i(\bo{x},\tilde{c}^b)|- \frac{|\cC(\bfx)|}{|f_i^{-1}(x_i)|}\right|^2   \right] &\leq O\left(\frac{N_{\bfx}^{2}}{n^{\gamma-1}|f_i^{-1}(x_i)|^2}\right), & \label{eq:var_bin_condn}
\end{align}
for 
\[
    b \geq \left(\frac{\gamma}{H_2^{-1}(R-H_2(p_\varepsilon))}\right)\log n.
  \]
Summing over $\tilde{c}^{b}$, and observing that $|f_{i}^{-1}(x_{i})|=\Theta(2^{b})$,
\begin{align}
    \bE\left[\sum_{\tilde{c}^{b}}\left| |\cC_i(\bo{x},\tilde{c}^b)|- \frac{|\cC(\bfx)|}{|f_i^{-1}(x_i)|}\right|^2   \right] &\leq O\left(\frac{N_{\bfx}^{2}}{n^{\gamma-1}2^{b}}\right). &\notag
\end{align}
Invoking Lemma~\ref{lemma:exp_valid_codewords} and Markov's inequality,
\begin{align}
\Pr\left[\sum_{\tilde{c}^{b}}\left| |\cC_i(\bo{x},\tilde{c}^b)|- \frac{|\cC(\bfx)|}{|f_i^{-1}(x_i)|}\right|^2 \geq \frac{2N_{\bfx}^{2}}{n^{\gamma -3}|f_{i}^{-1}(x_{i})|}   \right] &\leq O\left(\frac{1}{n^{2}}\right). &\label{eq:bd_1}
\end{align}
Also from Lemma~\ref{lemma:exp_valid_codewords}, we have
\[
\Pr\bigl[|\cC(\bfx)|\geq \sqrt{2} N_{\bfx}\bigr]\leq O\left(\frac{1}{n^{\gamma-1}}\right),
\]
which combined with~\eqref{eq:bd_1} gives
\begin{align*}
\Pr\left[\sum_{\tilde{c}^{b}}\left| |\cC_i(\bo{x},\tilde{c}^b)|- \frac{|\cC(\bfx)|}{|f_i^{-1}(x_i)|}\right|^2 \geq \frac{|\cC(\bfx)|^{2}}{n^{\gamma -3}|f_{i}^{-1}(x_{i})|}   \right] &\leq O\left(\frac{1}{n^{2}}\right).
\end{align*}
Rearranging the above and using the definition of $\phi_{V,\bfx}$, we have for every $i\in[n]$,
\begin{align}
  \Pr\left[\sum_{c^b\in\{0,1\}^b} \left|\phi_{V,\bfx}^{(i)}(\tilde{c}^b) - \frac{1}{|f_i^{-1}(x_i)|} \right|^2\geq \frac{1}{n^{\gamma-3}|f_i^{-1}(x_i)|} \right]&\leq O\left(\frac{1}{n^{2}}\right) .&\label{eq:l2_bound_1}
\end{align}

Using Lemma~\ref{lemma:excess_fraction_bound} (see Appendix) with $\alpha = n^{\gamma-3}$ and $\varepsilon = n^{-2}$ in~\eqref{eq:l2_bound_1}, we obtain that, for every $i \in [n]$, the probability that too many codewords must be expurgated from bad cylinders is small, \emph{i.e.},
\[
\Pr\!\left[
\sum_{c^b \in \{0,1\}^b}
\left[
\phi_{V,\bfx}^{(i)}(c^b)
    - \left(1 + \frac{1}{n^2}\right)
      \frac{1}{|f_i^{-1}(x_i)|}
\right]^+
\;\ge\;
\frac{1}{n^{\gamma-7}}
\right]
\;\le\;
O\left(\frac{1}{n^{2}}\right).
\]
or equivalently,
\[
\Pr\!\left[
\sum_{c^b \in \{0,1\}^b}
\left[
|\cC_i(\bfx,\tilde{c}^b)|
    - \left(1 + \frac{1}{n^2}\right)
      \frac{|\cC(\bfx)|}{|f_i^{-1}(x_i)|}
\right]^+
\;\ge\;
\frac{1}{n^{\gamma-7}}|\cC(\bfx)|
\right]
\;\le\;
O\left(\frac{1}{n^{2}}\right).
\]
Taking a union bound,
\[
\Pr\!\left[
\sum_{c^b \in \{0,1\}^b}
\underbrace{\left[
|\cC_i(\bfx,\tilde{c}^b)|
    - \left(1 + \frac{1}{n^2}\right)
     \frac{|\cC(\bfx)|}{|f_i^{-1}(x_i)|}
\right]^+}_{\text{Number of excess codewords in cylinder set }}
\;\ge\;
\frac{1}{n^{\gamma-7}}|\cC(\bfx)|
\text{ for some }i\in[n]\right]
\;\le\;
O\left(\frac{1}{n}\right).
\]
We therefore have a bound on the total number of codewords that need to be expurgated (with high probability). 
\begin{align}
\Pr\Big[|\cC(\bfx)|-|\cC_{exp}(\bfx)|>n^{-\gamma+7}|\cC(\bfx)|\Big]&\leq O\left(\frac{1}{n}\right).&\label{eq:pr_exp_closeto_valid}
\end{align}
We now use this to show that the bit-pair marginals of $\phi_{\tilde{V},\bfx}$ are close to those corresponding to $\phi_{V,\bfx}$.

  Recall that
  \[
    \phi_{\tilde{V},\bfx,\ell,m}(a_1,a_2) = \frac{\sum_{\tilde{c}^b:c_{\ell}=a_1,c_m=a_2}|\cC_{exp,i}(\bfx,\tilde{c}^b)|}{|\cC_{exp}(\bfx)|}. 
  \]
  Since the expurgation process only deletes codewords,~\eqref{eq:pr_exp_closeto_valid} implies that with probability $1-O(1/n)$,
  \[
    \phi_{V,\bfx,\ell,m}(a_1,a_2) \left(\frac{1}{1-n^{-\gamma+7}}\right) \geq \phi_{\tilde{V},\bfx,\ell,m}(a_1,a_2) \geq \phi_{V,\bfx,\ell,m}(a_1,a_2) (1-n^{-\gamma+7}),
  \]
  for all $(a_1,a_2)\in\{0,1\}^2$. 
Therefore, we have
  \begin{align}
   \Pr\left[  \Big|\phi_{\tilde{V},\bfx,\ell,m}(a_1,a_2) - \phi_{V,\bfx,\ell,m}(a_1,a_2)\Big| >  \frac{1}{n^{(\gamma-7)/2}} \phi_{V,\bfx,\ell,m}(a_1,a_2) \right] &\leq O\left(\frac{1}{n}\right), &\label{eq:bound_phiv_phivtilde}
  \end{align}
  where in the last step, we have used the property that $1/(1-\epsilon)\leq 1+\sqrt{\epsilon}$ for  $\epsilon<0.3$.

Using Lemma~\ref{lemma:exp_valid_codewords}, we get concentration bounds on the bit-pair marginals corresponding to $\phi_{V,\bfx}$ by averaging over the $\tilde{c}^b$.
 The bit-pair marginals satisfy
  \begin{align*}
    \Pr\left[ \left|\phi_{V,\bfx,\ell,m}(a_1,a_2)-\frac{1}{4} \right|>\frac{1}{2n^4} \text{ for any }\ell\neq m \text{ and } (a_1,a_2)\in\{0,1\}^2 \right]\leq O\left(\frac{1}{n^{\gamma -3}}\right).
  \end{align*}
  Using this  with~\eqref{eq:bound_phiv_phivtilde}, and taking a union bound,
  \begin{align*}
    \Pr\left[ \left|\phi_{\tilde{V},\bfx,\ell,m}(a_1,a_2)-\frac{1}{4} \right|>\frac{1}{n^4},\text{ for some }\ell,m\in[nR], a_1,a_2\in\{0,1\}\right]\leq O\left(\frac{1}{n}\right)
  \end{align*}
  provided that $\gamma> 15$.
By the nature of the expurgation process, we have 
\[
\phi_{\tilde{V},\bfx}^{(i)}(\tilde{c}^b) \leq \frac{1}{|f_i^{-1}(x_i)|}+\frac{1}{n^22^b}
\]
for every $i\in[n]$ and $\tilde{c}^b\in\{0,1\}^b$.
This implies that with probability $1-O(1/n)$, perturbation $\eta=\phi_{\tilde{V},\bfx}-\phi_{I,\bfx}$ satisfies all the properties we require---recall that the bit-pair marginals of $\phi_{I,\bfx}$ are uniform, see Property 5 after Definition~\ref{marcon}.
This completes the proof of Lemma~\ref{lemma:eta_good_existence}.\qed

\section{Proof of Proposition~\ref{lemma:exp_decay_pe_step4aa}}
\label{lapro3}  
Fix $\varepsilon>0$ so that $p+\varepsilon\leq 1/2$ and let $R=H(p+\varepsilon)$. From Proposition~\ref{lemma:NI_prob_error} there exists a rate-$R$ code ${\cal{C}}=( P_{\bo{C}|\bo{X}},\cG,f)$ that is private and that satisfies 
  $$\Pr[d_{H}(\hat{\bo{X}},\bo{X})\geq \delta n]\leq \frac{\text{poly}({\log n})}{\delta  n}$$
  for any $\delta>0$ (recall that $\hat{\bo{X}}$ is a function of ${\bo{C}}$). Expanding the left-side we have
\begin{align}
    \Pr[d_{H}(\hat{\bo{X}},\bo{X})\geq \delta n]= \mathbb{E}_\bo{X} \Pr_{\bo{C}|\bo{X}}[d_{H}(\hat{\bo{X}},\bo{X})\geq \delta n],
\end{align}
hence by Markov inequality we have
\begin{align}\label{mrk}
    \Pr_\bo{X}\bigg\{\bo{x}:\Pr_{\bo{C}|\bo{x}}[d_{H}(\hat{\bo{X}},\bo{x})\geq \delta n]\leq 1/k\bigg\}\geq 1-\frac{k\cdot \text{poly}({\log n})}{\delta n}
\end{align}
for any $\delta,k>0$.

  To establish the proposition, we use the code $\cal{C}$ in a concatenated fashion. Consider a length $N$ source sequence
  $$\boo{X}=\bo{X}(1),\bo{X}(2),\ldots,\bo{X}(k)$$
  composed of $k\geq 1$ consecutive blocks of length $n$.
   
Define a realization $\boo{x}$ of $\boo{X}$ to be $(\delta,k)$-\emph{distortion typical} if 
 $$\Pr_{\bo{C}|\bo{X}}[d_{H}(\hat{\bo{X}}(i),\bo{x}(i))\geq \delta n]\leq 1/k$$
   for all $i\in \{1,2,\ldots,k\}$. For any fixed $\delta,k>0$ this is a high probability event as $N$ gets large since by \eqref{mrk} we have 
   \begin{align}\label{tw3}
       \Pr[\boo{X}\:\text{is $(\delta,k)$-distortion typical}]\geq \left( 1-\frac{k^2 \cdot \text{poly}({\log N})}{\delta N}\right)^k.
   \end{align}
   
Now, to encode $$\boo{x}=\bo{x}(1),\bo{x}(2),\ldots,\bo{x}(k)$$ we encode each $\bo{x}(i)$ as follows. First order the codewords of $\cal{C}$ in the decreasing order of their induced Hamming distortions with the first block $\bo{x}(1)$, that is \begin{align*}
  d_H(\hat{\bo{x}}(\bo{c}_1(1)),\bo{x}(1))&\geq d_H(\hat{\bo{x}}(\bo{c}_2(1)),\bo{x}(1))\\
  &\geq \ldots d_H(\hat{\bo{x}}(\bo{c}_{2^{nR}}(1)),\bo{x}(1)),
  \end{align*}
  where $\bo{c}_j(1)$ denotes the codeword with the $j$-th largest distance from $\bo{x}(1)$. The first row of rectangles in Fig.~\ref{fig:scheme_step4a} represents the $2^{nR}$ possible codewords for the first block, from $\bo{c}_1(1)$ (for the left-most) to $\bo{c}_{2^{nR}}(1)$ (for the right-most). 
  The width of the $j$-th rectangle is equal to the probability that the encoder outputs this codeword given $\bo{x}(1)$, \ie, $$P_{\bo{C}|\bo{x}(1)}(\bo{c}_j(1)|\bo{x}(1)).$$
 
The codewords corresponding to the $i$-th block are depicted on the $i$-th row of Fig.~\ref{fig:scheme_step4a} (here $k=5$). They are initially arranged in decreasing order according to the distortion they induce with $\bo{x}_i$, and then circularly shifted to the right by $(i-1)/k$. 
Hence, the rectangles corresponding to codewords with the largest induced distortion with the $i$'th block start from position $(i-1)/k$ in the $i$-th row. For each row $i$, a codeword $\bo{c} $ such that $d_H(\hat{\bo{x}}(\bo{c}),\bo{x}(i))\geq \delta n$ is denoted in red, and otherwise it is in green. Observe that if $\boo{x}$ is distortion typical, any vertical line in Fig.~\ref{fig:scheme_step4a} intersects at most one red rectangle.

 \begin{figure}
    \begin{center}
    \includegraphics[width=0.4\textwidth]{./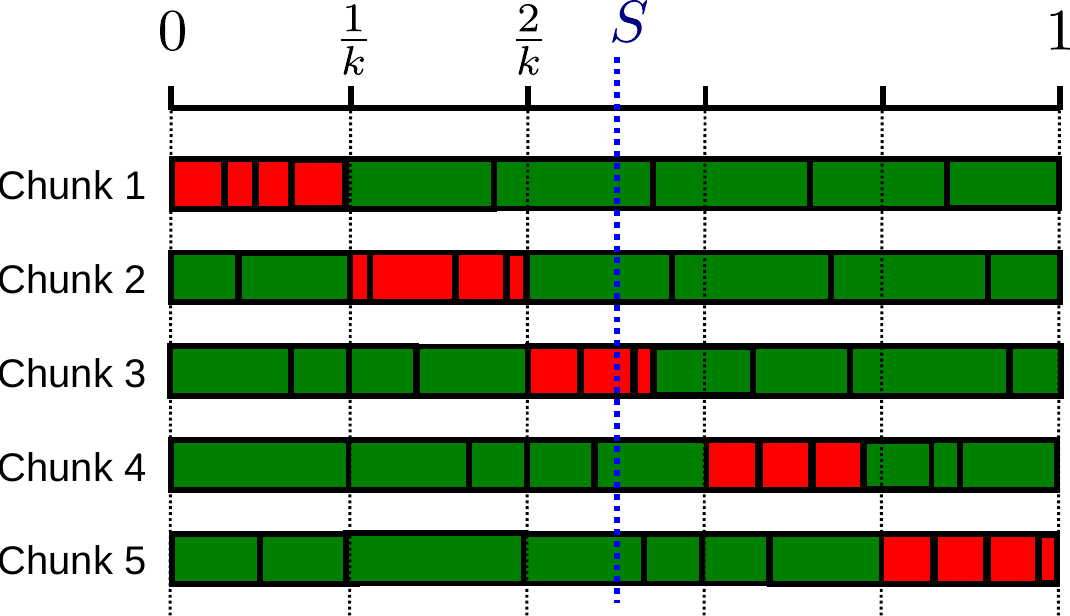}
    \caption{The coupled–concatenated scheme of Proposition~\ref{lemma:exp_decay_pe_step4aa} illustrated for a toy example with $k=5$.  
In the $i$th row, the red rectangles indicate the codewords whose Hamming distortion with respect to the $i$th source block $\bo{x}(i)$ exceeds $\delta n$.  For the realization of the random variable $S$ shown in this example, the encoder selects the sixth, fifth, second, eleventh, and eighth codewords from the respective lists of blocks $1$--$5$.
}\label{fig:scheme_step4a}
    \end{center}
  \end{figure}

  Now, given ${\boo{x}}$ we generate a codeword for each of its blocks as follows.  Referring to Fig.~\ref{fig:scheme_step4a}, let $S\in[0,1]$ be uniformly distributed over $[0,1]$, and independently drawn from ${\boo{x}}$. The codeword $\boo{C}$ for ${\boo{x}}$ is the concatenation of $k$ codewords 
  $$\boo{C}=\bo{C}_1,\ldots,\bo{C}_k$$
  where $\bo{C}_i$ is the codeword corresponding to the rectangle that includes $S$ in the $i$-th row---in the example of Fig.~\ref{fig:scheme_step4a}, block $\bo{x}(3)$ will be encoded with a ``large'' distortion codeword depicted in red, whereas $\bo{x}(2)$ will be encoded with a small distortion codeword. 

The reconstruction $\hat{\boo{X}}$ of ${\boo{x}}$ is obtained by decoding each block using the decoder of the baseline code. It should be emphasized here that the coupling variable $S$ is part of the encoding procedure only and not part of the decoding. Indeed, the codeword coupling does not affect each block's marginal codeword distribution. As a consequence, the concatenated code inherits the privacy property of the baseline code.

  Now for the distortion between ${\boo{x}}$ and  $\hat{\boo{X}}$. If ${\boo{x}}$ is $(\delta,k)$-distortion typical, the line at position $S$ will intersect at most one codeword yielding a distortion greater than $\delta n$ (and at most $n$). Therefore
  \begin{align}\label{tw1}
       d_{H}({\boo{x}},\hat{{\boo{X}}})\leq  n+k \delta n= N  (\delta+1/k) 
       \end{align}
  with probability one (over codeword distribution), for any $(\delta,k)$-distortion typical source sequence ${\boo{x}}$. 

Finally, since the integer $k$ can be chosen arbitrarily large and  $\delta>0$ arbitrarily small, from \eqref{tw3} and \eqref{tw1} we deduce that, for any fixed $\eta>0$, there exists a set of source sequences ${\cal{S}}_N$ with the following properties:
\begin{enumerate}
  \item Its probability satisfies  
\[
  \Pr[{\cal{S}}_N]\geq 1-\frac{ \text{poly}({\log N})}{N};
  \]
\item It admits a private rate-$R=H(p+\varepsilon)$ compressor with the property that, for every $x^{N}\in{\cal{S}}_N$, we have $$d_{H}(x^{N},\hat{X}^{N})\leq \eta N$$ with probability one with respect to the randomized encoding.\qed
\end{enumerate}

\section{Concluding remarks}\label{concrem}
We showed that lossless compression can accommodate private local decoding for memoryless sources. Several open problems remain. One would be to find a private, explicit, and non-asymptotic compression scheme. Another problem would be to investigate privacy and locality. We know that constant local decodability is achievable at any rate above entropy (for memoryless sources). Does this still hold if local decoding is subject to privacy? Note here the proposed scheme has locality $b=\Theta(\log n)$. Finally, an interesting extension to the present work would be to consider sources with memory.  


\bibliographystyle{IEEEtran}
\bibliography{codes_with_locality}

\appendix

\begin{lemma}\label{lemma:excess_fraction_bound}
Consider any positive integer $D$ and probability mass function $(p_1,\ldots,p_D)$ that satisfies
\[
\sum_{i=1}^D \left( p_i-\frac{1}{D} \right)^2\leq \frac{1}{\alpha D},
\]
for some $\alpha>0$. Then, for any $\varepsilon>0$,  we have 
\[
\sum_{i=1}^d \left[ p_i- (1+\varepsilon)\frac{1}{D} \right]^+\leq \frac{1+\varepsilon}{\alpha \varepsilon^2}
\]
where $[x]^+\defeq \max\{0,x\}$.
\end{lemma}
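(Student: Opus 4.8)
The plan is to reduce the estimate to a one‑line scalar inequality after recentering the pmf about the uniform distribution. Put $q_i \defeq p_i - \tfrac{1}{D}$ for $i\in[D]$; since $(p_1,\dots,p_D)$ is a pmf we have $\sum_{i=1}^D q_i = 0$, and the hypothesis becomes $\sum_{i=1}^D q_i^2 \le \tfrac{1}{\alpha D}$. With this notation the quantity to be bounded is
\[
\sum_{i=1}^D \Bigl[\,p_i - (1+\epsilon)\tfrac{1}{D}\,\Bigr]^+ \;=\; \sum_{i=1}^D \Bigl[\,q_i - \tfrac{\epsilon}{D}\,\Bigr]^+ .
\]

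The key observation is the elementary bound $[\,t-a\,]^+ \le t^2/a$, valid for every $t\in\mathbb{R}$ and every $a>0$. If $t\le a$ the left side is $0$ and the right side is nonnegative; if $t>a>0$ then $t-a\le t^2/a$ is equivalent to $t^2-at+a^2\ge 0$, which holds since this quadratic in $t$ has discriminant $-3a^2<0$. Applying it with $t=q_i$ and $a=\epsilon/D$ gives, for each $i$,
\[
\Bigl[\,q_i - \tfrac{\epsilon}{D}\,\Bigr]^+ \;\le\; \frac{D}{\epsilon}\,q_i^2 ,
\]
and summing over $i$ together with the hypothesis yields
\[
\sum_{i=1}^D \Bigl[\,q_i - \tfrac{\epsilon}{D}\,\Bigr]^+ \;\le\; \frac{D}{\epsilon}\sum_{i=1}^D q_i^2 \;\le\; \frac{D}{\epsilon}\cdot\frac{1}{\alpha D} \;=\; \frac{1}{\alpha\epsilon}\;\le\;\frac{1+\epsilon}{\alpha\epsilon^2},
\]
where the final step uses $\epsilon\le 1+\epsilon$. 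This proves the lemma, in fact with the sharper constant $\tfrac{1}{\alpha\epsilon}$.

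There is no genuine obstacle here: the only point requiring a moment's care is the scalar inequality $[\,t-a\,]^+\le t^2/a$, together with the remark that recentering by $q_i = p_i - 1/D$ is precisely what makes the $\ell_2$ hypothesis usable coordinatewise. For comparison, a slightly lossier route first bounds $|S|\le D/(\alpha\epsilon^2)$ for $S\defeq\{\,i : q_i>\epsilon/D\,\}$ — each such index contributes at least $\epsilon^2/D^2$ to $\sum_i q_i^2$ — and then applies Cauchy--Schwarz on $S$ to obtain $\sum_{i\in S}q_i \le \sqrt{|S|/(\alpha D)}\le \tfrac{1}{\alpha\epsilon}$; the termwise argument above is cleaner and also makes the appearance of $\alpha\epsilon^2$ in the statement transparent.
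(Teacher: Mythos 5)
Your proof is correct, and it takes a genuinely different—and arguably cleaner—route than the paper's. You recenter by $q_i = p_i - 1/D$ and then apply the pointwise inequality $[\,t-a\,]^+ \le t^2/a$ (valid for all real $t$ and $a>0$, verified by the negative discriminant of $t^2 - at + a^2$) to each term, which immediately converts the $\ell_2$ hypothesis into the claimed $\ell_1$-type tail bound. The paper instead sorts the $p_i$'s in decreasing order, takes the top $\rho D$ of them with $\rho = \frac{1}{D}\lfloor D/(\alpha\epsilon^2)\rfloor$, applies Jensen to lower-bound the tail of the $\ell_2$ sum in terms of the partial mass $S(\rho)$, and then argues that everything outside the top $\rho D$ entries is already below the threshold $(1+\epsilon)/D$. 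Your argument has two concrete advantages: it avoids the integrality and range bookkeeping that the paper's choice of $\rho$ requires (the paper implicitly needs $\rho D$ to be a positive integer no larger than $D$, which fails for small $D$ or small $\alpha\epsilon^2$ and has to be handled separately), and it yields the strictly sharper bound $\frac{1}{\alpha\epsilon}$, loosening to $\frac{1+\epsilon}{\alpha\epsilon^2}$ only in the last step via $\epsilon \le 1+\epsilon$. What the paper's approach buys, if anything, is that the quantity $S(\rho)$ it controls—the total mass in the heaviest $\rho D$ cells—is closer in spirit to the ``expurgation of bad cylinders'' step where the lemma is used; but for the stated lemma itself your argument is shorter, more elementary, and more general.
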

\begin{proof}[Proof of Lemma~\ref{lemma:excess_fraction_bound}]

Let us sort $\{1,2,\ldots,D\}$ in the decreasing order of $p_i$'s and let $\cD$ denote the sorted sequence. For $0<\rho<1$ such that $\rho D\in\bZ_{+}$, let $B(\rho)$ denote the first $ \rho D$ elements of $\cD$, equivalently, the $ \rho D$ largest probability values. Further, let 
\[
S(\rho)\defeq \sum_{i\in B(\rho)}p_i
\]
denote the total probability included in $B(\rho)$.

Using Jensen's inequality, 
\begin{align*}
   \rho D \frac{1}{ \rho D} \sum_{i\in B(\rho)}\left( p_i-\frac{1}{D}\right)^2&\geq  \rho D \left( \frac{S(\rho)}{\rho D}-\frac{1}{D} \right)^2  = \frac{\rho}{D}\left(\frac{S(\rho)}{\rho}-1\right)^2,
\end{align*}
and by choosing 
$$\rho=\frac{1}{D}\left\lfloor \frac{D}{\alpha\varepsilon^2}\right\rfloor \leq \frac{1}{\alpha\varepsilon^2},$$
we get
\begin{equation}
     \sum_{i\in B(\rho)}\left( p_i-\frac{1}{D}\right)^2 \geq \frac{1}{\alpha\varepsilon^2 D}\left(\frac{S(\rho)}{\rho}-1\right)^2.
\end{equation}
Using the hypothesis of the lemma, we have
\begin{equation*}
\frac{1}{\alpha\varepsilon^2 D}\left(\frac{S(\rho)}{\rho}-1\right)^2 \leq \frac{1}{\alpha D},
\end{equation*}
which yields
\begin{equation}\label{eq:excess_frac_bd_1}
\frac{S(\rho)}{\rho}\leq 1+\varepsilon,
\end{equation}
or, equivalently, using our choice of $\rho$
\begin{equation}\label{eq:excess_frac_bd_q}
S(\rho)\leq \frac{(1+\varepsilon)}{\alpha \varepsilon^2}.
\end{equation}

From the definition of $B(\rho)$, we have that
\[
S(\rho) \geq \rho D \min_{i\in B(\rho)}p_i,
\]
which when combined with \eqref{eq:excess_frac_bd_1} gives us
\[
\min_{i\in B(\rho)}p_i\leq (1+\varepsilon)\frac{1}{D}.
\]
This implies for any $i\notin B(\rho)$ we have
\[
p_i\leq (1+\varepsilon)\frac{1}{D}.
\]
Therefore,
\begin{align*}
    \sum_{i=1}^D \left[ p_i-(1+\varepsilon)\frac{1}{D} \right]^+ &\leq \sum_{i\in B(\rho)}p_i=S(\rho)\leq \frac{(1+\varepsilon)}{\alpha\varepsilon^2},
\end{align*}
which yields the desired result. \end{proof}

\end{document}